\def\BibTeX{{\rm B\kern-.05em{\sc i\kern-.025em b}\kern-.08em
    T\kern-.1667em\lower.7ex\hbox{E}\kern-.125emX}}
\newenvironment{sequation}{\begin{equation}\small}{\end{equation}}
\newenvironment{tequation}{\begin{equation}\footnotesize}{\end{equation}}
\newtheorem{lemma}{\textbf{Lemma}}
\newtheorem{theorem}{\textbf{Theorem}}
\newtheorem{definition}{\textbf{Definition}}
\newtheorem{corollary}{\textbf{Corollary}}
\definecolor{b}{rgb}{0.0, 0, 1}
\definecolor{k}{rgb}{0, 0, 0}
\def\BibTeX{{\rm B\kern-.05em{\sc i\kern-.025em b}\kern-.08em
		T\kern-.1667em\lower.7ex\hbox{E}\kern-.125emX}}
\begin{document}

\title{TJCCT: A Two-timescale Approach for UAV-assisted Mobile Edge Computing}

\author{Zemin~Sun,~\IEEEmembership{}
        Geng~Sun,~\IEEEmembership{Member,~IEEE},
        Qingqing Wu,~\IEEEmembership{Senior Member,~IEEE}, 
        Long~He,
        Shuang Liang,
        Hongyang Pan,
        Dusit Niyato,~\IEEEmembership{Fellow,~IEEE},
        Chau Yuen,~\IEEEmembership{Fellow,~IEEE}, and
        Victor C. M. Leung,~\IEEEmembership{Life Fellow,~IEEE}
	\thanks{This study is supported in part by the National Natural Science Foundation of China (61872158, 62002133, 62172186, 62272194), and in part by the Science and Technology Development Plan Project of Jilin Province (20230201087GX). A part of this paper is accepted by IEEE INFOCOM 2024 (\textit{Corresponding author: Geng Sun.)}}
	\IEEEcompsocitemizethanks{\IEEEcompsocthanksitem  Zemin Sun, Geng Sun, Long He, and Hongyang Pan are with the College of Computer Science and Technology, Jilin University, Changchun 130012, China, and also with the Key Laboratory of Symbolic Computation and Knowledge Engineering of Ministry of Education, Jilin University, Changchun 130012, China.\protect\\
    E-mail:sunzemin@jlu.edu.cn, sungeng@jlu.edu.cn, helong0517@foxmail.com, panhongyang18@foxmail.com.
    \IEEEcompsocthanksitem Qingqing Wu is with the Department of Electronic Engineering, Shanghai Jiao Tong University, Shanghai, China.\protect\\
		E-mail: qingqingwu@sjtu.edu.cn.  
  	\IEEEcompsocthanksitem Shuang Liang is with the School of Information Science and Technology, Northeast Normal University, Changchun 130024, China.\protect\\
		E-mail: liangshuang@nenu.edu.cn.
        \IEEEcompsocthanksitem Dusit Niyato is with the School of Computer Science and Engineering, Nanyang Technological University, Singapore 639798.\protect\\
        E-mail: dniyato@ntu.edu.sg. 
        \IEEEcompsocthanksitem Chau Yuen is with the School of Electrical and Electronics Engineering, Nanyang Technological University, Singapore 639798.\protect\\
         E-mail: chau.yuen@ntu.edu.sg.
        \IEEEcompsocthanksitem Victor C. M. Leung is with the College of Computer Science and Software Engineering, Shenzhen University, Shenzhen 518060, China, and also with the Department of Electrical and Computer Engineering, University of British Columbia, Vancouver, BC V6T 1Z4, Canada. \protect\\
        E-mail: vleung@ieee.org.
       }}

\markboth{Journal of \LaTeX\ Class Files,~Vol.~, No.~, ~}%
{Shell \MakeLowercase{\textit{et al.}}: Bare Demo of IEEEtran.cls for Computer Society Journals}

\IEEEtitleabstractindextext{%
\begin{abstract}		
Unmanned aerial vehicle (UAV)-assisted mobile edge computing (MEC) is emerging as a promising paradigm to provide aerial-terrestrial computing services in close proximity to mobile devices (MDs). However, meeting the demands of computation-intensive and delay-sensitive tasks for MDs poses several challenges, including the demand-supply contradiction between MDs and MEC servers, the demand-supply heterogeneity between MDs and MEC servers, the trajectory control requirements on energy efficiency and timeliness, and the different time-scale dynamics of the network. To address these issues, we first present a hierarchical architecture by incorporating terrestrial-aerial computing capabilities and leveraging UAV flexibility. Furthermore, we formulate a joint computing resource allocation, computation offloading, and trajectory control problem to maximize the system utility. Since the problem is a non-convex and NP-hard mixed integer nonlinear programming (MINLP), we propose a two-timescale joint computing resource allocation, computation offloading, and trajectory control (TJCCT) approach for solving the problem. In the short timescale, we propose a price-incentive model for on-demand computing resource allocation and a matching mechanism-based method for computation offloading. In the long timescale, we propose a convex optimization-based method for UAV trajectory control. Besides, we theoretically prove the stability, optimality, and polynomial complexity of TJCCT. Extended simulation results demonstrate that the proposed TJCCT outperforms the comparative algorithms in terms of the system utility, average processing rate, average completion delay, and average completion ratio.
\end{abstract}
	
\begin{IEEEkeywords}
		UAV-assisted MEC network, computing resource allocation, computation offloading, trajectory control
\end{IEEEkeywords}}

\maketitle
\IEEEdisplaynontitleabstractindextext
\IEEEpeerreviewmaketitle

%
%

\section{Introduction}
\label{sec_introduction}

\par  \IEEEPARstart{T}{he} development of wireless communications and the proliferation of mobile devices (MDs) triers various emerging applications, such as autonomous driving, online gaming, and augmented reality. These applications often require extensive computing resources and low latency to satisfy the quality of experience (QoE). However, fulfilling the computation-intensive and delay-sensitive computation tasks of these applications poses a great challenge to MDs with insufficient computational capability and finite energy capacity. To tackle this challenge, mobile edge computing (MEC) has been identified as a promising technology to meet the stringent requirements of these applications \cite{Qu2022}. By offloading the computation-intensive tasks to proximate MEC servers, the QoE of MDs can be significantly enhanced in a cost-effective and energy-efficient way \cite{li2023tapfinger}. However, due to the dependence on terrestrial infrastructures and environment, conventional terrestrial MEC servers are limited by the high cost of deployment, low adaptability to the network dynamic, and fixed service range. 

\par Recent years have seen a paradigm shift from terrestrial edge computing toward aerial-terrestrial edge computing, i.e., UAV-assisted MEC networks, by integrating UAVs with MEC. With high maneuverability, UAVs could be rapidly and flexibly deployed as aerial MEC servers to assist the terrestrial MEC servers in providing computation offloading services whenever and wherever needed. Moreover, the line-of-sight (LoS) link of UAVs can improve the communication reliability and network capacity of the terrestrial MEC networks.

\par Despite the aforementioned benefits, designing an efficient scheme of computation offloading in UAV-assisted MEC systems is facing several unprecedented challenges. \textit{\textbf{i) Demand-Supply Contradiction for Resource Allocation.}} Compared to the cloud, MEC servers have limited computing capabilities, particularly for aerial MEC servers with constrained carrying capacity. However, the computation tasks of MDs are often computation-hungry and latency-sensitive. This demand-supply contradiction between the limited computing resources of MEC servers and the stringent requirement of MDs poses a challenge for efficient computing resource allocation. \textit{\textbf{ii) Demand-Supply Heterogeneity for Computation Offloading.}} Different computation tasks of MDs have diverse requirements on computing resources, while different MEC servers possess varying computing capabilities. This demand-supply heterogeneity between the computation tasks of MDs and MEC servers could incur resource under-utilization among MEC servers, which brings difficulties in designing efficient computation offloading methods to ensure satisfied QoE for MDs and high resource utilization among MEC servers. \textit{\textbf{iii) Energy-Efficient and Real-Time Trajectory Control.}} The mobility of MDs and random generation of computation tasks lead to spatiotemporal dynamics in the offloading requirements, which necessitates real-time trajectory control. However, the intrinsic limited onboard energy of UAVs restricts the service time, thus posing challenges for energy-efficient and real-time UAV trajectory control. \textit{\textbf{iv) Different Time-Scale Dynamics.}} The dynamic characteristics of the UAV-assisted MEC network, such as the dynamic of the channel, random arrival of tasks, and mobility of MDs, vary across different timescales. Accordingly, integrating these features into a joint optimization framework for computing resource allocation, computation offloading, and trajectory control is significant but leads complexity to the algorithm design.

\par This work presents a two-timescale computing resource allocation, task offloading, and UAV trajectory control approach in UAV-assisted MEC. The main contributions are summarized as follows.

\begin{itemize}
	\item \textit{\textbf{System Architecture.}} We employ a hierarchical architecture for the UAV-assisted MEC network that consists of an MD layer, a terrestrial edge layer, an aerial edge layer, and a control layer. Under the coordination of the software-defined network (SDN) controller, the two-timescale decisions are made to deal with the demand-supply contradiction between MDs and MEC servers, demand-supply heterogeneity between MDs and MEC servers, and the spatio-temporal dynamics of computation tasks.
	
	\item \textit{\textbf{Problem Formulation.}} We formulate a joint computing resource allocation, computation offloading, and trajectory control problem to maximize the system utility that is theoretically modeled by synthesizing the network dynamics between MDs and terrestrial/aerial edge links, MD mobility, MD QoE, and the energy consumption of terrestrial/aerial MEC servers. Moreover, the optimization problem is proved to be a non-convex and NP-hard mixed integer nonlinear programming (MINLP).
	
	\item \textit{\textbf{Algorithm Design.}} To solve the formulated problem, we propose a two-timescale joint computing resource allocation, computation offloading, and trajectory control (TJCCT) algorithm. Specifically, TJCCT consists of a price-incentive method for on-demand computing resource allocation, a matching mechanism-based method for computation offloading, and a convex optimization-based method for UAV trajectory control.
	
	\item \textbf{\textit{Performance Evaluation.}} The performance of TJCCT is verified through both theoretical analysis and simulation. First, the stability, optimality, and computation complexity of TJCCT are proved theoretically. Furthermore, simulation results demonstrate that TJCCT has better performances and scalability than the baseline algorithms.
\end{itemize}

\par The remaining of this work is organized as follows. Section \ref{sec_related work} reviews the related work. Section \ref{sec_model} presents the system model and problem formulation. Section \ref{sec_jointOffloading} elaborates on the proposed TJCCT. The theoretical analysis is given in Section \ref{sec_Analysis}. Section \ref{sec_simulation} shows the simulation results and discussions. Finally, this work is concluded in Section \ref{sec_conclusion}.


%
%
\section{Related work}
\label{sec_related work}

\par In this section, we review the related work on the edge computing architecture, joint computation offloading, computing resource allocation, and the optimization approach.

\subsection{MEC Architecture}

\par The MEC has been extensively studied to extend the computing capabilities of MDs through computation offloading. Numerous research efforts have focused on leveraging terrestrial MEC to offer low-latency offloading services in different scenarios. For example, Wang et al. \cite{Wang2020} presented a non-cooperative computation offloading approach in a single-BS vehicular MEC network. Furthermore, Xia et al. \cite{Xia2021} proposed a distributed approach of computation offloading and computing resource management in the energy harvesting-enabled MEC system. Jiang et al. \cite{jiang2022joint} proposed a joint offloading and resource allocation framework for the energy-constrained MEC system, aiming to guarantee the QoE of the users. Moreover, Ding et al. \cite{ding2022hybrid} explored minimizing the system energy consumption for the single-terrestrial server and multi-user MEC system. Tao et al. \cite{tao2023single} proposed a dynamic pricing-based computation offloading scheme for a single-cell and mult-user MEC system. Additionally, He et al. \cite{he2023computation} designed an
architecture that combines digital twin and terrestrial MEC to jointly optimize computation offloading and resource allocation. Besides, Zhou et al. \cite{zhou2023dag} focused on an MEC-enabled Internet of things (IoT) system with a BS-amounted MEC and multiple users, and studied the computation offloading for dependent tasks. Besides, However, these studies mainly rely on the terrestrial MEC servers which have fixed coverage, especially in the dense areas of the city where the link between users and MEC servers could experience severe blockage and poor signal strength.

\par To offer flexible aerial computing services for users, recent studies have expanded the scope of terrestrial MEC to UAV-assisted MEC. For example, Ding et al. \cite{Ding2023} proposed a UAV-assisted MEC secure communication system where the UAVs assist the ground users in computing the offloading task, and the ground jammer generates jamming signals to prevent the UAV eavesdroppers. Lin et al. \cite{lin2023pddqnlp} explored maximizing the energy efficiency and offloading fairness in a single-UAV-assisted MEC system. Yang et al. \cite{9814972} focused on a UAV-enabled MEC system where the MEC server provides offloading service for energy harvesting devices. In \cite{Wang2022a}, the authors proposed a UAV-assisted architecture for post-disaster rescue by leveraging the computing capability of vehicles.  Moreover, Li et al. \cite{li2023robust} considered a multi-UAV-assisted MEC system and designed a robust approach of computation offloading and trajectory optimization. Besides, Coletta et al. \cite{Coletta2023} proposed a novel application-aware optimization framework where the UAVs are deployed as MEC servers for image analyses. However, most of these studies consider relatively simple scenarios where only one UAV-assisted MEC server is deployed, or users are assumed to be stationary. This may not be suitable for complex situations with heterogeneous MEC servers and MDs.

\par In summary, the existing MEC architecture is inadequate to adapt to the heterogeneous and dynamic scenario of the UAV-assisted MEC system. To this end, we propose a hierarchical architecture to address the limitations of the existing works.

\subsection{Computation Offloading, Resource Allocation, and Trajectory Control}

\par Researchers have extensively explored various aspects of UAV-assisted MEC systems, with a primary focus on computation offloading, resource allocation, and UAV trajectory control \cite{Zeng2019Access, Wu2021Acom}. 

\par To address the limited computing capability of the UAV-assisted MEC system, several studies focused on joint computation offloading and computing resource allocation to optimize the performances such as delay, energy consumption, or power. For example, Tun et al. \cite{Tun2022Colla} focused on a latency minimization problem in the collaborative UAV-assisted MEC system by jointly optimizing the strategies of offloading and resource allocation. Furthermore, Guo et al. \cite{Guo2023Multi} studied the problem of joint task scheduling and computing resource allocation to minimize the task processing delay under the constraints of data dependency and UAV energy. Diamanti et al. \cite{diamanti2023delay} focused on minimizing the sum of users’ maximum experienced delay by jointly optimizing the strategies of computation offloading and computing resource allocation. Ei et al. \cite{Ei2022Energy} presented a joint computation offloading and resource allocation problem for UAV-assisted edge computing to minimize the energy
consumption of the system. Nie et al. \cite{nie2021semi}
addressed the UAV power minimization problem by
jointly optimizing the computation offloading and resource allocation of the UAV-assisted MEC system. However, in most of these studies, UAVs are either fixed or follow a predetermined trajectory, which may not be suitable for scenarios with random-distributed and mobile MDs.

\par To harness the full potential of flexible offloading services, considerable research attention has been dedicated to joint optimization of computation offloading and UAV trajectory control. The objectives of these works include delay minimization \cite{Han2023joint}, energy consumption minimization \cite{liu2022resource, Wang2022c,chen2022Distributed}, secure calculation maximization \cite{Chen2023}, etc. For instance, Han et al. \cite{Han2023joint} formulated an optimization problem to minimize the average task
delay via jointly optimizing user association and UAV deployment. Furthermore, Liu et al. \cite{liu2022resource} presented a system energy consumption minimization problem by jointly optimizing the UAV trajectory for the
multiple input single output UAV-assisted MEC network. Wang et al. \cite{Wang2022c} aimed to minimize the energy consumption of the UAV via joint region partitioning and UAV trajectory scheduling. Chen et al. \cite{chen2022Distributed} jointly optimized the computation offloading of IoT nodes and trajectory planning of multiple UAVs to maximize the total energy efficiency of the system. Besides, the authors in \cite{Chen2023} focused on the maximum-minimum average secrecy capacity for UAV-assisted MEC systems by jointly optimizing the trajectory and computation offloading.

\par The limitations of these previous studies are summarized as follows. First, these works primarily concentrated on optimizing a single performance of the system, such as latency and energy. Furthermore, most of these studies formulated optimization problems from the perspective of the users or MEC servers while neglecting the heterogeneous requirements of the two sides. However, focusing solely on optimizing one aspect of the performance metric from the perspective of either users or MEC servers, without a holistic consideration of the diverse requirements could lead to imbalanced system performance and poor user experience. To address these limitations, we formulate a problem of joint computation offloading, resource allocation, and trajectory control problem to maximize the system utility, which incorporates delay and energy consumption of both MDs and MEC servers, as well as the dynamics of the network.

\subsection{Optimization Approach}

\par To tackle the intricate optimization problem of computation offloading, resource allocation, and trajectory control for UAV-assisted MEC, researchers are devoted to effective algorithm design by employing methodologies such as heuristic algorithms \cite{Laboni2024Hyper}, swarm intelligent algorithms \cite{Goudarzi2023UAV,tian2023service}, game theory\cite{zhou2022uav, Ning2023}, and reinforcement learning (RL)\cite{seid2021collaborative,song2023evo}. For example, Laboni et al. \cite{Laboni2024Hyper} proposed a hyper-heuristic algorithm for resource allocation in the MEC network to jointly optimize the latency, computational, and network load. Furthermore, Goudarzi et al. \cite{Goudarzi2023UAV} utilized a cooperative evolutionary method to solve the joint optimization of computation offloading and computing resource allocation. Tian et al. \cite{tian2023service} developed a genetic algorithm (GA)-based algorithm for task offloading and UAV scheduling in the UAV-assisted MEC system. Moreover, Zhou et al. \cite{zhou2022uav} studied the computation offloading in the UAV-assisted MEC network by employing a Stackelberg game to model the interaction between the MDs and the MEC server. Ning et al. \cite{Ning2023} proposed a stochastic game-based approach for multi-user computation offloading and MEC server deployment. In addition, Seid et al. \cite{seid2021collaborative} proposed a deep reinforcement learning (DRL)-based algorithm for collaborative computation offloading and resource allocation. Song et al. \cite{song2023evo} integrated the evolutionary algorithm with the RL algorithm for computation offloading and trajectory control in UAV-assisted MEC networks.

\par However, heuristic algorithms generally lack guaranteed optimality and can be sensitive to initial conditions. Furthermore, swarm intelligence algorithms and Stochastic games often require a substantial number of iterations to converge to a near-optimal solution, resulting in high computation complexity. Moreover, the Stackelberg games are well-suited for single-server situations but may not be appropriate for scenarios involving multiple MEC servers.  Moreover, although RL is powerful for training agents to make decisions, it 
requires a large number of interactions with the environment and significant computational resources, making it costly in the resource-constrained and heterogeneous MEC system. Therefore, we aim to design a low-complexity algorithm capable of facilitating real-time decision-making, mitigating heterogeneity between users and servers, and adapting to the varying timescale dynamics inherent in the UAV-assited MEC system.


%
%
\section{System Model and Problem Formulation}
\label{sec_model}

\par In this section, a UAV-assisted MEC architecture is first introduced, followed by the models of mobility, communication, computation, and energy consumption.

%
%

\subsection{System Model}
\label{sec_system_model}

\par This section presents the system overview, basic models, and problem formulation. 

\subsubsection{System Overview}
\label{sec:system_overview}

\par We consider a hierarchical UAV-assisted MEC system as shown in Fig. \ref{fig_systemModel}. In the spatial dimension, the hierarchical UAV-assisted MEC system comprises an MD layer, a terrestrial edge layer, an aerial edge layer, and a control layer. 

\begin{figure*}[!hbt] 
	\centering
        \setlength{\abovecaptionskip}{2pt}%
	\setlength{\belowcaptionskip}{2pt}%
	\includegraphics[width =6.6in]{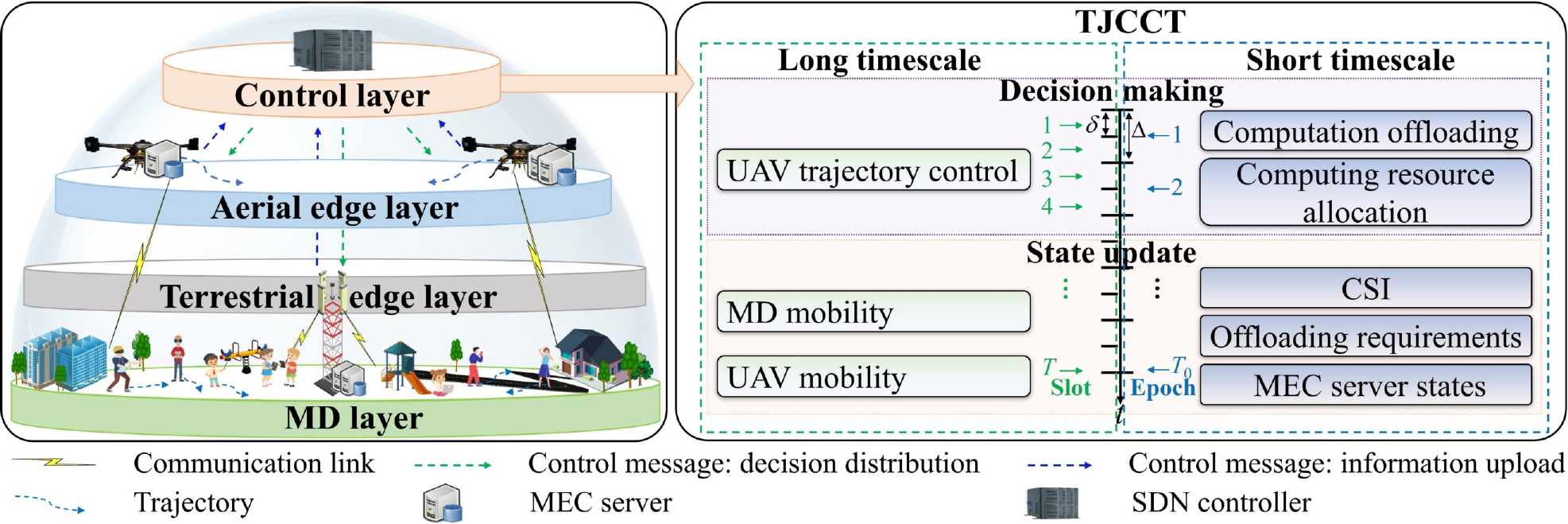}
	\caption{The architecture of computing resource allocation, computation offloading, and UAV trajectory control for UAV-assisted MEC system.}
	\label{fig_systemModel}
        \vspace{-1em}
\end{figure*}

    \par Specifically, \textit{at the MD layer}, a set of MDs $\mathcal{I}=\{1, \ldots, i, \ldots, I\}$ moving in the considered area periodically handle the tasks with diverse requirements. \textit{At the terrestrial edge layer},  a macro base station (MBS) $b$ equipped with terrestrial MEC server \footnote{Terrestrial MEC server and MBS will be used interchangeably.} provides edge computing services for the MDs within its service range. \textit{At the aerial edge layer}, the rotary-wing UAVs $\mathcal{U}=\{1, \ldots, U\}$ equipped with aerial MEC servers \footnote{Aerial MEC server and UAV will be used interchangeably.} are dispatched as aerial base stations to assist the MBS in providing supplementary computing services for MDs. \textit{At the control layer}, the regional SDN controller, on which our algorithm runs, coordinates the decisions regarding computation offloading for MDs, computing resource allocation for MEC servers, and trajectory control for UAVs based on the knowledge acquired from the terrestrial and aerial edge layers. Furthermore, we consider that the radio access links of the system are allocated orthogonal frequency bands \cite{ji2022trajectory}. Besides, both the terrestrial MEC server and aerial MEC servers are collectively referred to MEC server, which is indexed as $j \in \{ b \} \cup U$.

\par In the temporal dimension, we consider that the system operates in a two-timescale manner since the dynamic characteristics for channel state information (CSI), task arrival of MDs, and workload update of edge servers vary in a fine-grained timescale, while the mobility of MDs varies in a long timescale \cite{Li2021}. Specifically, the system time horizon is discretized into $T$ time slots $\mathcal{T}=\{1,\ldots,t,\ldots, T\}$ with equal slot duration $\delta$, which is consistent with the coherence block of the wireless channel \cite{Liao2021}. Furthermore, every $\Delta$ consecutive slots are combined into a time epoch indexed by  $t_0\in \mathcal{T}_0=\{1,\ldots, T_0\}$, where each time epoch is denoted as $\mathcal{T}(t_0) = \{(t_0 -1)\Delta + 1,\ldots,t_0\Delta\}$. Here, $\Delta$ is selected to be sufficiently small to guarantee that the UAV's location is approximately constant within each epoch. Therefore, in the short timescale, the CSI, offloading requirements of MDs, and states of MEC servers are captured and updated, and the decisions of task offloading and computing resource allocation are decided. In the long timescale, the mobility states of MDs are captured and updated, and the UAV trajectory planning is decided.

%
%
\subsubsection{Basic Models}
	
\par The basic models of the system are given as follows.

\par \textit{\textbf{(1) MD Mobility Model.}} The horizontal coordinate of each MD $i\in\mathcal{I}$ is denoted as $\mathbf{q}_{i,t}=[x_{i}^t,y_{i}^t]_{i\in\mathcal{I}}^{\text{T}}$. Moreover, we adopt the Gauss Markov model \cite{Batabyal2015} to capture the temporal-dependent randomness in the movement of MDs. Specifically, the velocity of MD $i$ at time epoch $t_0+1$ (i.e., time slot $(t_0+1)\Delta$) can be given as:
\begin{equation}
	\label{eq_MD_mobility}
		 \mathbf{v}_{i}^{(t_0+1)\Delta}=\alpha \mathbf{v}_{i}^{t_0\Delta}+\left(1-\alpha\right) \bar{\mathbf{v}}+\bar{\sigma}\sqrt{1-\alpha^2} \mathbf{w}_{i},
\end{equation}

\noindent where $ \mathbf{v}_i^{t_0\Delta}$ is the velocity vector at time epoch $t_0$ and $ \mathbf{w}_i$ is the uncorrelated random Gaussian process, i.e., $ \mathbf{w}_i\sim f^{\text{Gua}}\left(0,\bar{\sigma}^2\right)$. Besides, $0\leq \alpha\leq 1$, $\bar{\mathbf{v}}$, and $\bar{\sigma}$ denote the memory degree, asymptotic mean, and asymptotic standard deviation of velocity, respectively. Therefore, the location of each MD can be updated as:
\begin{equation}
	\label{eq_MD_position}
	\begin{aligned}
    \mathbf{q}_{i}^{(t_0+1)\Delta}=\mathbf{q}_{i}^{t_0\Delta} + \mathbf{v}_i^{t_0\Delta} \delta \Delta, \forall i\in \mathcal{I}, \  t_0 \in \mathcal{T}_0.
	\end{aligned}
\end{equation}

\par \textit{\textbf{(2) UAV Mobility Model.}} Similar to most existing studies, e.g.,  \cite{Wang2022a}, we consider that each UAV $j\in \mathcal{U}$ flies at a fixed altitude of $H$ with the instantaneous horizontal coordinate of $\mathbf{q}_{j}^{t}=[x_{j}^t,y_{j}^t]_{j\in\mathcal{U}}^T$. Therefore, the location of each UAV can be updated as follows:
\begin{equation}
	\label{eq_UAV_position}
	\begin{aligned}
\mathbf{q}_{j}^{(t_0+1)\Delta}=\mathbf{q}_{j}^{t_0\Delta}+\mathbf{v}_{u}^{t_0\Delta}\delta\Delta, \forall j\in \mathcal{U},\ t_0 \in \mathcal{T}_0,
	\end{aligned}
\end{equation}

\noindent where $\mathbf{v}_{j}^{t_0\Delta}$ denotes the velocity of UAV $j$ at time epoch $t_0$. Furthermore, the position of each UAV should satisfy several practical constraints as follows: 
\begin{subequations}
	\label{eq_UAV_mob}
	\begin{alignat}{2}
		& 0\leq x_{j}^t\leq x^{\max}, \ \forall j\in\mathcal{U}, \ t \in \mathcal{T}, \label{eq_UAV_mob_x}\\
		& 0\leq y_j^t\leq y^{\max}, \ \forall j\in\mathcal{U}, \ t \in \mathcal{T}, \label{eq_UAV_mob_y}\\
		&\textbf{q}_{j}^{1}=\textbf{q}_j^{I},\ \textbf{q}_{j}^{T_0\Delta}=\textbf{q}_j^{F}, \ \forall j\in\mathcal{U},\label{eq_UAV_mob_ini_fin}\\
		&||\textbf{q}_{j}^{(t_0+1)\Delta}-\textbf{q}_{j}^{t_0\Delta}||\leq v_{\text{U}}^{\max}\delta\Delta,\ \forall j\in\mathcal{U}, \ t_0 \in \mathcal{T}_0, \label{eq_UAV_mob_dis_each}\\
		&||\textbf{q}_{j}^{F}-\textbf{q}_{j}^{t_0\Delta}||\leq v_{\text{U}}^{\max}  (T_0-t_0)\delta \Delta,\ \forall j\in\mathcal{U}, \ t_0 \in \mathcal{T}_0, \label{eq_UAV_mob_dis_each1}\\
	&||\textbf{q}_{j}^{t_0\Delta}-\textbf{q}_{j^{\prime}}^{t_0\Delta}||\geq d_{\text{U}}^{\text{safe}}, \ \forall j, j^{\prime}\in\mathcal{U}, j\neq j^{\prime}, \ t_0 \in \mathcal{T}_0, \label{eq_UAV_safe}
	\end{alignat}
\end{subequations}

\noindent where $v_U^{\max}$ is the maximum velocity of UAV. Furthermore, Constraints ~\eqref{eq_UAV_mob_x} and \eqref{eq_UAV_mob_y} guarantee that each UAV cannot fly beyond the boundary of the considered area. Moreover, the initial and final positions of UAV $j$ are predetermined by Constraints $ \textbf{q}_j^{I} $ and $ \textbf{q}_j^{F}$, respectively, as given in Constraint \eqref{eq_UAV_mob_ini_fin} \cite{Ji2020}. In addition, Constraints \eqref{eq_UAV_mob_dis_each} and \eqref{eq_UAV_mob_dis_each1} indicate that the flight distance of a UAV is constrained by the maximum velocity. Finally, each UAV should keep the minimum safe distance of $d_{\text{U}}^{\text{safe}}$ with the other UAVs to avoid collision, as given in Constraint \eqref{eq_UAV_safe}.

\par\textit{\textbf{(3) MD Model.}} Each MD $i$ is characterized by the tuple $<f_i^{\max}, n_i^{\text{core}}, \tau_i^t, \zeta_{i}^t, \mathcal{K}_{i}^{t}>$, where $f_i^{\max}$ represents the computation capability of MD $i$ (in cycles/s), $n_i^{\text{core}}$ denotes the number of CPU core of MD $i$, $\tau_i^t$ denotes the energy constraint of MD $i$, $\zeta_{i}^t\in\{0,1\}$ is a binary variable that indicates whether a task is generated by MD $i$ during time slot $t$, and $ \mathcal{K}_{i}^{t}$ denotes the task generated by MD $i$ in time slot $t$. Specifically, each MD could generate one task ($\zeta_{i}^t=1$) or not ($\zeta_{i}^t=0$) in time slot $t$. Furthermore, due to the resource limitation, we assume that each MD is equipped with a single CPU core, i.e., $n_i^{\text{core}}=1$ \cite{Dai2019a}. 

\par\textit{\textbf{(4) Server Model.}} Similar to \cite{Dai2019a}, we consider that each MEC server $j\in\{b, \mathcal{U}\}$ are equipped with multi-core CPUs to enable parallel processing of multiple tasks. Consequently, each MEC server is characterized by the tuple $<n_j^{\text{core}}, f_j^{\max}, E_j^{\max}>$,  where $n_j^{\text{core}}$ denotes the number of CPU cores, and each of these CPU cores assumed to have homogeneous computational resources of $f_j^{\max}$ (cycles/s).

\par\textit{\textbf{(5) Computation Task Model.}} MDs have heterogeneous requirements on diverse computation tasks due to distinct task characteristics. Specifically, the task generated by MD $i$ in time slot $t$ is characterized by $ \mathcal{K}_{i}^{t}=\left\langle l_{i}^{t}, \mu_{i}^t, \tau_{i}^t\right\rangle$, where $l_{i}^{t}$ is the task size, $\mu_{i}^t$ is the computation intensity (in cycles/bit), and $\tau_{i}^t$ is the deadline of the task.

%
%
\subsection{Communication Model}
\label{sec_communicationModel}

\par By adopting the widely used orthogonal frequency division multiple access (OFDMA) \cite{ji2022trajectory}, the instantaneous uplink data rate between MD $i\in\mathcal{I}$ and MEC server $j \in \{ b \} \cup U$ can be given as:
\begin{equation}
	\label{eq_dataRate}
	r_{i,j}^t=B_{i,j} \log_2\left(1+\frac{P_{i}^{t} g_{i,j}^t}{N_0}\right),
\end{equation}

\noindent where $B_{i,j}$ is the subchannel bandwidth between MD $i$ and MEC server $j$, $P_{i}^{t}$ is the transmit power of MD $i$ in time slot $t$, $N_0$ is the noise power, and $g_{i,j}^t$ is the instantaneous channel power gain between MD $i$ and MEC server $j$.

\par Due to the complex nature of the communication environment in UAV-assisted MEC networks, such as the movement of MDs and UAVs, and the occasional blockages caused by obstacles, the channel power gain between MD $i$ and MEC server $j$ is calculated by incorporating the commonly used probabilistic LoS channel with the large-scale and small-scale fadings as \cite{peng2022directional,ji2022trajectory}:
\begin{equation}
\label{eq_channelPowerGain}
	g_{i,j}^t={\mathbb{P}_{i,j}^{t}} g_{i,j}^{t,\text{L}} +(1-\mathbb{P}_{i,j}^{t})  g_{i,j}^{t,\text{NL}},
\end{equation} 

\noindent where $\mathbb{P}_{i,j}^{t}$ denotes the probability of LoS transmission, $g_{i,j}^{t,x}$ denotes the channel power gain between MD $i$ and MEC server $j$, and $ x\in\{\mathrm{L},\mathrm{N}\}$ represents LoS or non-line of sight (NLoS) links. Moreover, the details of $\mathbb{P}_{i,j}^t$ and $g_{i,j}^{t,x}$ are presented as follows.

\subsubsection{LoS Probability}

\par For the MD-MBS link, according to the 3GPP standard \cite{3GPP3D2015}, the probability of LoS transmission can be given as follows:
\begin{equation}
\label{eq_LoS_probability_MBS}
    \mathbb{P}_{i,j}^t=\min \left(\frac{d_1}{d_{i,j}^t}, 1\right)\left(1-e^{-\frac{d_{i,j}^t}{d_2}}\right)+e^{-\frac{d_{i,j}^t}{d_2}}, \ j=b,
\end{equation}

\noindent where $d_{i,j}^t$ is the instantaneous distance between MD $i$ and the MBS $j$. Besides, $d_1$ and $d_2$ are the parameters to fit the specific scenarios. Furthermore, for the MD-UAV link, an extensively employed LoS probability is calculated as \cite{AlHourani2014}:
\begin{equation}
\label{eq_LoS_probability}
    \mathbb{P}_{i,j}^t=\frac{1}{1 + p_1  \mathrm{e}^ {\left(-p_2\left(\frac{180}{\pi}\arcsin\left(\frac{H}{d_{i,u}^{t}}\right)-p_1\right)\right)}},\ j\in \mathcal{U},
\end{equation}

\noindent where $d_{i,j}^t$ is the horizontal distance between MD $i$ and UAV $j\in \mathcal{U}$. Moreover, $p_1$ and $p_2$ denote the environment-dependent parameters.

\subsubsection {Channel Power Gain}

\par According to \cite{3GPPTR389012020,zhang2018dense, peng2022directional}, the channel power gain between MD $i$ and MEC server $j$ in time slot $t$ can be uniformly given as $g_{i,j}^{t,x}=|h_{i,j}^{t,x}|^2 (L_{i,j}^{t,x})^{-1}$, where $h_{i,j}^{t,x}$ and $L_{i,j}^{t,x}$ denotes the parameters of \textit{small-scale fading} and \textit{large-scale fading}, respectively, which are presented in detail as follows.

\par First, the small-scale fading for both MD-MBS and MD-UAV communications in time slot $t$ is modeled as a parametric-scalable and good-fitting generalized fading, i.e., Nakagami-$m$ fading \cite{Boumaalif2022}, which is given as:
\begin{equation}
\label{eq_LoS_probability}
    \begin{aligned}
        h_{i,j}^{t,x}&\sim f^{\text{Nak}}\left(h_{i,j}^{t,x},m_{y}^{x}\right) \\
        &= \frac{2{\left(m_y^x\right)}^{m_y^x} \left({h_{i,j}^{t,x}}\right)^{2m_y^x-1} e^{\left(-\frac{m_y^x  \left({h_{i,j}^{t,x}}\right)^2}{\overline{p}}\right)}}{\Gamma\left(m_y^x\right)  \left(\overline{p}\right)^{m_{y}^{x}}}, \ j\in \{b,\mathcal{U}\},
    \end{aligned}
\end{equation}

\noindent where $\overline{p}$ is the average received power, $\Gamma(\cdot)$ is the Gamma function, and $m_{y}^{x}\in\{m_\mathrm{T}^\mathrm{L},m_\mathrm{T}^\mathrm{N},m_\mathrm{A}^\mathrm{L},m_\mathrm{A}^\mathrm{N}\}$ is the Nakagami-$m$ fading parameters for terrestrial/aerial LoS/NLoS channel. 

\par Furthermore, the large-scale fading for MD-MBS communication in time slot 
$t$ can be given as:
\begin{equation}
\label{eq_LoS_probability}
    L_{i,j}^{t,x}=\frac{\left(4\pi  d_0^{\mathrm{T}}  f_c\right)^2}{c^2} \left(\frac{d_{i,j}^t}{d_0^\mathrm{T}}\right)^{\beta_\mathrm{T}^x} \chi^x, \ j=b,
\end{equation}
 
\noindent where $d_0^{\mathrm{T}}$ is the reference distance for the communication between MD and terrestrial MEC server, $\beta_\mathrm{T}^x\in\{\beta_\mathrm{T}^\mathrm{L},\beta_\mathrm{T}^\mathrm{N}\}$ is the path loss exponent for LoS/NLoS channel between MD $i$ and the MBS $j$, and $\chi^x\in\{\chi^\mathrm{L},\chi^\mathrm{N}\}$ denotes the standard deviation of shadowing for LoS/NLoS transmission \cite{3GPPTR389012020}, which follows the zero-mean Gaussian distributed random variable, i.e., $\chi^x \sim f^{\text{Gua}}\left(0,\left(\sigma^{x}\right)^2\right)$. 

\par Besides, the large-scale fading for MD-UAV communication in time slot $t$ can be given as:
\begin{sequation}
\label{eq_large_MD_UAV}
     L_{i,j}^{t}=\frac{\left(4\pi  d_0^{A}  f_c\right)^2}{c^2\kappa} \left(\frac{d_{i,j}^t}{d_0^\mathrm{A}}\right)^{\beta_\mathrm{A}}, \ j\in \mathcal{U},
\end{sequation}

\noindent where $d_0^\mathrm{A}$ is the reference distance for the communication between MD $i$ and the UAV $j$, $\beta_\mathrm{A}$ is the path loss exponent of MD-UAV communication, and $\kappa$ is the additional attenuation factor due to the NLoS link.

%
%

\subsection{Computation Model}
\label{sec_DelayModel}

\par Each MD is capable of performing local computing and computation offloading simultaneously, as the communication circuit and computation unit are separate \cite{Yang2022}. Both local computing and computation offloading generally incur overheads in terms of delay and energy consumption, which are detailed in the following subsections.

\subsubsection{Local Computing Model}

\par The completing delay and energy consumption for local computing are given as follows.
 
\par \textbf{(1) Completion Delay.} When task $\mathcal{K}_{i}^{t}$ is executed locally by MD $i$, the task completion delay is mainly incurred by task computation, which can be given as follows:
 \begin{equation}
 	\label{eq_time_local}
 	D_{i,i}^t=\frac{\mu_{i}^t}{f_i^t},
 \end{equation}
\noindent where $f_i^t$ is the available computing resources of MD $i$ in time slot $t$.
 
\par \textbf{(2) Energy Consumption.} Correspondingly, the energy consumption of MD $i$ to execute task $\mathcal{K}_i^t$ locally is given as:
\begin{equation}
	\label{eq_energyLocalExe} E_{i,i}^t=\gamma_i(f_i^t)^{2}\mu_{i}^t,
\end{equation}

\noindent where $\gamma_i\geq0$ denotes the effective capacitance of MD $i$'s CPU that depends on the CPU chip architecture \cite{pan2021cost}.

\subsubsection{Edge Offloading Model}

\par The completing delay and energy consumption for edge offloading are given as follows.

\par \textbf{(1) Completion Delay.} When task $\mathcal{K}_{i}^{t}$ is offloaded to MEC server $j$ for remote processing, the task completion delay mainly consists of transmission delay and computation delay, i.e.,
\begin{equation}
	\label{eq_edge_delay}
	\begin{aligned}
		D_{i,j}^t=\frac{l_{i}^{t}}{r_{i,j}^t}+\frac{\mu_{i}^t}{f_{j,i}^t},
	\end{aligned}
\end{equation}

\noindent where $l_{i}^{t}/r_{i,j}^t$ represents the transmission delay that the task is uploaded from MD $i$ to MEC server $j$. Furthermore, $\mu_{i}^t/f_{j,i}^t$ represents the computation delay at MEC server $j$, which depends on the computing resources $f_{j,i}^t$ allocated by MEC server $j$ in time slot $t$. 

\par \textbf{(2) Energy Consumption.} The remote execution of task $\mathcal{K}_i^t$ on MEC server $j$ could result in transmission energy consumption for the MD, computation energy consumption for the MEC server, and flight energy consumption for the UAV. For MD $i\in\mathcal{I}$, the energy consumed for task uploading is as follows:
\begin{equation}
	\label{eq_energyMDMEC}
	E_{i,j}^t=\frac{P_{i}^{t} l_{i}^{t}}{r_{i,j}^t},
\end{equation}

\noindent where $P_{i}^{t}$ denotes the transmit power of MD $i$ in time slot $t$. 

\par For MEC server $j$, the energy consumption for task execution can be given as:
\begin{equation}
	\label{eq_energyMecComp} E_{j,i}^{t,\text{comp}}=\gamma_j(f_{j,i}^t)^{2}\mu_{i}^t,
\end{equation}

\noindent where $\gamma_j\geq0$ denotes the effective capacitance of terrestrial MEC server $j$'s CPU. 

\par For aerial MEC server $j$, energy consumption also occurs during flight. Specifically, the unit propulsion energy of the rotary-wing UAV in straight-and-level flight includes the components of blade profile power, induced power, and parasite power\cite{Zeng2019}, which can be given as:
\begin{sequation}
	\begin{aligned}
		\label{eq_UAV_flight}
		E_{j}^{p}&=\underbrace{\eta_1\left(1+\frac{3 \left(v_{j}^{t}\right)^2}{{v_j^{\text{tip}}}^2}\right)}_{\text {Blade profile power}}+\underbrace{\eta_2 \sqrt{\sqrt{\eta_3+\frac{\left(v_{j}^{t}\right)^4}{4}}-\frac{\left(v_{j}^{t}\right)^2}{2}}}_{\text {Induced power}}\underbrace{+\eta_4 \left(v_{j}^{t}\right)^3}_{\text {Parasite power}},
	\end{aligned}
\end{sequation}

\noindent where $v_j^{\text{tip}}$ denotes the tip speed of the rotor blade. Moreover, $\eta_1$, $\eta_2$, $\eta_3$, and $\eta_4$ are the constants that depend on the aerodynamic parameters of the UAV. Consequently, the energy consumption of UAV $j$ to provide computation service for task $\mathcal{K}_{i}^{t}$ can be concluded as:
\begin{subnumcases}{\label{eq_energyMecComp1}E_{j,i}^t=}
	$$ \gamma_j(f_{j,i}^t)^{2}\mu_{i}^t, \ j=b$$, \label{eq_energyMecComp1A}\\
	$$ \gamma_j(f_{j,i}^t)^{2}\mu_{i}^t + E_j^p \delta, \ j\in \mathcal{U}$$.\label{eq_energyMecComp1B}
\end{subnumcases}

\par Note that the delay and energy consumption of the result
feedback are neglected since the result of a task is
generally much smaller than that of the input \cite{Wang2022b}. 

%
%

\subsection{Utility Model}
\label{sec_utilityModel}

\par In this section, we present the QoE of MDs, the revenue of MEC servers, and the utility of the system.
\vspace{-1 em}
\subsubsection{QoE of MDs}
\label{sec_MDUtility}

\par The QoE achieved by MD $i$ in time slot $t$ is calculated as the difference between the satisfaction degree from task completion and the costs associated with task offloading. Specifically, the costs consist of the energy consumption and the fees paid to the MEC server, which depend on the offloading decision. Moreover, the offloading decision is denoted as a binary variable $o_{i,n}^t\in\{0,1\}, n \in \mathcal{N} = \{0, b\} \cup \mathcal{U}$, which indicates that task $\mathcal{K}_i^t$ of MD $i$ is processed locally ($o_{i,0}^t = 1$) or offloaded to MEC server $j \in \{ b \} \cup U$ ($o_{i,j}^t=1$) in time slot $t$. Therefore, the QoE achieved by MD $i$ in time slot $t$ is calculated as follows:
\begin{tequation}
	\begin{aligned}
	\label{eq_MD_utility}
    		&U_{i,n}^t=w_i\underbrace{ \frac{\log\left(1+\tau_i^t-D_{i,n}^t\right)}{\log\left(1+\tau_i^t\right)}}_{\text{Satisfaction degree}}-(1-w_i) \\&\Bigg(\overbrace{\mathbb{I}_{\left(n=0\right)}\underbrace{\frac{E_{i,i}^t}{E_{i}^{\max}}}_{\text{Cost of energy}}}^{\text{Local computing}}+ \overbrace{\mathbb{I}_{\left(n=j\right)} \Big(\underbrace{\frac{E_{i,j}^t}{ E_{j}^{\max}}}_{\text{Cost of energy}}+\underbrace{\frac{f_{j,i}^t p_{j,i}^t }{G_i^{\max}}}_{\text{Cost of payment}}\Big)}^{\text{Edge offloading}}\Bigg), \\
		& \forall i \in \mathcal{I},\ 
  j\in \{ b \} \cup \mathcal{U}, \ n \in \mathcal{N}, 
	\end{aligned}
\end{tequation}

\noindent where the metrics of satisfaction degree and cost, which have different units, are normalized and then incorporated using the weight parameter $w_i$. Specifically, the normalized satisfaction degree, i.e., $\log\big(1+\tau_i^t-D_{i,j}^t\big)/ \log\left(1+\tau_i^t\right)$, reflects the satisfaction level of MD $i$ in completing task $\mathcal{K}_i^t$, which is commonly modeled as a logarithmic function and extensively used for evaluating the benefit of task processing in mobile computing domains \cite{Xia2021}. Additionally, the terms $E_{i,i}^t/E_i^{\max}$ and $E_{i,j}^t/E_i^{\max}$ denote the normalized energy consumption of local computing ($\mathbb{I}_{\left(n=0\right)}=1$) and edge offloading ($\mathbb{I}_{\left(n=j\right)}=1$), respectively, where $E_i^{\max}$ is the energy constraint of MD $i$. Besides, $p_{j,i}^t f_{j,i}^t/G_i^{\max}$ represents the normalized payment to MEC server $j$, where $f_{j,i}^t$ is amount of computing resource allocated by MEC server $j$ to MD $i$, $p_{j,i}^t$ is the unit price of computing resource paid by MD $i$, and $G_i^{\max}$ is the budget of MD $i$.

\subsubsection{Revenue of MEC Servers}
\label{sec_MECUtility}

\par  The utility gained by MEC server $j$ in time slot $t$ from executing task $\mathcal{K}_i^t$ is calculated as the difference between the reward received from MD $i$ and the cost of energy consumption, i.e.,
\begin{sequation}
	\label{eq_MEC_utility}
        U_{j,i}^t = w_j  \underbrace{\frac{f_{j,i}^t        p_{j,i}^t}{ f_{j}^{\max} p_j^{\max}}}_{\text{Reward from MD}}-(1-w_j)  \underbrace{\frac{E_{j,i}^t}{E_j^{\max}}}_{\text{Energy cost}}, \forall i \in \mathcal{I},\ 
        j\in \{ b \} \cup \mathcal{U}.
\end{sequation}

\noindent Similar to Eq. \eqref{eq_MD_utility}, the metrics of reward and cost are normalized and then incorporated using the weight parameter $w_j$. Specifically, $p_{j,i}^t  f_{j,i}^t/(f_{j}^{\max} p_j^{\max})$ denotes the normalized reward of MEC $j$ received by providing computational services to MD $i$, where $p_j^{\max}$ is the maximum price for the computing resource of MEC server $j$. Furthermore, $E_{j,i}^t/E_j^{\max}$ represents the normalized cost of energy consumption of MEC server $j$ with $E_j^{\max}$ being its maximum tolerable energy consumption.

\subsubsection{Utility of System}
\label{sec_social_welfare}

\par The utility of the system is defined to evaluate the overall system performance, which is calculated by summing the QoE of MDs and the revenue of MEC servers in each time slot. Specifically, combining Eqs. \eqref{eq_MD_utility} and \eqref{eq_MEC_utility}, the utility of system can be calculated as:
\begin{sequation}
	\label{eq_socialwelfare}
	\begin{aligned}
			U^t&=\underbrace{\sum_{i\in \mathcal{I}}\sum_{n\in \mathcal{N}}\zeta_{i}^t  o_{i,n}^t U_{i,n}^t}_{\text{The total utility of MDs}}+\underbrace{\sum_{j\in \{b,\mathcal{U}\}}\sum_{i\in \mathcal{I}}\zeta_{i}^t  o_{i,j}^t  U_{j,i}^t}_{\text{The total utility of MEC servers}}
			\\&=\sum_{i\in \mathcal{I}}\sum_{n\in \mathcal{N}}\zeta_{i}^t o_{i,n}^t \left(U_{i,n}^t+U_{n,i}^t\right), \ \forall i \in \mathcal{I},\ j\in \{ b \} \cup \mathcal{U}, n \in \mathcal{N}.
		\end{aligned}
\end{sequation}

%
%
\subsection{Problem Formulation}
\label{sec_problemFormulation}

\par The optimization problem is formulated to maximize the system utility over $T$ slots by jointly determining the computation offloading strategy $\mathbf{O}= \{o_{i,n}^t\}_{i\in \mathcal{I}, n \in \mathcal{N}, t \in \mathcal{T}}$, computing resource allocation and pricing strategy $\mathbf{F}= \{f_{j,i}^t, p_{j,i}^t\}_{i\in \mathcal{I}, j\in \{ b \} \cup \mathcal{U}, t \in \mathcal{T}}$, and UAV trajectory $\mathbf{Q}= \{\mathbf{q}_u^t\}_{j\in \mathcal{U},t \in \mathcal{T}}$. Therefore, the problem can be formulated as:
{
\begin{subequations}
	\label{eq_problem}
	\begin{alignat}{2}
		\mathbf{P}: \quad &\max_{\mathbf{O},\mathbf{F},\mathbf{Q}}  \sum_{t=1}^T U^t, \label{utility}\\
		\text{s.t.} \quad &  o_{i,n}^t\in\{0,1\}, \ \forall i\in \mathcal{I}, \ n\in \mathcal{N}, \ t\in \mathcal{T},\label{pro_c1}\\
		& \sum_{n\in \mathcal{N}}o_{i,n}^t\leq 1, \ \forall  i\in \mathcal{I},  \ t\in \mathcal{T}, \label{pro_c2}\\
		& \zeta_i^t=\{0,1\}, \ \forall i\in \mathcal{I},  \ t\in \mathcal{T}, \label{pro_c3}\\
		& o_{i,n}^tD_{i,n}^t\leq \tau_i^t, \ \forall i\in \mathcal{I}, \ j\in \{ b \} \cup \mathcal{U}, \ n\in \mathcal{N},  \ t\in \mathcal{T},  \label{pro_c4}\\
		&\sum_{i\in\mathcal{I}} o_{i,j}^t f_{j,i}^t \leq  f_j^{\max}, \ \forall j\in \{ b \} \cup \mathcal{U},  \ t\in \mathcal{T}, \label{pro_c6}\\
		&\sum_{i \in \mathcal{I}} o_{i,j}^t \leq n_{j}^{\text{core}}, \ \forall j\in \{ b \} \cup \mathcal{U},  \ t\in \mathcal{T}, \label{pro_c7}\\
		&o_{i,j}^t c_{j,i}^t f_{j,i}^t\leq G_i^{\max}, \ \forall i\in \mathcal{I}, \ j\in \{ b \} \cup \mathcal{U},  \ t\in \mathcal{T}, \label{pro_c10}\\
		& \eqref{eq_MD_mobility} \sim \eqref{eq_UAV_mob}. \label{pro_c11}		
	\end{alignat}
\end{subequations}
}

\noindent Constraints \eqref{pro_c1} and \eqref{pro_c2} indicate that the offloading strategy for each task is binary. Constraint \eqref{pro_c3} means that each MD generates at most one task in each time slot. Constraint \eqref{pro_c4} ensures that the delay of completing the task should not exceed the deadline. Constraints \eqref{pro_c6} and \eqref{pro_c7} constrain the computing resources and the number of CPU cores, respectively, for MEC server. Moreover, Constraint \eqref{pro_c10} guarantees that the price paid by each MD to the MEC server should not exceed its budget. In addition, Constraint \eqref{pro_c11} limits the mobility of MDs and UAVs.

\begin{theorem}
	\label{lemma_NP}
	Problem $\mathbf{P}$ is a non-convex and NP-hard MINLP.
\end{theorem}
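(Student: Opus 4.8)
The plan is to verify the three claimed properties in turn, since each needs its own argument, and then to identify where the real work lies.

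First, to see that $\mathbf{P}$ is an MINLP I would simply classify the decision variables and the functional forms appearing in \eqref{utility}--\eqref{pro_c11}. The offloading indicators $o_{i,n}^t$ are confined to $\{0,1\}$ by \eqref{pro_c1}, hence are integer variables, whereas the resource/price allocation $f_{j,i}^t, p_{j,i}^t$ and the UAV coordinates $\mathbf{q}_j^t$ range over continuous sets; the model is therefore mixed-integer. For nonlinearity it suffices to exhibit a single offending term: the satisfaction degree in \eqref{eq_MD_utility} is a logarithm of the delay $D_{i,n}^t$, the energy expressions \eqref{eq_energyLocalExe} and \eqref{eq_energyMecComp} are quadratic in $f$, and the objective \eqref{eq_socialwelfare} contains products such as $o_{i,n}^t U_{i,n}^t$ and $f_{j,i}^t p_{j,i}^t$. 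Any one of these rules out a linear program, so $\mathbf{P}$ is an MINLP.

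Second, for non-convexity I would argue at the level of both the feasible region and the objective, so that the conclusion does not rest on the integrality alone. The constraint \eqref{pro_c1} already makes the feasible set a discrete, hence non-convex, set. Independently, I would freeze all variables but two and certify that the objective is neither concave nor convex: the bilinear payment term $f_{j,i}^t p_{j,i}^t$ (equivalently the product $o_{i,n}^t U_{i,n}^t$) has an indefinite Hessian with eigenvalues $\pm 1$, and the data rate $r_{i,j}^t$ enters the delay and energy through the coordinate-dependent channel gain $g_{i,j}^t$ in a manifestly non-convex manner. Exhibiting the indefinite Hessian of the bilinear term is the cleanest single certificate.

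Third, and this is the crux, I would establish NP-hardness by a polynomial-time reduction from a known NP-hard problem. The natural target is the generalized assignment problem (GAP), equivalently the $0$-$1$ multiple knapsack problem, because the offloading substructure of $\mathbf{P}$ is precisely an assignment of tasks to capacity-limited servers. Concretely, I would restrict $\mathbf{P}$ to a single time slot, freeze the trajectory $\mathbf{Q}$ and the per-task resource amounts $f_{j,i}^t$ and prices $p_{j,i}^t$ to fixed values; the residual problem is to choose the binary assignment $o_{i,j}^t$ maximizing $\sum_{i,j} o_{i,j}^t\,(U_{i,n}^t+U_{n,i}^t)$ subject to the capacity constraint \eqref{pro_c6}, the core constraint \eqref{pro_c7}, and the assignment constraint \eqref{pro_c2}. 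I would then map an arbitrary GAP instance (items, bins, profits, sizes, capacities) onto this residual problem by identifying items with tasks, bins with servers, profits with the combined utility coefficients, sizes with $f_{j,i}^t$, and capacities with $f_j^{\max}$, so that a GAP solution of value at least $V$ exists if and only if the $\mathbf{P}$ instance attains utility at least $V$. Since GAP is NP-hard and this special case of $\mathbf{P}$ solves it, $\mathbf{P}$ is NP-hard.

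The step I expect to be the main obstacle is the reduction's gadget construction, namely choosing the frozen parameters ($f_{j,i}^t$, $p_{j,i}^t$, the deadlines $\tau_i^t$, budgets $G_i^{\max}$, and energy budgets $E_j^{\max}$) so that (i) the delay constraint \eqref{pro_c4}, the budget constraint \eqref{pro_c10}, and every remaining side constraint hold automatically for all assignments and thus do not interfere, and (ii) the resulting utility coefficients reproduce exactly the prescribed GAP profits. Verifying that the normalized logarithmic and quadratic terms of \eqref{eq_MD_utility} and \eqref{eq_MEC_utility} can be tuned (by scaling $w_i$, $E_i^{\max}$, and the deadlines) to match arbitrary nonnegative profits is the delicate bookkeeping that makes the equivalence airtight; the rest of the argument is routine.
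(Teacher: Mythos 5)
Your first paragraph is essentially the paper's entire proof: the paper classifies $\mathbf{O}$ as binary and $\mathbf{F}$, $\mathbf{Q}$ as continuous, notes that the objective is nonlinear, and then concludes in one sentence that $\mathbf{P}$ is a non-convex, NP-hard MINLP by citing a textbook --- no Hessian certificate and no reduction appear anywhere. Everything you do after that is therefore a genuinely different and more demanding route. Your indefinite-Hessian certificate for the bilinear term $f_{j,i}^t p_{j,i}^t$ (eigenvalues $\pm 1$) makes non-convexity independent of integrality, and your generalized-assignment reduction is an actual hardness argument where the paper offers only an appeal to the folklore fact that MINLPs are hard in general. What the paper's approach buys is brevity; what yours buys is a claim that is actually about problem $\mathbf{P}$ rather than about the class MINLP --- membership in a class whose worst case is NP-hard does not certify that this particular family of instances is hard, since trivial MINLPs exist.

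One genuine gap in your reduction as stated: you cannot literally ``freeze'' $f_{j,i}^t$, $p_{j,i}^t$, and $\mathbf{Q}$, because they are decision variables; an instance of $\mathbf{P}$ is specified only by its parameters, and the solver remains free to optimize over all of $\mathbf{O}$, $\mathbf{F}$, $\mathbf{Q}$. To make the restriction legitimate you must pin these variables through the instance data --- for example, choose deadlines $\tau_i^t$ and budgets $G_i^{\max}$ so that Constraint \eqref{pro_c4} forces $f_{j,i}^t$ from below while Constraint \eqref{pro_c10} caps it from above at the same value, set $n_j^{\text{core}} \geq |\mathcal{I}|$ so that \eqref{pro_c7} never binds, and place all nodes so the trajectory terms are inconsequential --- and then verify that the induced optimal per-pair utilities $U_{i,j}^t + U_{j,i}^t$ realize the prescribed GAP profits. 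You flag exactly this as ``delicate bookkeeping,'' but it is the step that makes the reduction valid at all, not a finishing touch: without it, the equivalence between the GAP instance and the constructed instance of $\mathbf{P}$ has not been established, because an optimizer of $\mathbf{P}$ could deviate in $\mathbf{F}$ and break the profit correspondence. Since the paper itself provides no reduction whatsoever, completing this gadget would make your argument strictly stronger than the published proof.
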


\begin{proof}
        Problem $\mathbf{P}$ involves binary variables (i.e., computation offloading strategy $\mathbf{O}$), continuous variables (i.e., computing resource allocation $\mathbf{F}$ and trajectory control $\mathbf{Q}$), and non-linear objective function. Consequently, problem $\mathbf{P}$ is a MINLP, which is also non-convex and NP hard ~\cite{boyd2004convex}. 
\end{proof}

%
%
\section{Algorithm}
\label{sec_jointOffloading}

\par To solve problem $\mathbf{P}$, we propose TJCCT, which is comprised of two-timescale optimization methods. Specifically, in the short timescale, a price-incentive trading model is constructed based on the bargaining mechanism to facilitate the negotiation between the MDs and the MEC servers for the on-demand computing resource allocation and pricing. Furthermore, to deal with the heterogeneity between the computation tasks of MDs and MEC servers, a many-to-one matching is established to stimulate the end-edge collaboration for mutual-satisfactory computation offloading. In the long timescale, based on the optimal strategies of computing resource allocation and computation offloading, UAV trajectory is optimized by using convex optimization.

%
%
\subsection{Short Timescale: Computing Resource Allocation and Computation Offloading}
\label{time_slot}

\par In each time slot, the strategies of computing resource allocation and computation offloading are decided.

%
%
\subsubsection{Computing Resource Allocation}
\label{sec_bargain}

\par In this subsection, given that the task $\mathcal{K}_i^t$ generated by MD $i$ will be processed by MEC server $j$ at time $t$, the optimal computing resource allocation is presented. Specifically, the metric of the unit price for computing resources is introduced to capture the task processing costs of MEC servers that should be covered by MDs. This can be viewed as a market where MDs purchase computing resources from suitable MEC servers for task processing. However, the price of the computation resource affects the utilities of both MDs and MEC servers. Therefore, to optimize the utilities of MDs and MEC servers, an appropriate pricing strategy needs to be incorporated with the resource allocation strategy, taking into account the requirements of MDs and the computing capabilities of the MEC servers.

\par  According to the analysis above, we construct a price-incentive trading model for MDs and MEC servers. Specifically, the negotiation is modeled as a Rubinstein bargaining model \cite{rubinstein1982perfect} where the MD $i$ with an offloading request for its task $\mathcal{K}_i^t$ acts as the buyer and the target MEC server $j$ acts as a seller. The objective of the negotiation is to determine the optimal strategy of computing resource allocation $f_{j,i}^{t^*}$ with satisfied price incentive $p_{j,i}^{t^*}$ during time slot $\delta$, which are detailed below.

\textbf{(1) Optimal Computing Resource Allocation}. Given any price of the computing resource $p_{j,i}^{t}$, the optimal computing resource allocation $f_{j,i}^{t^*}$ can be determined as Theorem \ref{opt_resource}.

\begin{theorem}
\label{opt_resource}
    The optimal computing resources that MD $i$ expects to request from the target MEC server $j$ to offload task $\mathcal{K}_i^t$ is determined as follows:
    \begin{equation}
        f_{j,i}^{t^*}=\frac{2w_i G_i^{\max}}{\vartheta(p_{j,i}^t)-\log\left(1+\tau_i^t\right) p_{j,i}^t\left(1-w_i\right)}.
    \end{equation}
\end{theorem}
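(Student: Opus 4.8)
The plan is to read Theorem~\ref{opt_resource} as an unconstrained single-variable maximization of the MD's QoE over the requested resource $f_{j,i}^t$, with the price $p_{j,i}^t$ and the offloading assignment (task $\mathcal{K}_i^t$ served by $j$) held fixed. Starting from the edge-offloading branch of $U_{i,n}^t$ in Eq.~\eqref{eq_MD_utility}, I would first isolate the terms that actually depend on $f_{j,i}^t$. Using $D_{i,j}^t = l_i^t/r_{i,j}^t + \mu_i^t/f_{j,i}^t$ from Eq.~\eqref{eq_edge_delay} and $E_{i,j}^t = P_i^t l_i^t/r_{i,j}^t$ from Eq.~\eqref{eq_energyMDMEC}, the transmission delay, the MD transmission energy, and the normalizing constants are all independent of $f_{j,i}^t$; hence the only dependence enters through the satisfaction term (via the computation delay $\mu_i^t/f_{j,i}^t$) and through the linear payment cost $(1-w_i)f_{j,i}^t p_{j,i}^t/G_i^{\max}$.

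Next I would establish that the objective is concave in $f_{j,i}^t$ on its admissible domain $\{f_{j,i}^t : D_{i,j}^t < 1+\tau_i^t\}$, so a stationary point is the unique global maximizer. Writing the satisfaction term as $\tfrac{w_i}{\log(1+\tau_i^t)}\log\!\big(C-\mu_i^t/f_{j,i}^t\big)$ with the constant $C = 1+\tau_i^t - l_i^t/r_{i,j}^t$, the inner map $f_{j,i}^t \mapsto C-\mu_i^t/f_{j,i}^t$ is increasing and concave, and a direct second-derivative computation shows the logarithm of this map stays concave on the domain; since the payment term is linear, the whole QoE is concave. This justifies solving $\partial U_{i,j}^t/\partial f_{j,i}^t = 0$.

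I would then carry out the first-order condition, which after clearing denominators reduces to a quadratic in $f_{j,i}^t$, namely $\log(1+\tau_i^t)(1-w_i)p_{j,i}^t C\,(f_{j,i}^t)^2 - \log(1+\tau_i^t)(1-w_i)p_{j,i}^t\mu_i^t\,f_{j,i}^t - w_i\mu_i^t G_i^{\max}=0$. Because the constant term is negative, exactly one root is positive; selecting it and rationalizing the quadratic-formula expression (with $a,b,c$ the coefficients of this quadratic, multiplying by the conjugate so that $f^{\ast} = 2c/(-b-\sqrt{b^2-4ac})$) collapses the answer into the stated form, where $\vartheta(p_{j,i}^t)$ is exactly the rescaled discriminant $\sqrt{\log^2(1+\tau_i^t)(1-w_i)^2(p_{j,i}^t)^2 + 4\log(1+\tau_i^t)(1-w_i)p_{j,i}^t C w_i G_i^{\max}/\mu_i^t}$.

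The main obstacle I anticipate is the bookkeeping in this last step: matching the rationalized root to the advertised $2w_iG_i^{\max}/\big(\vartheta(p_{j,i}^t)-\log(1+\tau_i^t)p_{j,i}^t(1-w_i)\big)$ requires dividing numerator and denominator by $\mu_i^t$ and recognizing that $\vartheta$ is precisely the scaled discriminant rather than a separately postulated quantity. A secondary check is feasibility---confirming that the selected root lies in the admissible domain and, where relevant, reconciling it with the budget constraint~\eqref{pro_c10}; I expect the interior stationary point to be the operative solution, with the constraint binding only when the unconstrained optimum would otherwise violate it.
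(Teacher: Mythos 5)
Your proposal is correct and follows essentially the same route as the paper's proof: establish concavity of $U_{i,j}^t$ in $f_{j,i}^t$ (the paper does this by a direct second-derivative computation, you by the concave-increasing composition rule, which is equivalent and cleaner), then solve the first-order condition as a quadratic and select the unique positive root, recognizing $\vartheta(p_{j,i}^t)$ as the scaled discriminant $\sqrt{b^2-4ac}/\mu_i^t$ so that the rationalized form $2c/\bigl(-b-\sqrt{b^2-4ac}\bigr)$ collapses to the stated expression. Your closing feasibility remark about the budget constraint \eqref{pro_c10} is a minor extra care the paper omits, but it does not change the argument.
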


\begin{proof}
    \par For MEC server $j$, the second-order derivative of $U_{i,j}^t$ with respect to $f_{j,i}^t$ can be calculated as:
     \begin{equation}
     \label{eq_pr_resource1}
           \frac{\partial^2 U_{i,j}^t}{\partial^2 f_{j,i}^t}=\frac{\mu_{i,t} w_i\left(2f_{j,i}^t\left(l_{i,t}/r_{i,j}^t-\tau_{i,t}+\frac{\mu_{i,t}}{f_{j,i}^t}-1\right)-\mu_{i,t}\right)}{({f_{j,i}^t})^4  \log \left(1+\tau_{i,t}\right) \left(l_{i,t}/r_{i,j}^t-\tau_{i,t}+\frac{\mu_{i,t}}{f_{j,i}^t}-1\right)^2}.
    \end{equation}
   Since it can be inferred from Eq. \eqref{eq_MD_utility} that $\frac{l_{i,t}}{r_{i,j}^t}<1+\tau_{i,t}-\frac{\mu_{i,t}}{f_{j,i}^t}<1+\tau_{i,t}-\frac{\mu_{i,t}}{2f_{j,i}^t}$, we can derive the following inequations, i.e., $\frac{l_{i,t}}{r_{i,j}^t}<1+\tau_{i,t}+\frac{\mu_{i,t}}{2f_{j,i}^t}-\frac{\mu_{i,t}}{f_{j,i}^t}$ $\Rightarrow \frac{l_{i,t}}{r_{i,j}^t}-\tau_{i,t}+\frac{\mu_{i,t}}{f_{j,i}^t}-1<\frac{\mu_{i,t}}{2f_{j,i}^t}\Rightarrow$ $2f_{j,i}^t\left(\frac{l_{i,t}}{r_{i,j}^t}-\tau_{i,t}+\frac{\mu_{i,t}}{f_{j,i}^t}-1\right)-\mu_{i,t}<0$. Therefore, it is obvious that $U_{i,j}^t$ is concave with respect to $f_{j,i}^t$ since $\frac{\partial^2 U_{i,j}^t}{\partial^2 f_{j,i}^t}<0$, which implies that the maximal value of $U_{i,j}^t$ is existed. Consequently, the optimal amount of computational resources required by MD $i$ can be determined by applying the first-order optimality condition, as follows:
     \begin{equation}
     \label{eq_pr_resource2}
         \begin{aligned} f_{j,i}^{t^*}=
    		\begin{cases} \frac{-2w_i G_i^{\max}}            {\vartheta(p_{j,i}^t)+\log\left(1+\tau_i^t\right) p_{j,i}^t \left(1-w_i\right)}, & \\ \frac{2w_i G_i^{\max}}{\vartheta(p_{j,i}^t)-\log\left(1+\tau_i^t\right) p_{j,i}^t\left(1-w_i\right)}, &\\  
    		\end{cases}
        \end{aligned}
     \end{equation}

 \noindent where	
     \begin{sequation}
     \label{eq_pr_resource2}
        \begin{aligned}
         &\vartheta(p_{j,i}^t)=\sqrt{\frac{p_{j,i}^t 	\log(1+\tau_i^t)(1-w_i)}{\mu_i^t}} \\
         &\sqrt{p_{j,i}^t \mu_i^t\log(1+\tau_i^t) (1-w_i)+4G_i^{\max} w_i\left(1+\tau_i^t-l_{i}^{t}/r_{i,j}^t\right)}.
         \end{aligned}
     \end{sequation} 

\noindent From Eq. \eqref{eq_pr_resource2}, it is clear that $\frac{-2w_i G_i^{\max}}{\vartheta(p_{j,i}^t)+\log\left(1+\tau_i^t\right) p_{j,i}^t \left(1-w_i\right)}<0
$ since $\vartheta(p_{j,i}^t)>0$, $w_i>0$,  $p_{j,i}^t>0$, and $\log(1+\tau_i^t)>0$. Therefore, the optimal amount of computation resource desired by MD $i$ can be finally determined as $f_{j,i}^{t^*}=\frac{2w_i G_i^{\max}}{\vartheta-\log\left(1+\tau_{i,t}\right) p_{j,i}^t\left(1-w_i\right)}$.
\end{proof}

\par \textbf{(2) Satisfied Computing Resource Pricing.} Given any allocation of the computing resource $f_{j,i}^{t}$ for task $\mathcal{K}_i^t$, the satisfied price of computing resource $p_{j,i}^{t^*}$ can be determined by the following steps.

\par \textbf{First}, the upper bound and lower bound of the unit price of computing resource can be derived as Lemma \ref{lemma_priceRange}.

\begin{lemma}
\label{lemma_priceRange}
    \par To achieve a successful negotiation between MD $i$ and MEC server $j$ regarding the resource allocation and pricing of task $\mathcal{K}_i^t$, the unit price of the computing resource should be bounded by $\underline{p}_{j,i}^t \leq p_{j,i}^t\leq \overline{p}_{j,i}^t$, where
      \begin{equation}
      \label{eq_pro_price_low}
          \begin{aligned}
            \underline{p}_{j,i}^t =\frac{(1-w_j)E_{j,i}^tp_j^{\max}f_j^{\max}}{w_jE_j^{\max}f_{j,i}^t},
          \end{aligned}
      \end{equation}
      \begin{equation}
      \label{eq_pro_price_low}
          \begin{aligned}
            \overline{p}_{j,i}^t=\left(\frac{w_i\log\left(1+\tau_i^t-D_{i,j}^t\right)}{ (1-w_i)\log\left(1+\tau_i^t\right)}-\frac{P_i^t l_i^t}{r_{i,j}^t \tau_i^t}\right) \frac{G_i^{\max}}{f_{j,i}^t}.
          \end{aligned}
  \end{equation}
\end{lemma}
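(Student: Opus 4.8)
The plan is to obtain each bound from the \emph{individual rationality} (participation) constraint of the corresponding party in the Rubinstein bargaining game: a trade is struck only when both the seller (MEC server $j$) and the buyer (MD $i$) obtain non-negative utility from the agreed allocation--price pair. The lower bound will follow from the server's constraint and the upper bound from the MD's constraint, so that the admissible interval $[\underline{p}_{j,i}^t,\overline{p}_{j,i}^t]$ is precisely the set of prices under which neither party is made worse off by participating. Because each relevant utility is affine in $p_{j,i}^t$, the two constraints reduce to one-sided inequalities that can be solved in closed form.

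First, I would establish the lower bound $\underline{p}_{j,i}^t$ from the server revenue $U_{j,i}^t$ in Eq.~\eqref{eq_MEC_utility}. Imposing $U_{j,i}^t\geq 0$ requires the normalized reward $w_j f_{j,i}^t p_{j,i}^t/(f_j^{\max}p_j^{\max})$ to at least cover the normalized energy cost $(1-w_j)E_{j,i}^t/E_j^{\max}$. Since $w_j>0$ and $f_{j,i}^t>0$, this inequality can be rearranged directly for $p_{j,i}^t$, giving $p_{j,i}^t\geq (1-w_j)E_{j,i}^t p_j^{\max} f_j^{\max}/(w_j E_j^{\max} f_{j,i}^t)=\underline{p}_{j,i}^t$. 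The interpretation is that any price below $\underline{p}_{j,i}^t$ forces the server to operate at a loss, so it declines the trade.

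Next, I would establish the upper bound $\overline{p}_{j,i}^t$ from the MD's QoE $U_{i,j}^t$ in the edge-offloading branch of Eq.~\eqref{eq_MD_utility}. Imposing the participation condition $U_{i,j}^t\geq 0$, I would isolate the payment term $f_{j,i}^t p_{j,i}^t/G_i^{\max}$ against the satisfaction term and the normalized uploading-energy term, substitute $E_{i,j}^t=P_i^t l_i^t/r_{i,j}^t$ from Eq.~\eqref{eq_energyMDMEC}, and solve for $p_{j,i}^t$. Tracking the normalization of the energy cost by the MD's energy constraint $\tau_i^t$ then yields exactly $\overline{p}_{j,i}^t=\bigl(w_i\log(1+\tau_i^t-D_{i,j}^t)/((1-w_i)\log(1+\tau_i^t))-P_i^t l_i^t/(r_{i,j}^t\tau_i^t)\bigr)G_i^{\max}/f_{j,i}^t$, and any price above this threshold drives the MD's utility negative so that it refuses to offload.

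Combining the two one-sided constraints gives the feasible pricing interval $\underline{p}_{j,i}^t\leq p_{j,i}^t\leq \overline{p}_{j,i}^t$, within which a mutually beneficial agreement is attainable. I expect the main obstacle to lie not in the algebra, which is a routine rearrangement of inequalities that are affine in $p_{j,i}^t$, but in correctly formalizing the two participation constraints from the utility definitions and in consistently tracking the normalization factors (in particular the energy-cost term normalized against $\tau_i^t$ in $\overline{p}_{j,i}^t$) so that both bounds emerge in precisely the stated closed form. I would additionally note that non-emptiness of the interval, i.e. $\underline{p}_{j,i}^t\leq\overline{p}_{j,i}^t$, is what guarantees the bargaining admits a feasible price and hence a successful negotiation.
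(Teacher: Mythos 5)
Your proposal is correct and follows essentially the same route as the paper: the paper's proof likewise imposes the participation conditions $U_{j,i}^t>0$ and $U_{i,j}^t>0$ and obtains the two bounds by substituting Eqs.~\eqref{eq_MEC_utility} and \eqref{eq_MD_utility} and rearranging, with the server's constraint giving $\underline{p}_{j,i}^t$ and the MD's constraint (with $E_{i,j}^t=P_i^t l_i^t/r_{i,j}^t$ normalized by $\tau_i^t$) giving $\overline{p}_{j,i}^t$. Your added observation that non-emptiness of the interval $[\underline{p}_{j,i}^t,\overline{p}_{j,i}^t]$ is what makes a successful negotiation possible is a reasonable refinement the paper leaves implicit.
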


\begin{proof}
 	The price incentive should ensure that $U_{i,j}^t > 0$ and $U_{j,i}^t > 0$. Otherwise, the negotiation between MD $i$ and MEC server $j$ would be failed. Therefore, $\underline{p}_{j,i}^t$ and $\overline{p}_{j,i}^t$ can be obtained by substituting Eqs. \eqref{eq_MD_utility} and \eqref{eq_MEC_utility} into the above inequations.
\end{proof}

\par \textbf{Second}, we present the optimal negotiation between an MD and an MEC server by employing the Rubinstein bargaining model. According to Lemma \ref{lemma_priceRange}, the surplus of the computing resource price can be obtained as $\pi_{j,i}^t = \overline{p}_{j,i}^t - \underline{p}_{j,i}^t$. Therefore, the negotiation between MD $i$ and MEC server $j$ on the price of the computation resources can be modeled as the bargaining over the surplus $\pi_{j,i}^t$. Specifically, MD $i$ and MEC server $j$ take turns making offers about how to divide the surplus. Apparently, both the seller and buyer are subject to impatience and prefer a quick consensus on the division to a postponed trading delay because the utilities are discounted in the future. Consequently, the discount factors \cite{rubinstein1982perfect} of MD $i$ and MEC server $j$ are used to evaluate the patience with the negotiation delay, which are given as:
  \begin{subequations}
	\label{eq_discount_factor}
	\begin{alignat}{1}
            &\lambda_{i}^t=1-\frac{l_i^t}{r_{i,j}^t \tau_i^t}, \label{eq_discount_factor_A}\\
            &\lambda_{j}^t=1-\frac{\mu_i^t}{f_{j,i}^t\tau_i^t}. \label{eq_discount_factor_B}
	\end{alignat}
    \end{subequations}
 
 \noindent Eqs. \eqref{eq_discount_factor_A} and \eqref{eq_discount_factor_B} indicates that MD $i$ could be impatient with the long delay of task uploading, resulting in lower $\lambda_{i}^t$. Moreover, MEC server $j$ could be impatient with the long delay of task execution, resulting in the lower value of $\lambda_{j}^t$. In addition, trading parties could be more patient with the longer tolerable delay of the task $\tau_i^t$, leading to higher $\lambda_{i}^t$ and $\lambda_{j}^t$. 

\par \textbf{Third}, to derive the satisfied partition of the surplus $\pi_{j,i}^t$, we introduce the concepts of Nash equilibrium (NE) and subgame perfect Nash equilibrium (SPE) \cite{rubinstein1982perfect} as follows.

\begin{definition}
    \label{def_partition}
    NE. Any partition $\xi$ can be an NE outcome of the negotiation if it satisfies the following conditions: MD $i$ (or MEC server $j$) always proposes $\xi=\left(\xi_{i},\xi_{j}\right)$ and only accepts the offers $\xi^{\prime}$ where $\xi_{i}^{\prime}>\xi_{i}$ (or $\xi_{j}^{\prime}>\xi_{j}$).
\end{definition}

\begin{definition}
    \label{def_spe}
    SPE. The partition $\xi^*=\left(\xi_{i}^*,\xi_{j}^*\right)$ is an SPE if $\xi^*$ induces an NE each time a new offer is made and rejected.
\end{definition}

\noindent According to Definitions \ref{def_partition} and \ref{def_spe}, the satisfied partition of the surplus $\pi_{j,i}^t$ can be obtained by Lemma \ref{the_opt_partition}.

\begin{lemma}
	\label{the_opt_partition}
	\par The bargaining model has a unique SPE. In the period $T^b$ when MD $i$ makes a proposal, the satisfied partitions of the surplus $\pi_{j,i}^t$ for the computing resource price are given as: 
    \begin{subequations}
    \label{eq_opt_partition_MD}
        \begin{alignat}{1}
            & \xi_{i,i}^{t^*}=\lambda_{i}^t-\frac{\left(1-\lambda_{i}^t\right)\left(1-\left(\lambda_{i}^t\lambda_{j}^t\right)^{\lceil \frac{T^b}{2}\rceil}\right)}{1-\lambda_{i}^t\lambda_{j}^t}, \label{eq_opt_partition_MDA}\\
            &\xi_{j,i}^{t^*}=\frac{\left(1-\lambda_{i}^t\right)\left(2-\lambda_{i}^t\lambda_{j}^t-\left(\lambda_{i}^t\lambda_{j}^t\right)^{\lceil \frac{T^b}{2}\rceil}\right)}{1-\lambda_{i}^t\lambda_{j}^t}, \label{eq_opt_partition_MDB}
        \end{alignat}
    \end{subequations}
    
    \noindent In the period $T^b$ when MEC server $j$ makes a proposal, the satisfied partitions are given as:
      \begin{subequations}
    	\label{eq_opt_partition_MEC}
    	\begin{alignat}{1}
                &\xi_{i,j}^{t^*}=\frac{\left(1-\lambda_{j}^t\right)\left(1-\left(\lambda_{i}^t \lambda_{j}^t\right)^{\lceil{\frac{T^{b}}{2}\rceil}}\right)}{1-\lambda_{i}^t \lambda_{j}^t}, \label{eq_opt_partition_MECA}\\
            &\xi_{j,j}^{t^*}=\frac{\lambda_{j}^t\left(1-\lambda_{i}^t\right)-\left(1-\lambda_{j}^t\right)\left(\lambda_{i}^t \lambda_{j}^t\right)^{\lceil{\frac{T^{b}}{2}\rceil}}}{1-\lambda_{i}^t \lambda_{j}^t}. \label{eq_opt_partition_MECB}
    	\end{alignat}
        \end{subequations}
\end{lemma}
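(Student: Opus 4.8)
The plan is to recognize that Lemma~\ref{the_opt_partition} is exactly the subgame-perfect-equilibrium characterization of a \emph{finite-horizon} alternating-offers bargaining game with $T^b$ rounds and per-round discount factors $\lambda_i^t$ and $\lambda_j^t$ for MD $i$ and MEC server $j$, so the natural tool is backward induction on the game tree. First I would normalize the surplus $\pi_{j,i}^t$ to unity (dividing it out) and establish the terminal condition: in the last bargaining round the responder's continuation value is zero, so the proposer offers the responder nothing and, under the acceptance-when-indifferent convention, the unique best response is to accept, giving the proposer the entire share. This pins down the SPE shares in the final round and serves as the base case of the induction.

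Next I would set up the one-round recursion implied by Definitions~\ref{def_partition} and~\ref{def_spe}. At any round the proposer offers the responder \emph{exactly} its discounted continuation value --- namely $\lambda_j^t$ (respectively $\lambda_i^t$) times the share the responder would secure by proposing in the following round --- and keeps the remainder, so that the responder is indifferent and accepts. Because the identity of the proposer alternates, composing two consecutive rounds yields a two-step linear recurrence whose multiplier is the product $\lambda_i^t\lambda_j^t$. I would then solve this recurrence in closed form: unrolling it from the terminal round back produces a finite geometric series in $\lambda_i^t\lambda_j^t$, whose sum contributes the denominator $1-\lambda_i^t\lambda_j^t$ and the exponent $\lceil T^b/2\rceil$, the latter counting how many proposing turns each party receives and therefore depending on the parity of $T^b$ (hence the ceiling). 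Substituting back and simplifying separately for the case in which MD $i$ proposes and the case in which MEC server $j$ proposes yields the four expressions in \eqref{eq_opt_partition_MD} and \eqref{eq_opt_partition_MEC}; a useful consistency check along the way is that the paired shares sum to the full surplus.

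Finally, uniqueness follows because the game has perfect information and finitely many rounds: backward induction selects a single optimal action at every node --- the proposer's indifference-inducing offer and the responder's acceptance --- so by the one-shot-deviation principle the SPE is unique, which also establishes that the NE supporting these offers is subgame perfect.

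The main obstacle I anticipate is bookkeeping rather than conceptual: correctly tracking the \emph{asymmetric}, alternating discounting through the two-step recurrence and matching the parity-dependent count of proposing rounds to the term $\lceil T^b/2\rceil$, while keeping the geometric-sum algebra consistent across all four labelled shares. The uniqueness step is routine for a finite perfect-information game, but I would state the tie-breaking rule (the responder accepts when indifferent) explicitly, since without it the responder could profitably reject and the claim of a \emph{unique} SPE would fail.
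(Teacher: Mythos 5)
Your proposal is correct and follows essentially the same route as the paper's proof: backward induction on the finite alternating-offers game, starting from the terminal round where the proposer claims the whole (normalized) surplus, offering the responder its discounted continuation value at each step, and unrolling the resulting two-round recursion into a finite geometric series in $\lambda_i^t\lambda_j^t$ whose closed form yields the denominator $1-\lambda_i^t\lambda_j^t$ and the exponent $\lceil T^b/2\rceil$. Your explicit statement of the accept-when-indifferent tie-breaking rule and the one-shot-deviation argument for uniqueness is in fact more careful than the paper, which asserts uniqueness of the SPE without proof and only works out the even-$T^b$ case, deferring the remaining cases to "a similar way."
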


\begin{proof}
\label{proof_opt_partition}
	Suppose that MD $i$ makes a proposal in period $T^{b}\in \{t, t+\Delta t\}$ where the negotiation ends with a satisfactory outcome of partition. Therefore, it can be inferred that if MD $i$ proposes $\left(1,0\right)$ in period $T^b$, MEC server $j$ proposes $\left(\lambda_i^t,1-\lambda_i^t\right)$ in period $T^b-1$, etc. As a result, if $T^b$ is even, the partition obtained by MD $i$ in the first period can be calculated by:
      \begin{equation}
      \label{eq_opt_partition}
          \begin{aligned}
          {\xi_{i,j}^{t^*}}&=\lambda_{i}^t\left(1-\lambda_{j}^t\left(1-\lambda_i^t(\ldots)\right)\right)=\lambda_i^t-\lambda_i^t \lambda_{j}^t+(\lambda_i^t)^2 \lambda_{j}^t\\
          &-\ldots+\left(\lambda_i^t\right)^{\frac{T}{2}} \left(\lambda_{j}^t\right)^{\frac{T}{2}-1} =\lambda_i^t-\left(1-\lambda_i^t\right)\sum_{\mathbf{t}=0}^{\frac{T}{2}-1}\left(\lambda_i^t \lambda_{j}^t\right)^{\mathbf{t}}\\
          &=\lambda_i^t-\frac{\left(1-\lambda_i^t\right)\left(1-\left(\lambda_i^t\lambda_{j}^t\right)^{\frac{T^b}{2}}\right)}{1-\lambda_i^t\lambda_{j}^t}.
          \end{aligned}
      \end{equation}

\noindent For the cases that $T^b$ is odd and MEC server $j$ makes an offer, ${\xi_{i,j}^{t^*}}$ and ${\xi_{j,i}^{t^*}}$ can be derived in a similar way.
\end{proof}
\vspace{-0.5em}
\par \textbf{Fourth}, according to Lemmas \ref{lemma_priceRange} and \ref{the_opt_partition}, the optimal price of the computing resource can be derived by Theorem \ref{theo_pricing}.

\begin{theorem}
\label{theo_pricing}	
     \par The satisfied outcome for the price of the computation resource $p_{j,i}^{t^*}$ can be obtained as:
    \begin{subequations}
	\label{eq_opt_price}
	\begin{alignat}{1}
            & p_{j,i}^{t^*}=\overline{p}_{j,i}^t-\pi_{j,i}^t {\delta_{i,i}^{t^*}}, \label{eq_opt_price_A}\\
            &  p_{j,i}^{t^*}=\overline{p}_{j,i}^t-\pi_{j,i}^t  \delta_{i,j}^{t^*}, \label{eq_opt_price_B}
	\end{alignat}
    \end{subequations}

\noindent where Eq. \eqref{eq_opt_price_A} indicates the optimal price of the computing resource obtained in the period when MD $i$ makes an offer, and Eq. \eqref{eq_opt_price_B} indicates the optimal price of the computing resource obtained in the period when MEC server $j$ makes an offer.
\end{theorem}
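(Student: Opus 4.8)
The plan is to translate the abstract surplus partition delivered by Lemma \ref{the_opt_partition} into a concrete price lying inside the feasible interval of Lemma \ref{lemma_priceRange}. First I would interpret the two bounds as reservation prices: $\overline{p}_{j,i}^t$ is the highest price the MD will tolerate (any larger value drives $U_{i,j}^t<0$), and $\underline{p}_{j,i}^t$ is the lowest price the MEC server will accept (any smaller value drives $U_{j,i}^t<0$), exactly as established in the proof of Lemma \ref{lemma_priceRange}. Hence $\pi_{j,i}^t=\overline{p}_{j,i}^t-\underline{p}_{j,i}^t$ is precisely the total surplus over which the two agents bargain, and any agreed price $p_{j,i}^t$ splits it: the MD captures $\overline{p}_{j,i}^t-p_{j,i}^t$ by paying below its ceiling, while the MEC server captures $p_{j,i}^t-\underline{p}_{j,i}^t$ by receiving above its floor, with the two pieces summing to $\pi_{j,i}^t$.

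Next I would invoke the unique SPE of Lemma \ref{the_opt_partition}. Under that equilibrium the MD secures the normalized fraction $\xi_{i,i}^{t^*}$ of the surplus (given in \eqref{eq_opt_partition_MDA}) when it is the proposer in period $T^b$, and the fraction $\xi_{i,j}^{t^*}$ (given in \eqref{eq_opt_partition_MECA}) when the MEC server is the proposer. Identifying $\delta_{i,i}^{t^*}$ and $\delta_{i,j}^{t^*}$ in the statement with these equilibrium shares, I would equate the MD's captured surplus with its SPE share, i.e. $\overline{p}_{j,i}^t-p_{j,i}^{t^*}=\pi_{j,i}^t\,\xi_{i,i}^{t^*}$ in the MD-proposes case and $\overline{p}_{j,i}^t-p_{j,i}^{t^*}=\pi_{j,i}^t\,\xi_{i,j}^{t^*}$ in the MEC-proposes case. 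Solving each for $p_{j,i}^{t^*}$ directly yields Eqs. \eqref{eq_opt_price_A} and \eqref{eq_opt_price_B}, and uniqueness of the resulting price is inherited from the uniqueness of the SPE partition.

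The main obstacle is getting the direction of this mapping right, namely verifying that the MD's share of the $[0,1]$-normalized pie corresponds to a reduction of the price below the \emph{upper} bound $\overline{p}_{j,i}^t$ rather than an increase above the lower bound. I would settle this by differentiating the utilities in \eqref{eq_MD_utility} and \eqref{eq_MEC_utility} with respect to $p_{j,i}^t$: since $\partial U_{i,j}^t/\partial p_{j,i}^t=-(1-w_i)f_{j,i}^t/G_i^{\max}<0$ and $\partial U_{j,i}^t/\partial p_{j,i}^t=w_j f_{j,i}^t/(f_j^{\max}p_j^{\max})>0$, a lower price strictly benefits the buyer and a higher price strictly benefits the seller. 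This monotonicity confirms that the MD's surplus share must be subtracted from $\overline{p}_{j,i}^t$, so that a more patient MD (larger $\lambda_i^t$, hence larger $\xi_{i,\cdot}^{t^*}$) indeed negotiates a lower price, which also serves as a sanity check on the economic consistency of the two expressions.
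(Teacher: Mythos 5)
Your proposal is correct and follows essentially the same route as the paper's proof: both identify $\pi_{j,i}^t=\overline{p}_{j,i}^t-\underline{p}_{j,i}^t$ as the bargaining surplus, invoke the SPE shares of Lemma \ref{the_opt_partition} (with $\delta_{i,i}^{t^*},\delta_{i,j}^{t^*}$ read as $\xi_{i,i}^{t^*},\xi_{i,j}^{t^*}$), and recover the price by subtracting the MD's share from the upper bound $\overline{p}_{j,i}^t$. The only cosmetic difference is that the paper confirms consistency by showing the MD's bid $\overline{p}_{j,i}^t-\pi_{j,i}^t\xi_{i,i}^{t^*}$ equals the server's ask $\underline{p}_{j,i}^t+\pi_{j,i}^t\xi_{j,i}^{t^*}$ via $\xi_{j,i}^{t^*}=1-\xi_{i,i}^{t^*}$, whereas you justify the direction of the split through the monotonicity of $U_{i,j}^t$ and $U_{j,i}^t$ in $p_{j,i}^t$ — two equivalent sanity checks on the same construction.
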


\begin{proof}
\label{eq_optPrice}
    \par According to Theorem \ref{the_opt_partition}, the optimal partitions of the price difference obtained by MD $i$ and MEC server $j$ in the period when MD $i$ makes an offer can be calculated as:
    \begin{subequations}
	\label{eq_pro_partition}
	\begin{alignat}{1}
            &\varpi_{i,i}^{t^*}= \pi_{j,i}^t\xi_{i,i}^{t^*}, \label{eq_pro_partitionA}\\
            &\varpi_{j,i}^{t^*}=\pi_{j,i}^t\xi_{j,i}^{t^*}. \label{eq_pro_partitionB}
	\end{alignat}
    \end{subequations}
    
    \noindent Therefore, in the period when MD $i$ makes a proposal, MD $i$ decides its optimal bidding price as $p_{i,\text{bid}}^{t}=\overline{p}_{j,i}^t-\varpi_{i,i}^{t^*}=\overline{p}_{j,i}^t-\pi_{j,i}^t \xi_{i,i}^{t^*}$. Similarly, MEC server $j$ decides its optimal asking price as $p_{j,\text{ask}}^{t}=\underline{p}_{j,i}^t+\varpi_{j,i}^{t^*}=\underline{p}_{j,i}^t+\pi_{j,i}^t \xi_{j,i}^{t^*}$. It can be easily proved that $p_{i,\text{bid}}^{t}=p_{j,\text{ask}}^{t}$ since $\pi_{j,i}^{t}=\overline{p}_{j,i}^t-\underline{p}_{j,i}^t$ and $\xi_{j,i}^{t^*}=1-\xi_{i,i}^{t^*}$. Therefore, a deal on the price of the allocated resource can be obtained as $p_{j,i}^{t^*}=p_{i,\text{bid}}^{t}=p_{j,\text{ask}}^{t}$. Combining  $\xi_{i,i}^{t^*}$ into the formula of $p_{i,\text{bid}}^{t}$,  $p_{j,i}^{t^*}$ can be obtained as Eq. (\ref{eq_opt_price_A}). The proof of Eq. (\ref{eq_opt_price_B}) can be obtained similarly.
\end{proof}

\par \textbf{(3) Optimal Computing Resource Allocation with Price Incentive.} The optimal strategies of computing resource allocation with price incentive can be concluded from Theorems \ref{opt_resource} and \ref{theo_pricing}, as given in Corollary \ref{cor_deal}.

\begin{corollary}
\label{cor_deal}
\par A trading consensus can be reached on the amount and unit price of the allocated computing resources: 
\begin{sequation}
	\label{eq_optAllo}
	f_{j,i}^{t^*}=\frac{2w_i G_i^{\max}}{\vartheta(p_{j,i}^{t^*})-\log\left(1+\tau_i^t\right) p_{j,i}^{t^*}\left(1-w_i\right)},
\end{sequation}
\begin{subnumcases}{\label{eq_optAlloPrice}p_{j,i}^{t^*}=}
	$$\overline{p}_{j,i}^t-\Delta p_{j,i}^t \xi_{i,i}^{t^*}$$,  \label{eq_optAlloPriceA}\\
	$$\overline{p}_{j,i}^t-\Delta p_{j,i}^t  \xi_{i,j}^{t^*}$$.\label{eq_optAlloPriceB}
\end{subnumcases}
\end{corollary}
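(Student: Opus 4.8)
The plan is to treat Corollary \ref{cor_deal} as the fixed point that simultaneously satisfies the buyer's best response from Theorem \ref{opt_resource} and the seller--buyer pricing consensus from Theorem \ref{theo_pricing}, so the proof is essentially a substitution argument rather than a fresh optimization. First I would recall that Theorem \ref{opt_resource} furnishes, for \emph{any} admissible unit price $p_{j,i}^t$, the MD's utility-maximizing demand $f_{j,i}^{t^*}(p_{j,i}^t)$ through the closed-form expression involving $\vartheta(p_{j,i}^t)$; this is the buyer's best-response curve. Independently, Theorem \ref{theo_pricing} fixes the bargaining outcome $p_{j,i}^{t^*}$ as the unique SPE price of the Rubinstein game over the surplus $\pi_{j,i}^t = \overline{p}_{j,i}^t - \underline{p}_{j,i}^t$.

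The core step is to observe that at the SPE the two sides genuinely agree: in the proof of Theorem \ref{theo_pricing} it was shown that $p_{i,\text{bid}}^t = p_{j,\text{ask}}^t$ because $\xi_{j,i}^{t^*} = 1 - \xi_{i,i}^{t^*}$, so the single consensus price $p_{j,i}^{t^*}$ is well defined and lies in $[\underline{p}_{j,i}^t, \overline{p}_{j,i}^t]$ by Lemma \ref{lemma_priceRange}. Substituting this consensus price into the best-response formula of Theorem \ref{opt_resource} immediately yields Eq. \eqref{eq_optAllo}, while Eqs. \eqref{eq_optAlloPriceA} and \eqref{eq_optAlloPriceB} are just Eqs. \eqref{eq_opt_price_A} and \eqref{eq_opt_price_B} restated under the identification $\Delta p_{j,i}^t \equiv \pi_{j,i}^t$, with the MD-proposer case using $\xi_{i,i}^{t^*}$ and the server-proposer case using $\xi_{i,j}^{t^*}$.

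The subtlety I would flag as the main obstacle is the mutual dependence hidden in the notation: the bounds $\overline{p}_{j,i}^t$, $\underline{p}_{j,i}^t$ and the discount factors $\lambda_i^t,\lambda_j^t$ of Eq. \eqref{eq_discount_factor} all depend on $f_{j,i}^t$, while the allocation $f_{j,i}^{t^*}$ in turn depends on the price. Hence the consensus is really a joint fixed point of the price map and the allocation map, and a clean proof should make explicit that iterating best response (allocation given price) and bargaining (price given allocation) is consistent---either by arguing that the pair $(f_{j,i}^{t^*}, p_{j,i}^{t^*})$ solves both defining relations simultaneously, or by appealing to the uniqueness of the SPE to pin down the price independently of the remaining degrees of freedom. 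I would therefore close the argument by verifying that the substituted pair satisfies both equations together with the feasibility bounds of Lemma \ref{lemma_priceRange}, so that the stated trading consensus is indeed attainable.
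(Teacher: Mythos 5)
Your proposal matches the paper's own reasoning: Corollary \ref{cor_deal} is obtained there exactly by combining Theorem \ref{opt_resource} (the best-response allocation at any given price) with Theorem \ref{theo_pricing} (the SPE consensus price), under the identification of $\Delta p_{j,i}^t$ with the surplus $\pi_{j,i}^t$. The mutual dependence you flag between $f_{j,i}^{t^*}$ and $p_{j,i}^{t^*}$ is precisely what the paper resolves operationally through the alternating iteration of Algorithm \ref{algo_allo_price}, so your fixed-point framing is a sound, slightly more explicit account of the same argument.
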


\par The trading contract between MD $i$ and MEC server $j$ is presented in Definition \ref{def_pricingRule}. According to Corollary \ref{cor_deal} and Definition \ref{def_pricingRule}, an alternative algorithm that iteratively optimizes the strategies of computing resource allocation and pricing is described in Algorithm \ref{algo_allo_price}. Specifically, the optimal strategy of computing resource allocation is initially set as the available resources of MEC server $j$ (line 2). Furthermore, in each iteration, MD $i$ and MEC server $j$ negotiate the satisfied price of the computing resource based on the trading contract (line 6). Then, they update the optimal strategy of computing resource allocation (line 7). The steps above are iterated until a consensus is reached. 

\begin{definition}
	\label{def_pricingRule}
	Trading contract. The trading amount and price are determined based on the following terms.
	\begin{itemize}
		\item If $U_{i,j}^t>0$, $U_{j,i}^t>0$, a consensus is reached on the trading amount and price based on Eqs. \eqref{eq_optAllo} and \eqref{eq_optAlloPrice}.
		
		\item If $U_{i,j}^t>0$, $U_{j,i}^t<0$,  MD $i$ makes an offer of the unit price of computing resources based on Eq. \eqref{eq_optAlloPriceA}.
		
		\item If $U_{i,j}^t<0$, $U_{j,i}^t>0$, MEC server $j$ makes an offer of the unit price of computing resources based on Eq. \eqref{eq_optAlloPriceB}.
		
		\item If $U_{i,j}^t<0$, $U_{j,i}^t<0$,  either MD $i$ or MEC server $j$ can make an offer.
	\end{itemize}
\end{definition}

\begin{algorithm}[]	
	\label{algo_allo_price}	
	\SetAlgoLined
	\KwIn{Task $\mathcal{K}_i^t$ of MD $i$ and MEC server $j$}
	\KwOut{The optimal resource allocation with satisfied price $(f_{j,i}^{t^*}, p_{j,i}^{t^*})$ in time slot $t$}
	\textbf{ Initialization:} 
	$U_{i,j}^t= 0$; $U_{j,i}^t= 0$; $\iota=0$; $\iota^{\max}= 100$\;
	Set the optimal resource allocation as $f_{j,i}^{t^*} = f_j^{\text{avl}}$\;
	\While{$\iota \leq \iota^{\max}$}
	{
		Update $p_{j,i}^{t^*}$ based on Eq. \eqref{eq_optAlloPrice}\;
		Calculate $U_{i,j}^t$, $U_{j,i}^t$ based on Eqs. \eqref{eq_MD_utility} and \eqref{eq_MEC_utility}\;
		Perform the trading contract based on Definition \ref{def_pricingRule}\;
		Update $f_{j,i}^{t^*}$ based on Eq. \eqref{eq_optAllo}\;
		$\iota = \iota +1$\;
	}
	\Return{$(f_{j,i}^{t^*}, p_{j,i}^{t^*})$}\;
	\caption{Computing Resource Allocation.}
\end{algorithm}	

%
%
\subsubsection{Computation Offloading}
\label{sec_matching}

\par Matching mechanism offers an efficient tool to construct the mutual-beneficial relationship between two sets of entities with heterogeneous preferences. This motivates us to construct the matching between the computation tasks of MDs and MEC servers to alleviate the demand-supply heterogeneity. By doing so, the MDs and MEC servers can achieve mutual-beneficial computation offloading results of satisfied QoE and high computing resource utilization. Denote the set of computation tasks that have not begun execution in time slot $t$ as $\mathcal{K}_{\text{req}}^t=\{\mathcal{K}_i^{\mathbf{t}}|i\in \mathcal{I}, \mathbf{t}\in \mathcal{T}\}$, where $\mathbf{t}$ is the generation time of the computation task. Then the offloading strategy for these computation tasks in each time slot is decided using a many-to-one matching mechanism, which is defined by Definition \ref{def_match}.

\begin{definition}
	\label{def_match}
	The current matching is defined as a triplet of $(\mathcal{M}^t,\mathcal{L}^t, \Pi^t)$: 
	\begin{itemize}[]
		\item $\mathcal{M}^t=\left(\mathcal{K}_{\text{req}}^{t}, \{ b \} \cup \mathcal{U}\right)$ consists of the tasks of MDs and the MEC servers.
		\item $\mathcal{L}^t=\left(\mathcal{L}_{\mathcal{K}_i^{\mathbf{t}}}^t,\mathcal{L}_{j}^t\right)$ consists of the preference lists of the tasks and MEC servers. Each task $\mathcal{K}_i^{\mathbf{t}}\in\mathcal{K}_{\text{req}}^{t}$ has a descending ordered preferences over the MEC servers, i.e., $\mathcal{L}_{\mathcal{K}_i^{\mathbf{t}}}^t=\{j|j\in \{ b \} \cup \mathcal{U}, j\succ_{\mathcal{K}_i^{\mathbf{t}}}{j^\prime}\}$, where $\succ_{\mathcal{K}_i^{\mathbf{t}}}$ is the preference of task $\mathcal{K}_i$ towards the  servers. Moreover, each MEC server $j$ has a descending ordered preference list over the tasks, i.e., $\mathcal{L}_j^t=\{\mathcal{K}_i^{\mathbf{t}}\in \mathcal{K}_{\text{req}}^{t}, \mathcal{K}_i^{\mathbf{t}} \succ_{j} {\mathcal{K}_i^{\mathbf{t}}}^{\prime}\}$. 
	
		\item $\Pi^t\subseteq \mathcal{K}_{\text{req}}^{t} \times \{ b \} \cup \mathcal{U}$ denotes the matching between the tasks and MEC servers. Each task $\mathcal{K}_i^{\mathbf{t}}\in \mathcal{K}_{\text{req}}^{t}$ can be matched with at most one MEC server, i.e., $\Pi_{\mathcal{K}_i^{\mathbf{t}}}^t\in \{ b \} \cup \mathcal{U}$, while each MEC server $j$ can be matched with multiple tasks, i.e., $\Pi_j^t\subseteq \mathcal{K}_{\text{req}}^{t}$.
	\end{itemize}
\end{definition}

\begin{algorithm}[]	
	\label{algo_matching}	
	\SetAlgoLined
	\KwIn{Tasks $\mathcal{K}_{\text{req}}^t=\{\mathcal{K}_i^{\mathbf{t}}|i\in \mathcal{I}, t\in \mathcal{T}\}$, and MEC servers $\{ b \} \cup \mathcal{U}$}
	\KwOut{The optimal matching list $\Pi^{t^*}$, offloading $\mathbf{O}^{t^*}$, and computing resource allocation $\mathbf{F}^{t^*}$}
	\textbf{ Initialization:} 
	$\mathcal{K}_{\text{rej}}^t=\mathcal{K}_{\text{req}}^{t}$, $\Pi^{t^*}=\emptyset$\;
	\For{$\mathcal{K}_i^{\mathbf{t}}\in \mathcal{K}_{\text{req}}^{t}$}
	{
		\For {$j\in \{b,\mathcal{U}\}$}
		{
			Call Algorithm \ref{algo_allo_price} to obtain $\big(f_{j,i}^{\mathbf{t}^*},p_{j,i}^{\mathbf{t}^*}\big)$\;
			Calculate $V_{\mathcal{K}_i^{\mathbf{t}},j}^t=U_{i,j}^{\mathbf{t}}$,$\ V_{j,\mathcal{K}_i^{\mathbf{t}}}=U_{j,i}^{\mathbf{t}}$\;
			$V_{\mathcal{K}_i^{\mathbf{t}},j}^t>V_{\mathcal{K}_i^{\mathbf{t}},j^{\prime}}^t \Leftrightarrow j\succ_{\mathcal{K}_i^{\mathbf{t}}} j^{\prime}, \ \mathcal{L}_{\mathcal{K}_i^{\mathbf{t}}}=\{j,j^{\prime}\}$\;
			$V_{j,\mathcal{K}}>V_{j,\mathcal{K}^{\prime}} \Leftrightarrow \mathcal{K}\succ_j \mathcal{K}^{\prime}, \ \mathcal{L}_j^t=\{\mathcal{K},\mathcal{K}^{\prime}\}$ \;
		}
	}
	\While{\rm{There exists} $\mathcal{K}_i^{\mathbf{t}}\in \mathcal{K}_{\text{rej}}^t$: $\mathcal{L}_{\mathcal{K}_i^{\mathbf{t}}}^t \neq \emptyset \ \&\& \ \mathcal{K}_i^{\mathbf{t}} \notin \mathcal{L}_j^t $}
	{
		\For{$\mathcal{K}_i^{\mathbf{t}}\in \mathcal{K}_{\text{rej}}^t$}
		{$\Pi_{\mathcal{K}_i^{\mathbf{t}}}^t=\Pi_{\mathcal{K}_i^{\mathbf{t}}}^t\cup j^{\prime}$, $\ j^{\prime} =  \mathcal{L}_{\mathcal{K}_i^{\mathbf{t}}}^t[1]$   \; 
			\If{$V_{\mathcal{K}_i^{\mathbf{t}},j^{\prime}}^t>0$}
			{
				$\Pi_{j^{\prime}}^t=\Pi_{j^{\prime}}^t\cup \mathcal{K}_i^{\mathbf{t}}$
			}
		}
		\For{$j\in \{b, \mathcal{U}\}$ \rm{that receives new requests}}
		{
			$|\Pi_j^t|\leq N_j \leq N_{j}^{\text{idl}}\ $,
			$\sum_{\mathcal{K}_i^{\mathbf{t}} \in \Phi(j)} f_{j,i}^{\mathbf{t}^*} \leq f_j^{t,\text{avl}}$\;
			$\Pi_j^t=\Pi_j^t\ \backslash \ \mathcal{D}_j^t$,$\ \mathcal{K}_{\text{rej}}^t=\mathcal{K}_{\text{rej}}^t\cup \mathcal{D}_j^t$ \;	
			{
				\For {$\mathcal{K}_i^{\mathbf{t}} \in \mathcal{D}_j^t$}
				{$\mathcal{L}_{\mathcal{K}_i^{\mathbf{t}}}^t=\mathcal{L}_{\mathcal{K}_i^{\mathbf{t}}}^t\ \backslash \ \{j\}$,$\ \Pi_{\mathcal{K}_i^{\mathbf{t}}}^t=\Pi_{\mathcal{K}_i^{\mathbf{t}}}^t\ \backslash \ \{j\}$; 
				}
			}	
		}	 	
	}	 	 	
	\Return {$\Pi^{t^*}=\Pi^t$, $\mathbf{O}^{t^*}=\{o_{i,j}^{t}| j=\Pi_{\mathcal{K}_i^{\mathbf{t}}}^t, \mathcal{K}_i^{\mathbf{t}}\in \mathcal{K}_{\text{req}}^{t}\}$, $\mathbf{F}^{t^*}=\{\big(f_{j,i}^{\mathbf{t}^*}, p_{j,i}^{\mathbf{t}^*}\big)|j=\Pi_{\mathcal{K}_i^{\mathbf{t}}}^t, \mathcal{K}_i^{\mathbf{t}}\in \mathcal{K}_{\text{req}}^{t}\}$}\;
	\caption{Computation Offloading.}
\end{algorithm}

\par The main steps of the matching process are presented in Algorithm \ref{algo_matching}, and the details are further described as follows.

\textit{\textbf{Preference List Construction.}} For each task $\mathcal{K}_i^{\mathbf{t}}\in \mathcal{K}_{\text{req}}^{t}$ and MEC server $j$, the preference lists are constructed based on the following steps: \textbf{\textit{i)}} predict the optimal computing resource allocation and pricing by calling Algorithm \ref{algo_allo_price} (line 4), \textbf{\textit{ii)}} calculate the preference value for each task on MEC servers and the preference value for each MEC server on tasks (line 5), \textbf{\textit{iii)}} construct the preference list for each task and MEC server by ranking the preference values in descending order (lines 6 and 7).

\par \textit{\textbf{Matching Construction.}} The matching process is implemented according to the following steps: \textbf{\textit{i)}} for each computation task $\mathcal{K}_i^{\mathbf{t}}\in \mathcal{K}_{\text{rej}}^t$, select the most preferred MEC server $j^{\prime}$ and add it to the matching list temporarily (line 10), \textbf{\textit{ii)}} if the computation task prefers MEC server $j^{\prime}$, add the computation task to the matching list of $j^{\prime}$ temporarily (lines 11 and 12), \textbf{\textit{iii)}} for each MEC server that receives new requests, update the matching list by remaining the top\textendash $N_j$ most preferred computation tasks and removing the less preferred computation tasks (lines 13 to 14) to guarantee that the current number of tasks and the allocated computing resources should not exceed the number of idle CPU cores $N_j^{t,\text{idl}}$ and the available computing resources $f_j^{t,\text{avl}}$ of the MEC server, respectively, \textbf{\textit{iv)}} add the deleted computation tasks into the rejected set, \textbf{\textit{v)}} update the preference list and matching list for the deleted computation tasks (lines 16 and 17). Note that the steps above are repeated until all computation tasks have been matched with an MEC server, or the unmatched computation tasks have been rejected by all MEC servers.

%
%
\subsection {Long Timescale: UAV Trajectory Control}
\label{sec_UAV Trajectory Control}

\par In each time epoch, the UAV trajectory is optimized by applying the convex approximation method. Specifically, the movement of UAVs in the next time epoch is optimized based on the optimized strategies of computing resource allocation $\mathbf{F}^{t^*}$ and computation offloading $\mathbf{O}^{t^*}$. Therefore, fixing $\mathbf{F}^{t^*}$ and $\mathbf{O}^{t^*}$, while eliminating the unrelated terms in the objective function and constraints, the problem of UAV trajectory optimization can be given as:
\begin{subequations}
	\label{eq_problem1}
	\begin{alignat}{2}
		\mathbf{P_t}: \ &\max_{\mathbf{Q}^{t^\prime}} U^t=\max_{\mathbf{Q}^{t^\prime}}\sum_{i\in \mathcal{I}}\sum_{j\in \mathcal{U}}\zeta_{i}^to_{i,j}^t \left(U_{i,j}^t+U_{j,i}^t\right) \\
		&\eqref{eq_UAV_mob_x} \sim \eqref{eq_UAV_safe},\notag
	\end{alignat}
\end{subequations}

\noindent where $\mathbf{Q}^{t^\prime}$ denotes the positions of UAVs in the next time epoch $t^\prime=\left(\lceil t/\Delta \rceil+1\right)\Delta$.

\begin{lemma}
	\label{lem_appro_p1}
	Problem $\mathbf{P_t}$ can be approximately converted into:
	{\small
	\begin{subequations}
		\label{eq_problem11}
		\begin{alignat}{2}
			\overline{\mathbf{P}}_\mathbf{t}: \ \max_{\mathbf{Q}^{t^\prime}}  \ &\sum_{i\in \mathcal{I}}\sum_{j\in \mathcal{U}}\zeta_{i}^t  o_{i,j}^t  \Bigg(\vartheta_0\log\left(1+\left(\tau_i^t-\frac{l_{i}^{t}}{\overline{r}_{i,j}^{t^\prime}}-\vartheta_1\right)\right)\Bigg.\notag\\
			&\Bigg.-\vartheta_2\frac{l_i^t}{\overline{r}_{i,j}^t }-\vartheta_3E_j^p\delta\Bigg)\\
			&\eqref{eq_UAV_mob_x} \sim \eqref{eq_UAV_mob_dis_each1}.\notag
		\end{alignat}
	\end{subequations}
    }
	\noindent where $\vartheta_0 = w_i/(1+\tau_i^t)$, $\vartheta_1=\mu_i^t/f_{j,i}^t$, $\vartheta_2=(1-w_i)P_i^t/E_i^{\max}$, $\vartheta_3 = (1-w_j)/E_j^{\max}$, and $\overline{r}_{i,j}^{t^\prime} = B_{i,j}^t\log_2 \left(1+\frac{P_{i}^t\bar{g}_{i,j}^t}{N_0 \left(\left\|{\mathbf{q}}_j^{t^\prime}-\mathbf{q}_i^t\right\|^2 + H^2\right)^{\beta_{A}/2}}\right)$.
\end{lemma}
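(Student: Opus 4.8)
The plan is to exploit the fact that, once the short-timescale strategies $\mathbf{F}^{t^*}$ and $\mathbf{O}^{t^*}$ are frozen, the only quantity in the per-pair utility $U_{i,j}^t+U_{j,i}^t$ that still depends on the decision variable $\mathbf{Q}^{t^\prime}$ is the UAV--MD separation $\|\mathbf{q}_j^{t^\prime}-\mathbf{q}_i^t\|$, which enters through the uplink rate $r_{i,j}^t$ of \eqref{eq_dataRate} and through the propulsion power $E_j^p$ of \eqref{eq_UAV_flight}. First I would substitute the definitions \eqref{eq_MD_utility} and \eqref{eq_MEC_utility} into the objective of $\mathbf{P_t}$ and expand each summand using \eqref{eq_edge_delay}, \eqref{eq_energyMDMEC}, and the aerial branch \eqref{eq_energyMecComp1B} of $E_{j,i}^t$, so that every summand splits into a position-dependent part and a position-independent part.

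Next I would discard all terms constant in $\mathbf{Q}^{t^\prime}$, since they do not move the maximiser. The payment $f_{j,i}^{t^*}p_{j,i}^{t^*}/G_i^{\max}$, the server reward $f_{j,i}^{t^*}p_{j,i}^{t^*}/(f_j^{\max}p_j^{\max})$, the computation-delay offset $\mu_i^t/f_{j,i}^t=\vartheta_1$ sitting inside the logarithm, and the server computation energy $\gamma_j (f_{j,i}^t)^2\mu_i^t$ are all fixed by $\mathbf{F}^{t^*}$ and drop out, the last one leaving only the flight energy $E_j^p\delta$ from \eqref{eq_energyMecComp1B}. What survives is exactly three position-dependent contributions: the satisfaction term $w_i\log(1+\tau_i^t-l_i^t/r_{i,j}^t-\vartheta_1)/\log(1+\tau_i^t)$, the MD transmission-energy term $(1-w_i)P_i^t l_i^t/(E_i^{\max}r_{i,j}^t)$ (the MD energy being normalised by $E_i^{\max}$ per the text following \eqref{eq_MD_utility}), and the flight-energy term $(1-w_j)E_j^p\delta/E_j^{\max}$. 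Matching coefficients immediately yields $\vartheta_2$ and $\vartheta_3$, and these latter two terms reproduce the second and third summands of $\overline{\mathbf{P}}_\mathbf{t}$ once $r_{i,j}^t$ is replaced by its deterministic surrogate.

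The crux, and the step I expect to be the main obstacle, is converting the stochastic instantaneous rate $r_{i,j}^t$ into the deterministic $\overline{r}_{i,j}^{t^\prime}$. Here I would take the expectation of the composite gain \eqref{eq_channelPowerGain} over the Nakagami small-scale fading and the LoS/NLoS randomness: because the MD--UAV large-scale fading \eqref{eq_large_MD_UAV} carries no separate LoS/NLoS branch, the averaged gain factorises as a distance-independent constant $\bar{g}_{i,j}^t$ (absorbing $\mathbb{E}[|h_{i,j}^{t,x}|^2]$ together with the constants of \eqref{eq_large_MD_UAV}) times $(\|\mathbf{q}_j^{t^\prime}-\mathbf{q}_i^t\|^2+H^2)^{-\beta_A/2}$, using the three-dimensional distance $d_{i,j}^t=\sqrt{\|\mathbf{q}_j^{t^\prime}-\mathbf{q}_i^t\|^2+H^2}$. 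Inserting this mean gain inside the logarithm of \eqref{eq_dataRate}, a Jensen-type move that renders the rate a tractable function of the next-epoch horizontal position, produces precisely $\overline{r}_{i,j}^{t^\prime}$; I would flag that this interchange of expectation and logarithm holds only approximately, which is one source of the word ``approximately'' in the statement.

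Finally, for the satisfaction term I would replace the constant normaliser $\log(1+\tau_i^t)$ by $(1+\tau_i^t)$, which is what produces the stated coefficient $\vartheta_0=w_i/(1+\tau_i^t)$ and the first summand of $\overline{\mathbf{P}}_\mathbf{t}$; this is a further (constant-rescaling) approximation of the satisfaction contribution. The constraints carry over unchanged except that the inter-UAV collision constraint \eqref{eq_UAV_safe} is relaxed away, leaving \eqref{eq_UAV_mob_x}--\eqref{eq_UAV_mob_dis_each1}, consistent with the relaxation already signalled by ``approximately'' and re-enforced in the ensuing convex-approximation procedure. Collecting the three surviving terms with coefficients $\vartheta_0,\vartheta_2,\vartheta_3$ and the surrogate rate then gives exactly $\overline{\mathbf{P}}_\mathbf{t}$.
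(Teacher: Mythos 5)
Your proposal is correct and follows essentially the same route as the paper's proof: substitute Eqs.~\eqref{eq_edge_delay}, \eqref{eq_energyMDMEC}, \eqref{eq_energyMecComp1B}, \eqref{eq_MD_utility}, and \eqref{eq_MEC_utility} into \eqref{eq_socialwelfare}, replace the stochastic channel gain by a position-independent constant $\bar{g}_{i,j}^t$ times $\left(\left\|\mathbf{q}_j^{t^\prime}-\mathbf{q}_i^t\right\|^2+H^2\right)^{-\beta_A/2}$ so that the rate becomes the deterministic surrogate $\overline{r}_{i,j}^{t^\prime}$, introduce the substitutions $\vartheta_0,\ldots,\vartheta_3$, and discard every term independent of $\mathbf{Q}^{t^\prime}$. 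The only differences are cosmetic: the paper obtains $\bar{g}_{i,j}^t$ by freezing the LoS probability at a homogeneous value $\bar{\mathbb{P}}_{i,j}^t$ (rather than your expectation-plus-Jensen argument, which lands in the same place), and it writes $\vartheta_0 = w_i/(1+\tau_i^t)$ without comment even though its own expansion yields $w_i/\log\left(1+\tau_i^t\right)$ --- the mismatch you patch with an explicit rescaling step is, in the paper, simply an unacknowledged typo.
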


\begin{proof}	
	\par Combining Eqs. \eqref{eq_edge_delay}, \eqref{eq_energyMecComp}, \eqref{eq_MD_utility}, and \eqref{eq_MEC_utility} with Eq. \eqref{eq_socialwelfare}, the objective function can be given as:
	\begin{sequation}
		\label{eq_problemConvert1}
		\begin{aligned}
			&\max_{\mathbf{Q}^{t^\prime}}  \ \sum_{i\in \mathcal{I}}\sum_{j\in \mathcal{U}}\zeta_{i}^t  o_{i,j}^t \Bigg(\frac{w_i\log \bigg(1+\Big(\tau_i^t-\frac{l_{i}^{t}}{r_{i,j}^{t^\prime}}-\frac{\mu_i^t}{f_{j,i}^t}\Big) \bigg)}{\log(1+\tau_i^t)}\Bigg.\\ 
			&\Bigg.- (1-w_i)\bigg(\frac{P_i^t l_i^t}{r_{i,j}^{t^\prime} \tau_i^t}+\frac{p_{j,i}^t f_{j,i}^t}{G_i^{\max}}\bigg)  +\frac{ w_j f_{j,i}^t p_{j,i}^t}{ f_{j}^{\max} p_j^{\max}} \Bigg.\\
			& \Bigg.-(1-w_j)\bigg( \frac{\gamma_j(f_{j,i}^t)^{2} \mu_{i}^t + E_{j}^{p} \delta}{E_j^{\max}}\bigg) \Bigg).
		\end{aligned}
	\end{sequation}

	\par Furthermore, by employing a homogeneous approximation for the LoS probability between MD $i\in \mathcal{I}$ and UAV $j\in\mathcal{U}$, i.e.,  $\mathbb{P}_{i,j}^t\sim \bar{\mathbb{P}}_{i,j}^t$ \cite{Zeng2019}, the channel power gain $g_{i,j}^t$ can be approximated as $g_{i,j}^t = \bar{g}_{i,j}^t  (d_{i,j}^t)^{-\beta_{A}}$, where $\bar{g}_{i,j}^t=\left(\bar{\mathbb{P}}_{i,j}^t |h_{i,j}^{t,\text{L}}|^2+\left(1-\bar{\mathbb{P}}_{i,j}^t\right) |h_{i,j}^{t,\text{NL}}|^2\kappa^{NL}\right)c^2 d_0^{\beta_{A}}/\left(4\pi  d_0^{A}  f_c\right)^2$. The value of $\bar{\mathbb{P}}_{i,j}^t$ can be set based on the most likely elevation angle or the average value \cite{Zeng2019}. Accordingly, the transmission rate can be approximated as $\overline{r}_{i,j}^{t^\prime} = B_{i,j}^t\log_2 \left(1+\frac{P_{i}^t\bar{g}_{i,j}^t}{N_0 \left(\left\|{\mathbf{q}}_j^{t^\prime}-\mathbf{q}_i^t\right\|^2 + H^2\right)^{\beta_{A}/2}}\right)$.
	
	\par Besides, by implementing the variable substitutions of $\vartheta_0 = w_i/(1+\tau_i^t)$, $\vartheta_1=\mu_i^t/f_{j,i}^t$, $\vartheta_2=(1-w_i)P_i^t/E_i^{\max}$,  $\vartheta_3 = (1-w_j)/E_j^{\max}$, and $\vartheta_4=\gamma_j(f_{j,i}^t)^{2}\mu_{i}^t$, substituting $\overline{r}_{i,j}^{t^\prime}$ into \eqref{eq_problemConvert1}, and removing the items which are irrelevant to $\mathbf{Q}^{t^\prime}$, the objective function can be obtained as Eq. \eqref{eq_problem11}.
\end{proof}

\par  However, problem $\overline{\mathbf{P}}_\mathbf{t}$ is a non-convex optimization problem due to the non-concavity of the objective function and the non-convexity of Constraint \eqref{eq_UAV_safe}. Therefore, it will be transformed into a convex problem by the following steps.

\par \textbf{First}, since the objective function of $\overline{\mathbf{P}}_\mathbf{t}$ is non-convex with respect to $\overline{r}_{i,j}^{t^\prime}$, the auxiliary variables $\tilde{r}_{i,j}^{t^\prime}$ is first introduced such that $\tilde{r}_{i,j}^{t^\prime} \leq \overline{r}_{i,j}^{t^\prime}$, where the RHS is lower bounded by a concave function as given in Lemma \ref{lemma_r_convex}.

\begin{lemma}
	\label{lemma_r_convex}
	Given the local point $\hat{\mathbf{q}}_j^{s}$ at the $s$-th iteration, $\overline{r}_{i,j}^t$ is lower bounded by:
	\begin{sequation}
		\label{eq_r_firstTaylor}
		\begin{aligned}
			\overline{r}_{i,j}^{t^\prime} &\geq  B_{i,j}\log_2\left(1+\frac{P_{i}^t\bar{g}_{i,j}^t}{N_0\left(H^2+\left\|\hat{\mathbf{q}}_j^{s}-\mathbf{q}_i^t\right\|^2\right)^{\beta/2}}\right)\\
			&-\frac{ B_{i,j}\beta\left(\left\|{\mathbf{q}}_j^{t^\prime}-\mathbf{q}_i^t\right\|^2-\left\|\hat{\mathbf{q}}_j^{s}-\mathbf{q}_i^t\right\|^2\right)}{2\ln2 \left(H^2+\left\|\hat{\mathbf{q}}_j^{s}-\mathbf{q}_i^t\right\|^2\right)}= \tilde{\tilde {r}}_{i,j}^{s}.
		\end{aligned}
	\end{sequation}
\end{lemma}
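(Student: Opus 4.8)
The plan is to exploit the fact that, although $\overline{r}_{i,j}^{t'}$ is neither convex nor concave in the two-dimensional position vector $\mathbf{q}_j^{t'}$, it becomes convex once viewed as a function of the scalar squared horizontal distance. I would therefore begin with the change of variable $z \triangleq \|\mathbf{q}_j^{t'}-\mathbf{q}_i^t\|^2 + H^2$ and write $\overline{r}_{i,j}^{t'} = B_{i,j}\,\phi(z)$ with $\phi(z) \triangleq \log_2(1+A z^{-\beta_A/2})$ and the shorthand $A \triangleq P_i^t \bar{g}_{i,j}^t / N_0 > 0$. The target is then a supporting-hyperplane (first-order Taylor) inequality anchored at the reference point $\hat z \triangleq \|\hat{\mathbf{q}}_j^{s}-\mathbf{q}_i^t\|^2 + H^2$ produced by the $s$-th iterate.

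The first key step is to establish that $\phi$ is decreasing and convex on $z>0$. I would rewrite it as $\phi(z) = (\ln 2)^{-1}\bigl[\ln(z^{\beta_A/2}+A) - (\beta_A/2)\ln z\bigr]$ and differentiate directly. After cancellation the first derivative collapses to the clean form $\phi'(z) = -\,\beta_A A \big/ \bigl[\,2\ln 2\; z\,(z^{\beta_A/2}+A)\bigr] < 0$, confirming that the rate strictly decreases with distance. Differentiating once more and grouping terms yields a $\phi''(z)$ whose numerator and denominator are manifestly positive for every $A,\beta_A,z>0$, so $\phi''>0$ and hence $\overline{r}_{i,j}^{t'}=B_{i,j}\phi$ is a convex function of $z$ (equivalently, of $\|\mathbf{q}_j^{t'}-\mathbf{q}_i^t\|^2$, since the latter differs from $z$ only by the additive constant $H^2$).

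The second step invokes the global first-order condition for convex functions: $\phi(z)\geq \phi(\hat z)+\phi'(\hat z)(z-\hat z)$ for all $z$. Substituting back $z-\hat z = \|\mathbf{q}_j^{t'}-\mathbf{q}_i^t\|^2 - \|\hat{\mathbf{q}}_j^{s}-\mathbf{q}_i^t\|^2$ and inserting $\phi'(\hat z)$ gives a lower bound that is affine in $\|\mathbf{q}_j^{t'}-\mathbf{q}_i^t\|^2$, matching the structure of $\tilde{\tilde{r}}_{i,j}^{s}$ in \eqref{eq_r_firstTaylor}, with the linear coefficient read off from the derivative evaluated at the reference point. The decisive observation is that $\phi'(\hat z)<0$ multiplies the convex map $\mathbf{q}_j^{t'}\mapsto\|\mathbf{q}_j^{t'}-\mathbf{q}_i^t\|^2$; a negative constant times a convex function is concave, so $\tilde{\tilde{r}}_{i,j}^{s}$ is concave in $\mathbf{q}_j^{t'}$, which is precisely the property needed to recast $\overline{\mathbf{P}}_{\mathbf{t}}$ as a convex program.

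I expect the main obstacle to be the convexity verification rather than the linearization itself: the rate is a composition of a logarithm with a fractional power of a Euclidean norm, so convexity fails in the native variable $\mathbf{q}_j^{t'}$ and must be argued in the intermediate scalar $z$, after which one has to carefully track the negative sign of $\phi'(\hat z)$ to conclude that the resulting linear surrogate is concave and admissible. The residual work — forming $\phi''$, checking its sign, and matching the linear coefficient to the stated form — is routine algebra.
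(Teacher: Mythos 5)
Your method is the same one the paper uses: pass to the scalar squared-distance variable, verify convexity there via a second-derivative computation, and invoke the tangent-line (first-order Taylor) lower bound for convex functions. The paper's proof does exactly this with $f(x)=a_1\log_2\bigl(1+a_2/(H^2+x)^{a_3/2}\bigr)$ and $x=\|\mathbf{q}_j^{t^\prime}-\mathbf{q}_i^t\|^2$; absorbing $H^2$ into your variable $z$ is immaterial. Your calculus is also correct, and in fact cleaner than the paper's, whose printed second derivative and Taylor coefficient contain algebraic slips.

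The genuine problem is the step you dismiss as ``routine algebra'': matching your tangent slope to the coefficient printed in \eqref{eq_r_firstTaylor}. They do not match. With $A \triangleq P_i^t\bar g_{i,j}^t/N_0$ and $\hat z \triangleq H^2+\|\hat{\mathbf{q}}_j^s-\mathbf{q}_i^t\|^2$, your (correct) derivative gives the slope magnitude
\begin{equation*}
B_{i,j}\,\bigl|\phi'(\hat z)\bigr| \;=\; \frac{B_{i,j}\,\beta_A\,A}{2\ln 2\;\hat z\,\bigl(\hat z^{\beta_A/2}+A\bigr)},
\end{equation*}
whereas \eqref{eq_r_firstTaylor} uses $\tfrac{B_{i,j}\beta_A}{2\ln 2\,\hat z}$, which is larger by the factor $(\hat z^{\beta_A/2}+A)/A>1$; the two agree only in the high-SNR regime $A\gg\hat z^{\beta_A/2}$. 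Because the printed slope is steeper than the true tangent, the printed inequality does hold whenever $\|\mathbf{q}_j^{t^\prime}-\mathbf{q}_i^t\|^2\geq\|\hat{\mathbf{q}}_j^s-\mathbf{q}_i^t\|^2$, but it \emph{fails} for positions slightly closer than the reference point: the gap function $g(z)=\phi(z)-\phi(\hat z)+\tfrac{\beta_A}{2\ln2\,\hat z}(z-\hat z)$ satisfies $g(\hat z)=0$ and $g'(\hat z)=\tfrac{\beta_A}{2\ln2\,\hat z}\cdot\tfrac{\hat z^{\beta_A/2}}{\hat z^{\beta_A/2}+A}>0$, so $g<0$ just below $\hat z$. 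In other words, your argument proves a correct lower bound (still affine in $\|\mathbf{q}_j^{t^\prime}-\mathbf{q}_i^t\|^2$, hence concave in $\mathbf{q}_j^{t^\prime}$), but it is not the bound stated in the lemma, and no tangent-line argument can yield the stated bound, since that bound is not a global lower bound at all. The paper's own proof conceals the same defect behind ``applying certain deductions,'' so this is as much an erratum for the lemma as a gap in your proposal; the repair is to carry the factor $A/(\hat z^{\beta_A/2}+A)$ into $\tilde{\tilde r}_{i,j}^{s}$, after which everything downstream (concavity of the surrogate, convexity of $\overline{\mathbf{P}}_\mathbf{t1}$) survives unchanged.
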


\begin{proof}
    \par We define a function $f(x) = a_1 \log_2(1+\frac{a_2}{\left(H^2+x\right)^{a_3/ 2}}), \forall j \in \mathcal{U}, t \in \mathcal{T}$, where $a_1$, $a_2$, $a_3$, and $x$ are positive. It can be concluded that $f(x)$ is convex with respect to $x$ by calculating the second-derivative of $f(x)$ as:
    \begin{equation}
    \label{eq_pro_traj_fx}
      \begin{aligned}
        \frac{\partial^2 f}{\partial x^2} =
        \frac{a_1a_2a_3(a_3+2) \left(H^2+x\right)^{a_3/2}}{4\ln2 \left(H^2+x\right)^{a_3+2}\left(\frac{a_2}{\left(H^2+x\right)^{a_3/2}}+1\right)^2}.
      \end{aligned}
    \end{equation}

    \noindent It can be deduced from Eq. \eqref{eq_pro_traj_fx} that $f(x)$ can be globally lower bounded by the first-order Taylor expansion with $x$ at any point $\hat{x}$ as:
    \begin{equation}
    \label{eq_pro_traj_fx_taylor}
      \begin{aligned}
        f(x)&\geq a_1 \log_2\left(1+a_2\left(H^2+\hat{x}\right)^{-a_3/2}\right)\\
        &-\frac{a_1a_2 a_3(x-\hat{x})}{2\ln2\left(\left(H^2+\hat{x}\right)^{(-a_3/2+1)}\left(a_2\left(H^2+\hat{x}\right)^{a_3/2}+1\right)\right)}.
      \end{aligned}
    \end{equation}

    \noindent Therefore, Eq.  \eqref{eq_r_firstTaylor} can be derived by setting $a_1=B_{i,j}^t$, $b=\beta$, $a_2 =\frac{P_{i}^t\bar{g}_{i,u}^t}{N_0}$, $x =\left\|{\mathbf{q}}_j^{t^\prime}-\mathbf{q}_i^{t}\right\|^2$, and $\hat{x} =\left\|{\hat{\mathbf{q}}}_j^{s}-\mathbf{q}_i^t\right\|^2$ and applying certain deductions.
	
\end{proof}
\vspace{-0.8em}
\par \textbf{Second}, for the non-convexity of the UAV propulsion energy $E_{j}^{p}$, we introduce an auxiliary variable $\phi$ such that
\begin{sequation}
	\label{eq_aux_vel}
	\phi^2 \geq \sqrt{\eta_3+\frac{\left(v_j^t\right)^4}{4}}-\frac{\left(v_j^t\right)^2}{2} \Longrightarrow \frac{\eta_3}{\left(\phi\right)^2} \leq \phi^2+\left(v_j^t\right)^2,
\end{sequation}

\noindent where $v_j^t=\frac{\left\|\mathbf{q}_j^{t^\prime}-\mathbf{q}_j^t\right\|}{\delta\Delta}$. For the convex RHS of \eqref{eq_aux_vel}, a global concave lower bound can be obtained at the local point $\phi^s$ by using the first-order Taylor expansion, i.e.,
\begin{sequation}
	\label{eq_aux_vel1}
	\begin{aligned}
		& \phi^2+\left(v_j^t\right)^2\geq\left(\phi^s\right)^2+2 \phi^s\left(\phi-\phi^s\right) \\
		& +\frac{\left\|\hat{\mathbf{q}}_j^{s}-\mathbf{q}_i^t\right\|}{\Delta^2}+\frac{2}{\Delta^2}\left(\hat{\mathbf{q}}_j^{s}-\mathbf{q}_i^t\right)^T\left(\hat{\mathbf{q}}_u^{t^\prime}-\mathbf{q}_i^t\right)=\tilde{\phi}^s\
		&
	\end{aligned}
\end{sequation}

\par  \textbf{Third}, to deal with the non-convex Constraint \eqref{eq_UAV_safe}, $\left|\left|\mathbf{q}_j^t-\textbf{q}_{j^{\prime}}^t\right|\right|^2$ can be lower bounded by applying the first-order Taylor expansion at any given point $\hat{\mathbf{q}}_j^t$ and $\hat{\textbf{q}}_{j^{\prime}}^t$, i.e., 
\begin{sequation}
	\label{eq_bound_safe_dis}
	\begin{aligned}
		&\left\|\mathbf{q}_j^t-\mathbf{q}_{j^{\prime}}^t\right\|^2 \geq  \left\|\hat{\mathbf{q}}_j^t- \hat{\textbf{q}}_{j^{\prime}}^t\right\|^2 + 2\left(\hat{\mathbf{q}}_j^t-\hat{\textbf{q}}_{j^{\prime}}^t\right)^T\\
		&\left(\mathbf{q}_j^t-\mathbf{q}_{j^{\prime}}^t-\left(\hat{\mathbf{q}}_j^t- \hat{\textbf{q}}_{j^{\prime}}^t\right)\right)\\ &=-\left\|\hat{\mathbf{q}}_j^t- \hat{\textbf{q}}_{j^{\prime}}^t\right\|^2+2\left(\hat{\mathbf{q}}_j^t-\hat{\textbf{q}}_{j^{\prime}}^t\right)^T\left(\mathbf{q}_j^t-\mathbf{q}_{j^{\prime}}^t\right)= \widetilde{d}_{j,j^{\prime}}^t.
	\end{aligned}
\end{sequation}

\par Based on Lemmas \ref{lem_appro_p1} and \ref{lemma_r_convex}, by introducing the auxiliary variables $\overline{r}_{i,j}^t$,  $\tilde{\tilde{r}}_{i,j}^s$, $\phi$, $\tilde{\phi}^s$, and $\widetilde{d}_{j,j^{\prime}}^t$, $\overline{\mathbf{P}}_\mathbf{t}$ can be transformed as:
{\small
\begin{subequations}
	\label{eq_problemSp1_fin}
	\begin{alignat}{2}
		\overline{\mathbf{P}}_\mathbf{t1}: &\max_{\mathbf{Q}{t^\prime}}  \ \sum_{i\in \mathcal{I}}\sum_{j\in \mathcal{U}}\zeta_{i}^t  o_{i,j}^t  \left(\vartheta_0\log\left(1+\tau_i^t-\frac{l_{i}^{t}}{\overline{r}_{i,j}^{t^\prime}}-\vartheta_1\right)\right.\notag\\
		&\left.-\vartheta_2\frac{l_i^t}{\overline{r}_{i,j}^t }-\vartheta_3\left(\eta_1\left(1+\frac{3 \left(v_{j}^{t}\right)^2}{{v_j^{\text{tip}}}^2}\right)+\eta_2 \phi+\eta_4 \left(v_{j}^{t}\right)^3\right)\delta\right)\\
		\text{s.t.} \ &\overline{r}_{i,j}^t \leq \tilde{\tilde {r}}_{i,j}^s,\ \forall i \in \mathcal{I}, \  j \in \mathcal{U}, \ t \in \mathcal{T}, \label{eq_Sp1_fin_c1} \\	
		& \frac{\eta_3}{\phi^2} \leq \tilde{\phi}^s\ \forall i \in \mathcal{I}, \  j \in \mathcal{U}, \ t \in \mathcal{T},\\
		 &\widetilde{d}_{j,j^{\prime}}^t\geq  d_{\text{U}}^{\text{safe}}, \ \forall j, j^{\prime}\in\mathcal{U}, j\neq j^{\prime}, \ t\in \mathcal{T}, \\
		 &\eqref{eq_UAV_mob_x} \sim \eqref{eq_UAV_mob_dis_each}, \notag
	\end{alignat}
\end{subequations}
}

\noindent where problem $\overline{\mathbf{P}}_\mathbf{t1}$ is convex since the objective function is concave and the feasible region is convex, which can be easily solved by optimization tools such as CVX.

\par The solution of UAV trajectory control is summarized in Algorithm \ref{alg_Traj}. First, in the $s$-th iteration, the lower bounds $\tilde{\tilde {r}}_{i,j}^{s}$ and $\tilde{\phi}^s$ are calculated (line 3). Then, the optimal trajectory $\mathbf{Q}^{s^*}$ of Problem $\overline{\mathbf{P}}_\mathbf{t1}$ is obtained as the local point for the next iteration $\mathbf{Q}^{s+1}$ (lines 4 and 5). The iteration ends when the difference in the objective value between successive iterations falls below a given threshold $\epsilon$.

\begin{algorithm}[]	
	\label{alg_Traj}	
	\SetAlgoLined
	\KwIn{UAV location $\mathbf{Q}^{t}$, optimal offloading strategy $\mathbf{F}^{t^*}$ and optimal resorce allocation strategy $\mathbf{O}^{t^*}$ in time slot $t$}
	\KwOut{UAV trajecoty in the next time epoch $\mathbf{Q}^{t^{\prime*}}$}
	\textbf{ Initialization:} 
	$\epsilon$, $s=0$, ${\hat{\mathbf{q}}}_j^{s}=\mathbf{q}_j^t$, $U^{s}=0$\;
	\Repeat{$|U^{s}-U^{s-1}|\leq \epsilon$}
	{
		Calculate $\tilde{\tilde {r}}_{i,j}^{s}$ and $\tilde{\phi}^s$ based on Eqs. \eqref{eq_r_firstTaylor} and \eqref{eq_problemSp1_fin}\;
		Solve Problem $\overline{\mathbf{P}}_\mathbf{t1}$ to obtain the optimal trajectory $\mathbf{Q}^{s^*}$ and objective value $U^{s^*}$\;
		Update $\mathbf{Q}^{s+1}=\mathbf{Q}^{s^*}$\;
	 	Update s = s + 1\;
	}	 	 	
	\Return {$\mathbf{Q}^{t^{\prime*}}$}\;
	\caption{UAV Trajectory Control.}
	\end{algorithm}

\subsection{Main Steps of TJCCT}
\label{sec_TJCCT}

\par The main steps of TJCCT are given in Algorithm \ref{algo_TJCCT}. Specifically, in each time slot, the MDs that have unprocessed computation tasks decide to process the tasks locally or offload them to the MEC servers (line 3). Then, obtain the optimal strategies of computing resource allocation and computation offloading in the current time slot by calling Algorithm \ref{algo_matching} (line 4). Furthermore, perform the computing resource allocation and computation offloading based on the obtained strategies in each time slot (line 5). Moreover, update the task processing state and available computing resources of MEC servers in each time slot (line 6). In addition, calculate the optimal trajectories of UAVs and update the mobility states of MDs and UAVs in each time epoch (lines 9 and 10). Finally, calculate and update the system utility system utility (lines 11 and 12).

\begin{algorithm}[!hbt]	
	\label{algo_TJCCT}	
	\SetAlgoLined
	\KwIn{$\mathcal{I}$, $\{b, \mathcal{U}\}$, $\mathcal{T}$}
	\KwOut{$U$}
	\textbf{Initialization:} $t=0$, $U=0$\;
	\While{$t\leq T$}
	{		
		Each MD processes the task locally if $U_{i,0}^t>0$ \;
		Call Algorithm \ref{algo_matching} to obtain $\Pi^{t^*}$,  $\mathbf{O}^{t^*}$, and $\mathbf{F}^{t^*}$\;
		\For{ $\Pi_j^t \in \Pi^{t^*}$}{
			Perform computing resource allocation and charging\;
			Update the task processing state and available computing resources of MEC servers\;
		}	
            \If{$t \ \% \ \Delta == 0$}{
			Call Algorithm \ref{alg_Traj} to obtain $\mathbf{Q}^{t^{\prime*}}$\;
			Update the mobility of MDs and UAVs\;
		}
            Calculate the current system utility $U^t$\;
		Update the total system utility $U=U+U^t$\;
		$t = t+\delta$\;				
	}
	\Return $U$\;
	\caption{TJCCT}
\end{algorithm}

\section{Performance Analysis}
\label{sec_Analysis}

\par In this section, the stability, optimality, and computational complexity of the proposed TJCCT are analyzed.

\subsection{Stability} 
\label{sec_Stability}

\par The stability of TJCCT depends on the decision of computation offloading, which, in turn, relies on the result of matching $\Pi^{t^*}$. The stability of  $\Pi^{t^*}$ is proven by Theorem \ref{theo_stable}.

\begin{theorem}
	\label{theo_stable}
	The result of matching $\Pi^{t^*}$ is stable.
\end{theorem}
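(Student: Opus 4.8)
The plan is to argue by contradiction using the standard notion of a blocking pair together with the monotonic (``improving'') behaviour of servers across iterations of the deferred-acceptance procedure in Algorithm~\ref{algo_matching}. First I would make stability precise: $\Pi^{t^*}$ is stable if there is no blocking pair $(\mathcal{K}_i^{\mathbf{t}}, j)$, i.e., no task--server pair for which the task strictly prefers $j$ to its assigned server, $j \succ_{\mathcal{K}_i^{\mathbf{t}}} \Pi_{\mathcal{K}_i^{\mathbf{t}}}^{t^*}$, while $j$ strictly prefers $\mathcal{K}_i^{\mathbf{t}}$ to at least one of its currently held tasks (or has spare capacity to admit it), $\mathcal{K}_i^{\mathbf{t}} \succ_j \mathcal{K}'$ for some $\mathcal{K}' \in \Pi_j^{t^*}$. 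Recall that the preference values $V_{\mathcal{K}_i^{\mathbf{t}},j}^t = U_{i,j}^{\mathbf{t}}$ and $V_{j,\mathcal{K}_i^{\mathbf{t}}} = U_{j,i}^{\mathbf{t}}$ are fixed once the resource allocation of Algorithm~\ref{algo_allo_price} has been computed, so each side's ranking does not change during the matching loop.

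Next I would establish the key monotonicity lemma: over the iterations of the while-loop, the set of tasks tentatively held by any server $j$ can only improve in $j$'s preference order. This is because a task is removed from $\Pi_j^t$ only when $j$ receives new proposals and must retain its top-$N_j$ tasks subject to the core constraint $|\Pi_j^t| \le N_j \le N_j^{t,\text{idl}}$ and the resource constraint $\sum f_{j,i}^{\mathbf{t}^*} \le f_j^{t,\text{avl}}$; the discarded set $\mathcal{D}_j^t$ is by construction the least preferred tasks, so every retained task is ranked above every task that is ever dropped. Consequently, the worst task held by $j$ only becomes better (or stays the same) as the algorithm proceeds, and $j$'s final roster $\Pi_j^{t^*}$ consists of tasks it prefers to every task it has rejected.

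With the lemma in hand I would close the contradiction. Suppose a blocking pair $(\mathcal{K}_i^{\mathbf{t}}, j)$ exists. Since tasks propose to servers in descending order of preference and $\mathcal{K}_i^{\mathbf{t}}$ prefers $j$ to its final match, $\mathcal{K}_i^{\mathbf{t}}$ must have proposed to $j$ at some earlier iteration and subsequently been placed in the rejected set $\mathcal{K}_{\text{rej}}^t$ (otherwise it would still be matched to $j$). That rejection occurred precisely because admitting $\mathcal{K}_i^{\mathbf{t}}$ would have violated the core or resource capacity while $j$ already held enough tasks all ranked above $\mathcal{K}_i^{\mathbf{t}}$. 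By the monotonicity lemma, every task in the final $\Pi_j^{t^*}$ is then also ranked above $\mathcal{K}_i^{\mathbf{t}}$, i.e., $\mathcal{K}' \succ_j \mathcal{K}_i^{\mathbf{t}}$ for all $\mathcal{K}' \in \Pi_j^{t^*}$, and $j$ retains no spare capacity to admit it --- directly contradicting the assumption that $j$ prefers $\mathcal{K}_i^{\mathbf{t}}$ to one of its current tasks. Hence no blocking pair can exist and $\Pi^{t^*}$ is stable.

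The main obstacle I anticipate is the capacity bookkeeping: unlike textbook many-to-one matching where capacity is a single integer, here a server is constrained jointly by idle CPU cores $N_j^{t,\text{idl}}$ and by the heterogeneous resource demand $f_{j,i}^{\mathbf{t}^*}$ of each task. I would therefore have to make the monotonicity argument robust to the fact that ``being full'' is defined by two coupled constraints, and argue carefully that a task is discarded only in favour of strictly more preferred tasks under whichever constraint binds, so that the improving property of $\Pi_j^t$ --- and hence the contradiction --- still holds.
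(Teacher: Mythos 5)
Your proof is correct and follows essentially the same route as the paper's: a proof by contradiction against a blocking pair, closed by the deferred-acceptance property that a server rejects a task only when everything it retains is more preferred, so the rejecting server cannot prefer the blocking task at termination. You make explicit two steps the paper leaves implicit---the proposal-order argument showing the blocking task must have actually proposed to and been rejected by server $j$, and the roster-monotonicity lemma---and you correctly flag the coupled core/resource capacity subtlety, which the paper's own proof also glosses over.
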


\begin{proof}
    \par Assuming the stability of matching $\Pi^{t^*}$ is not guaranteed, we can infer an unmatched pair ${\mathcal{K}_i^{\mathbf{t}}}^{\prime}\in \mathcal{K}_{\text{req}}^{t}$ and $ j^{\prime}\in \{ b \} \cup \mathcal{U}$ that prefer each other over their current matched counterparts., i.e.,   
    ${\mathcal{K}_i^{\mathbf{t}}}^{\prime}\notin \Pi_{j^{\prime}}^{t^*}$ and $j^{\prime}\neq \Pi_{{{\mathcal{K}_i^{\mathbf{t}}}^{\prime}}}^{t^*}$. Therefore, the following conditions hold: 
  \begin{subequations}
	\label{eq_pro_stability}
	\begin{alignat}{1}
            & j^{\prime} \succ_{{\mathcal{K}_i^{\mathbf{t}}}^{\prime}}\Pi_{{\mathcal{K}_i^{\mathbf{t}}}^{\prime}}^{t^*}, \label{eq_pro_opt_A}\\
            & \text{There exists } \Pi_{{\mathcal{K}_i^{\mathbf{t}}}^{\prime\prime}}\in \Pi^{t^*} \text{such that } \notag\\
            &{\mathcal{K}_i^{\mathbf{t}}}^{\prime} \succ_{j^{\prime}}{\mathcal{K}_i^{\mathbf{t}}}^{\prime\prime}\label{eq_pro_stability_B}, \text{ if } \Pi_{j^{\prime}}^{t^*}\neq \emptyset,\\
            &{\mathcal{K}_i^{\mathbf{t}}}^{\prime}\succ_{j^{\prime}} \emptyset, \text{ if }\Pi_{j^{\prime}}^{t^*} == \emptyset, \label{eq_pro_stability_C}\\
            &{\mathcal{K}_i^{\mathbf{t}}}^{\prime}\notin \Pi_{j^{\prime}}^{t^*},
            \label{eq_pro_stability_D}\\
            &j^{\prime}\neq \Pi_{{\mathcal{K}_i^{\mathbf{t}}}^{\prime}}^{t^*}, \label{eq_pro_stability_E}
	\end{alignat}
    \end{subequations}

\noindent where $\Pi_{{\mathcal{K}_i^{\mathbf{t}}}^{\prime}}^{t^*}$ represents the current MEC server currently assigned to task ${\mathcal{K}_i^{\mathbf{t}}}^{\prime}$, while $\Pi_{j^{\prime}}^{t^*}$ represents the set of tasks presently paired with MEC server $j^{\prime}$. 

\par The theorem can be demonstrated by showing that the necessary conditions cannot simultaneously hold. Specifically, if condition \eqref{eq_pro_opt_A} holds true, it implies that task ${\mathcal{K}_i^{\mathbf{t}}}^{\prime}$ prefers MEC server $j^{\prime}$ over its current matching MEC server $\Pi_{{\mathcal{K}_i^{\mathbf{t}}}^{\prime}}^{t^*}$. However, the following inferences can be obtained from condition \eqref{eq_pro_stability_D}:

\begin{itemize}
    \item  if $\Pi_{j^{\prime}}^t\neq \emptyset$, task ${\mathcal{K}_i^{\mathbf{t}}}^{\prime}$ is less preferred by MEC server $j^{\prime}$ than any task ${\mathcal{K}_i^{\mathbf{t}}}^{\prime\prime}\in \Pi_{j^{\prime}}^{t^*}$, i.e., ${\mathcal{K}_i^{\mathbf{t}}}^{\prime\prime} \succ_{j^{\prime}} {\mathcal{K}_i^{\mathbf{t}}}^{\prime}$, which contradicts condition \eqref{eq_pro_stability_B};
    \item if $\Pi_{j^{\prime}}^{t^*} == \emptyset$, MEC server $j^{\prime}$ prefers to be idle over be matched with task ${\mathcal{K}_i^{\mathbf{t}}}^{\prime}$, i.e., $\emptyset \succ_{j^{\prime}} {\mathcal{K}_i^{\mathbf{t}}}^{\prime}$, which contradicts condition \eqref{eq_pro_stability_C}.
\end{itemize}
 
 \par As a result, although task ${\mathcal{K}_i^{\mathbf{t}}}^{\prime}$ prefers MEC server $j^{\prime}$, MEC server $j^{\prime}$ does not prefer task ${{\mathcal{K}_i^{\mathbf{t}}}^{\prime}}$. Therefore, the matching will not be formed between ${{\mathcal{K}_i^{\mathbf{t}}}^{\prime}}$ and $j^{\prime}$. Similar conclusions can be drawn by starting from the other conditions. Therefore, the result of matching $\Pi^{t^*}$ is stable.
\end{proof}

\par According to Theorem \ref{theo_stable}, it can be concluded that the proposed TJCCT is stable.

%
%
\subsection{Optimality} 
\label{sec_Optimality}

\par For computing resource allocation, the optimality can be easily obtained based on the theoretical result in Corollary \ref{cor_deal}. For computation offloading, the weak-Pareto optimality is proved in Theorem \ref{the_weak_pare}. Moreover, for UAV trajectory control, the optimality is proved in Theorem \ref{the_opt_convex}. 

\begin{theorem}
	\label {the_weak_pare}
	The result of computation offloading $\mathbf{O}^{t^*}$ is weak Pareto optimal.
\end{theorem}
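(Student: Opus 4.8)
The plan is to argue by contradiction and to exploit the deferred-acceptance structure of Algorithm~\ref{algo_matching}, in which each task proposes to servers in descending order of its preference value $V_{\mathcal{K}_i^{\mathbf{t}},j}^t$ and each server retains only its most-preferred tasks up to its idle-core and resource budget. I would first observe that it suffices to rule out any alternative feasible offloading $\mathbf{O}'$, with induced matching $\Pi'$, under which \emph{every} task is strictly better off: since making all agents strictly better off in particular requires making all tasks strictly better off, the proposer-side statement already implies weak Pareto optimality of $\mathbf{O}^{t^*}$. So assume such an $\mathbf{O}'$ exists. Translating "strictly better off" through the preference ordering in Definition~\ref{def_match}, this means $\Pi'_{\mathcal{K}_i^{\mathbf{t}}}\succ_{\mathcal{K}_i^{\mathbf{t}}}\Pi_{\mathcal{K}_i^{\mathbf{t}}}^{t^*}$ for every task $\mathcal{K}_i^{\mathbf{t}}\in\mathcal{K}_{\text{req}}^{t}$.

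\textbf{Next} I would convert each strict improvement into a rejection event during the run of Algorithm~\ref{algo_matching}. Since $\Pi_{\mathcal{K}_i^{\mathbf{t}}}^{t^*}$ is the server to which $\mathcal{K}_i^{\mathbf{t}}$ is finally assigned and $\Pi'_{\mathcal{K}_i^{\mathbf{t}}}$ ranks strictly higher in $\mathcal{L}_{\mathcal{K}_i^{\mathbf{t}}}^t$, the task must have proposed to its $\Pi'$-partner $\Pi'_{\mathcal{K}_i^{\mathbf{t}}}$ at some earlier round and been rejected; otherwise it would have been retained by a strictly more preferred server, contradicting its final assignment. Collecting one such rejection per task, I would single out the rejection that occurs \emph{last} in the execution, say task $\mathcal{K}^*$ rejected by server $j^*=\Pi'_{\mathcal{K}^*}$.

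\textbf{Then} I would examine the state of $j^*$ at that last rejection. Because $j^*$ rejects $\mathcal{K}^*$, it is holding a full set of top-ranked tasks saturating its quota $N_{j^*}\le N_{j^*}^{t,\text{idl}}$ and resource budget $f_{j^*}^{t,\text{avl}}$, each strictly preferred by $j^*$ to $\mathcal{K}^*$. In $\Pi'$, however, $j^*$ is matched with $\mathcal{K}^*$ together with at most $N_{j^*}-1$ other tasks, so a counting argument over the quota forces at least one currently held task $\mathcal{K}^{**}$ with $\Pi'_{\mathcal{K}^{**}}\neq j^*$. Since $\mathcal{K}^{**}$ is tentatively held by $j^*$, its final server $\Pi_{\mathcal{K}^{**}}^{t^*}$ is no more preferred than $j^*$, while by hypothesis $\Pi'_{\mathcal{K}^{**}}\succ_{\mathcal{K}^{**}}\Pi_{\mathcal{K}^{**}}^{t^*}$; tracing where $\mathcal{K}^{**}$ proposes to its strictly-preferred $\Pi'$-partner should show that this proposal, and its rejection, must occur strictly after the rejection of $\mathcal{K}^*$, contradicting the maximality of the chosen last rejection. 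Combining this contradiction with the stability already established in Theorem~\ref{theo_stable}, no Pareto-improving $\mathbf{O}'$ can exist and $\mathbf{O}^{t^*}$ is weak Pareto optimal.

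The hardest part will be the final step: making the capacity bookkeeping rigorous in the \emph{many-to-one} setting. I expect the delicate case to be a held task $\mathcal{K}^{**}$ that would remain with $j^*$ through the end (so $\Pi_{\mathcal{K}^{**}}^{t^*}=j^*$ and its $\Pi'$-partner was rejected \emph{earlier}, not later); handling it cleanly requires carefully relating the retained set of $j^*$, the idle-core count $N_{j^*}^{t,\text{idl}}$, the residual resource $f_{j^*}^{t,\text{avl}}$, and the exact composition of $\Pi'_{j^*}$, so that the quota-counting genuinely produces a held task whose $\Pi'$-partner rejection is strictly later than that of $\mathcal{K}^*$.
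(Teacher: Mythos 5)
Your reduction to the task side is valid (ruling out an alternative that strictly improves every task a fortiori rules out one that strictly improves every agent), and your conversion of each strict improvement into a rejection event is sound for the task-proposing deferred-acceptance structure of Algorithm~\ref{algo_matching}. The genuine gap is exactly the case you flag at the end, and it is not a bookkeeping technicality: your extremal choice (the \emph{last} rejection of a task by its $\Pi'$-partner) cannot close it. At that event, the quota/resource counting does produce a held task $\mathcal{K}^{**}$ with $\Pi'_{\mathcal{K}^{**}}\neq j^*$, but there are two sub-cases. If $j^*\succ_{\mathcal{K}^{**}}\Pi'_{\mathcal{K}^{**}}$, your argument works: $\mathcal{K}^{**}$ can only reach (and be rejected by) $\Pi'_{\mathcal{K}^{**}}$ after first being rejected by $j^*$, which happens strictly later than the chosen event, a contradiction. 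But if $\Pi'_{\mathcal{K}^{**}}\succ_{\mathcal{K}^{**}}j^*$ --- which is forced whenever $\mathcal{K}^{**}$ stays with $j^*$ to the end, since then the hypothesis gives $\Pi'_{\mathcal{K}^{**}}\succ_{\mathcal{K}^{**}}\Pi^{t^*}_{\mathcal{K}^{**}}=j^*$ --- then $\mathcal{K}^{**}$ proposed to $\Pi'_{\mathcal{K}^{**}}$ \emph{before} ever proposing to $j^*$, so its rejection by its $\Pi'$-partner occurred strictly \emph{earlier} than the chosen event, and no conflict with maximality arises; your sentence asserting the rejection ``must occur strictly after'' is false here. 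Choosing the \emph{earliest} such rejection fails symmetrically (then the other sub-case becomes the bad one), so the gap cannot be repaired by re-choosing the extremal event. The standard repair (Roth--Sotomayor) is different: show every server in the image of $\Pi'$ rejects some task and hence is saturated from that moment until termination, then extract the contradiction from the \emph{last proposal of the whole run} --- the server receiving it must never have rejected anyone, yet it must have rejected its own $\Pi'$-partner. In this paper's setting that route also needs a justification that ``once a server rejects, it stays saturated'' survives the heterogeneous resource constraint $\sum_{\mathcal{K}}f_{j,i}^{\mathbf{t}^*}\leq f_j^{t,\text{avl}}$, which is not automatic.

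It is also worth noting that the paper avoids this difficulty entirely by proving a weaker statement than yours: it takes the dominating matching to strictly improve \emph{all} agents, servers included. Then, for each pair, the fact that task $\mathcal{K}_i^{\mathbf{t}}$ was rejected by $j'=\Pi'_{\mathcal{K}_i^{\mathbf{t}}}$ during the run (so $j'$ retained tasks it prefers) is played directly against the requirement that $j'$ itself gains by being matched to $\mathcal{K}_i^{\mathbf{t}}$ under $\Pi'$, giving a local, per-pair contradiction with no extremal or counting argument at all. So either complete your tasks-only claim with the termination-counting argument sketched above, or weaken the hypothesis back to the all-agents version, after which your rejection step alone essentially finishes the proof.
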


\begin{proof}
	\par Assuming that the decision of computation offloading obtained by matching $\Pi^{t^*}$ is not weak Pareto optimal, it implies the existence of a better matching ${\Pi^t}^{\prime}$ such that ${\Pi_x^t}^{\prime}\succ_{x}\Pi_x^{t^*}$ for $\forall x \in \mathcal{K}_{\text{req}}^t \cup \{ b \} \cup \mathcal{U}$ \cite{pardalos2008pareto}. We prove the theorem by demonstrating that the assumption leads to contradictions from the perspectives of MDs and MEC servers, respectively. 
 
    \par For MDs, if the above assumption holds, each task $\mathcal{K}_i^{\mathbf{t}}\in\mathcal{K}_{\text{req}}^{t}$ can be matched with a better MEC server $j^{\prime}=\Pi_{\mathcal{K}_i^{\mathbf{t}}}^{t^\prime}$ compared to the current offloading strategy $j=\Pi_{\mathcal{K}_i^{\mathbf{t}}}^{t^*}$. Therefore, for each task $\mathcal{K}_i^{\mathbf{t}}\in\mathcal{K}_{\text{req}}^{t}$, the following conditions hold: 
    \begin{subequations}
	\label{eq_pro_stability}
	\begin{alignat}{1}
            & j^{\prime}={\Pi_{\mathcal{K}_i^{\mathbf{t}}}^{t^{\prime}}}\succ_{\mathcal{K}_i^{\mathbf{t}}}\Pi_{\mathcal{K}_i^{\mathbf{t}}}^{t^*}=j, \ j \neq j^{\prime}, \label{eq_pro_opt_A}\\
            & \Pi_{\mathcal{K}_i^{\mathbf{t}}}^{t^*}=j, \label{eq_pro_opt_B}\\
            & \Pi_j^{t^*}=\mathcal{K}_i^{\mathbf{t}}, \label{eq_pro_opt_C}\\
            &\Pi_{\mathcal{K}_i^{\mathbf{t}}}^{t^\prime}=j^{\prime},
            \label{eq_pro_opt_D}\\
            &\Pi_{j^{\prime}}^{t^\prime}=\mathcal{K}_i^{\mathbf{t}}, \label{eq_pro_opt_E}
	\end{alignat}
    \end{subequations}
 
\noindent where the conditions \eqref{eq_pro_opt_B} and \eqref{eq_pro_opt_C} are derived from the current matching $\Pi^t$, while the conditions \eqref{eq_pro_opt_D} and \eqref{eq_pro_opt_E} are are derive from matching $\Pi^{t^\prime}$. 

\par The contractions can be obtained from conditions \eqref{eq_pro_opt_A} to \eqref{eq_pro_opt_E}. Specifically, condition \eqref{eq_pro_opt_A} implies that task $\mathcal{K}_i^{\mathbf{t}}$ prefers MEC server $j^{\prime}$ over MEC server $j$ under matching $\Pi^{t^\prime}$. However, conditions \eqref{eq_pro_opt_B} and \eqref{eq_pro_opt_C} reveal that, under the current matching $\Pi^t$, task $\mathcal{K}_i^{\mathbf{t}}$ is actually matched with MEC server $j$, not its preferred MEC server $j^{\prime}$. Therefore, the following inferences can be made: 
\begin{itemize}
    \item if $j^{\prime}=\emptyset$, task $\mathcal{K}_i^{\mathbf{t}}$ prefers to be unmatched rather than be matched, i.e., $\emptyset\succ_{\mathcal{K}_i^{\mathbf{t}}} j,\forall j\in\  \{ b \} \cup \mathcal{U}$, which contradicts the condition  \eqref{eq_pro_opt_B};
    \item if MEC $j^{\prime}>0$, task $\mathcal{K}_i^{\mathbf{t}}$ must have been deleted from the matching list of the MEC server $j^{\prime}$. 
\end{itemize}

\noindent As a result, MEC server $j^{\prime}$ does not prefer task $\mathcal{K}_i^{\mathbf{t}}$, which contradicts the condition \eqref{eq_pro_opt_E}. For the MEC servers, similar contradictions can be drawn. Consequently, the contradictions presented above disprove the validity of the assumption and demonstrate the weak-Pareto optimality of matching $\Pi^{*}$.
\end{proof}

\begin{theorem}
\label {the_opt_convex}
	Problem $\overline{\mathbf{P}}_\mathbf{t}$ does not change the optimality of problem $\mathbf{P_t}$.
	\vspace{-1pt}
\end{theorem}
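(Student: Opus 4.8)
The plan is to establish that the reformulation carrying $\mathbf{P_t}$ into $\overline{\mathbf{P}}_\mathbf{t}$ is equivalence-preserving on the objective, so that any maximizer of $\overline{\mathbf{P}}_\mathbf{t}$ attains the optimum of $\mathbf{P_t}$ over the common feasible region specified by the UAV mobility constraints. First I would recall that the long-timescale subproblem is solved for fixed short-timescale strategies $\mathbf{F}^{t^*}$ and $\mathbf{O}^{t^*}$, and then partition the objective of $\mathbf{P_t}$, namely $U^t=\sum_{i\in\mathcal{I}}\sum_{j\in\mathcal{U}}\zeta_i^t o_{i,j}^t(U_{i,j}^t+U_{j,i}^t)$, according to dependence on the decision variable $\mathbf{Q}^{t^\prime}$. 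Inspecting Eqs. \eqref{eq_MD_utility} and \eqref{eq_MEC_utility} together with the delay and energy expressions, the only quantities varying with $\mathbf{Q}^{t^\prime}$ are the MD--UAV transmission rate $r_{i,j}^{t^\prime}$, which enters through the horizontal distance $\|\mathbf{q}_j^{t^\prime}-\mathbf{q}_i^t\|$, and the UAV propulsion energy $E_j^p$, which enters through the flight velocity $v_j^t=\|\mathbf{q}_j^{t^\prime}-\mathbf{q}_j^t\|/(\delta\Delta)$; every computation-related quantity, namely the computation delay $\mu_i^t/f_{j,i}^t$, the computation energy $\gamma_j(f_{j,i}^t)^2\mu_i^t$, and the payment and reward terms built from $p_{j,i}^t f_{j,i}^t$, is constant once $\mathbf{F}^{t^*}$ and $\mathbf{O}^{t^*}$ are frozen.

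Building on this decomposition, I would write $U^t=g(\mathbf{Q}^{t^\prime})+C$, where $g(\cdot)$ collects exactly the rate-dependent satisfaction term, the MD transmission-energy term, and the UAV flight-energy term, while $C$ gathers all $\mathbf{Q}^{t^\prime}$-independent contributions. The objective of $\overline{\mathbf{P}}_\mathbf{t}$ in Eq. \eqref{eq_problem11} is then recognized as precisely $g(\mathbf{Q}^{t^\prime})$ after three operations: applying the homogeneous LoS approximation $\mathbb{P}_{i,j}^t\sim\bar{\mathbb{P}}_{i,j}^t$ that produces the tractable rate $\overline{r}_{i,j}^{t^\prime}$, absorbing the frozen constants into the shorthands $\vartheta_0,\dots,\vartheta_4$, and discarding the additive constant $C$. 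Since $C$ does not depend on $\mathbf{Q}^{t^\prime}$ and the shorthands merely relabel fixed quantities, maximizing $g$ over the feasible set returns the identical optimal $\mathbf{Q}^{t^\prime}$ as maximizing $U^t$, whence the optimal-solution set of $\overline{\mathbf{P}}_\mathbf{t}$ coincides with that of $\mathbf{P_t}$, which is exactly the claimed preservation of optimality.

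The step I expect to be the main obstacle is rigorously defending the channel-model approximation: strictly speaking, replacing the exact probabilistic LoS gain $g_{i,j}^t$ by its homogenized surrogate $\bar{g}_{i,j}^t(d_{i,j}^t)^{-\beta_A}$ perturbs the objective and could in principle shift the maximizer. I would close this gap by invoking the epoch-level quasi-stationarity established in Section \ref{sec:system_overview}, where $\Delta$ is chosen small enough that each UAV's location, and hence its elevation angle and LoS probability $\mathbb{P}_{i,j}^t$, stay approximately constant within an epoch; consequently $\bar{\mathbb{P}}_{i,j}^t$ coincides with the true probability up to a negligible error and the induced displacement of the maximizer vanishes, as argued in \cite{Zeng2019}. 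The remaining two operations, constant absorption and the removal of $C$, are exact, so once the approximation is accepted the equivalence, and therefore the theorem, follows immediately.
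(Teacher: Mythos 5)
Your proposal and the paper's proof address two different transformations, and the route you chose contains a gap that cannot be closed. You read the theorem as asserting exact equivalence between $\mathbf{P_t}$ and $\overline{\mathbf{P}}_\mathbf{t}$, and your argument reduces to (i) the decomposition $U^t = g(\mathbf{Q}^{t^\prime}) + C$ with absorption of frozen quantities and removal of the constant $C$ --- which is precisely the content of Lemma \ref{lem_appro_p1}, already proved in the paper and, notably, claimed there only as an \emph{approximate} conversion --- plus (ii) the assertion that the homogenized-LoS substitution displaces the maximizer by a vanishing amount. Step (ii) is the genuine gap. The epoch-level quasi-stationarity you invoke concerns variation in \emph{time}: it says the UAV does not move much within an epoch. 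It says nothing about the dependence of $\mathbb{P}_{i,j}^t$ on the \emph{decision variable} $\mathbf{q}_j^{t^\prime}$, which is exactly what the trajectory optimization varies. Replacing the elevation-angle-dependent probability $\mathbb{P}_{i,j}^t$ by the constant $\bar{\mathbb{P}}_{i,j}^t$ perturbs the objective as a function of $\mathbf{q}_j^{t^\prime}$ by a fixed modeling error that does not shrink as $\Delta \to 0$, so the claim that ``the induced displacement of the maximizer vanishes'' has no support, and exact coincidence of the optimal-solution sets of $\mathbf{P_t}$ and $\overline{\mathbf{P}}_\mathbf{t}$ is not provable. The paper never attempts this: it accepts the approximation as an approximation.

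What the paper actually proves under this theorem is the losslessness of the \emph{subsequent} reformulation, from $\overline{\mathbf{P}}_\mathbf{t}$ to the convex problem $\overline{\mathbf{P}}_\mathbf{t1}$: the inequality constraints \eqref{eq_r_firstTaylor} and \eqref{eq_aux_vel}, introduced together with the auxiliary rate variable and the slack variable $\phi$, must hold with equality at the optimum --- otherwise the objective could be strictly increased without violating feasibility (by enlarging the auxiliary rate, which enters the objective increasingly, or shrinking $\phi$, which enters with a negative coefficient), a contradiction. This one-line tightness-at-optimum argument is the entire key idea certifying that the slack-variable reformulation does not change the optimum, and it is absent from your proposal: you never mention the auxiliary variables, the relaxed constraints, or their activity at the solution. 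To repair your attempt you would need to (a) retreat from exact equivalence across the LoS approximation and state that step as the approximation the paper admits it to be, and (b) supply the tightness argument showing that the constraints of $\overline{\mathbf{P}}_\mathbf{t1}$ are active at its optimum, so that solving the convex problem recovers an optimum of $\overline{\mathbf{P}}_\mathbf{t}$ within the region where the first-order Taylor bounds are valid.
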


\begin{proof}
	The inequalities of \eqref{eq_r_firstTaylor} and \eqref{eq_aux_vel} must hold at the optimum. Otherwise, the objective function of $\overline{\mathbf{P}}_\mathbf{t}$ can be further increased without violating the constraint \eqref{eq_r_firstTaylor} or \eqref{eq_aux_vel} by selecting a smaller $\overline{r}_{i,j}^{t^\prime}$ or a smaller $\phi$.  
\end{proof}

\par According to Theorems \ref{the_weak_pare} and \ref{the_opt_convex}, the optimality of TJCCT is demonstrated.

%
%
\subsection {Computation Complexity}
\label{sec_Complexity}

\par The computational complexity of TJCCT is given as Theorem \ref{theo_complexity}.

\begin{theorem}
	\label{theo_complexity}
	The proposed algorithm has a polynomial worst-case complexity in each time slot, i.e., $\mathcal{O}\big(\iota^{\max}(|\mathcal{U}|+1)\big(2|\mathcal{K}_{\text{req}}^{t}|+ \min\{|\mathcal{U}|+1, |\mathcal{K}_{\text{req}}^{t}|\}\big)+\big.$ $\big.({|\mathcal{I}||\mathcal{U}|})^{3.5}\log_2(\frac{1}{\epsilon})\big)$, where $|\mathcal{K}_{\text{req}}^{t}|$ and $|\mathcal{U}|$ are the numbers of undecided tasks and MEC servers in time slot $t$, respectively.
\end{theorem}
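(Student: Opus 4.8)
The plan is to analyze TJCCT (Algorithm~\ref{algo_TJCCT}) routine-by-routine, since within a single time slot its sub-routines execute sequentially; the worst-case slot is one that also triggers trajectory control (i.e.\ $t\bmod\Delta=0$), so the per-slot bound is the sum of the three routine costs. I would first bound the resource-allocation routine (Algorithm~\ref{algo_allo_price}): its sole loop is capped at $\iota^{\max}$ iterations, each performing a constant number of closed-form evaluations (the price update of Corollary~\ref{cor_deal}, the two utility evaluations, and the allocation update), so one invocation costs $\mathcal{O}(\iota^{\max})$.

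Next I would treat the matching routine (Algorithm~\ref{algo_matching}), which dominates the short-timescale cost and splits into two phases. In preference-list construction, Algorithm~\ref{algo_allo_price} is invoked once per task--server pair to produce the two preference values $V_{\mathcal{K}_i^{\mathbf{t}},j}^t$ and $V_{j,\mathcal{K}_i^{\mathbf{t}}}$; with $|\mathcal{K}_{\text{req}}^t|$ tasks and $|\mathcal{U}|+1$ servers this contributes $\mathcal{O}\big(2\iota^{\max}(|\mathcal{U}|+1)|\mathcal{K}_{\text{req}}^t|\big)$. In the matching-construction phase I would bound the number of deferred-acceptance rounds: because a task that is rejected by a server is permanently deleted from that server's list and never re-proposes to it, the loop terminates within $\min\{|\mathcal{U}|+1,|\mathcal{K}_{\text{req}}^t|\}$ rounds, with $\mathcal{O}(|\mathcal{U}|+1)$ work per round. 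Since this phase makes no further calls to Algorithm~\ref{algo_allo_price}, multiplying by $\iota^{\max}\ge 1$ only loosens the bound, which lets me collect the common factor and obtain $\mathcal{O}\big(\iota^{\max}(|\mathcal{U}|+1)\min\{|\mathcal{U}|+1,|\mathcal{K}_{\text{req}}^t|\}\big)$. Adding the two phases yields the first bracket of the claim.

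For the long-timescale routine (Algorithm~\ref{alg_Traj}), I would observe that each successive-convex-approximation iteration solves the convex program $\overline{\mathbf{P}}_\mathbf{t1}$ of \eqref{eq_problemSp1_fin}, whose variables are the UAV positions together with the auxiliary rate variables $\overline{r}_{i,j}^t$ introduced for every MD--UAV pair, so the problem dimension is $\mathcal{O}(|\mathcal{I}||\mathcal{U}|)$. Invoking the standard interior-point estimate, an $\epsilon$-accurate solution costs $\mathcal{O}\big((|\mathcal{I}||\mathcal{U}|)^{3.5}\log_2(1/\epsilon)\big)$; the number of outer SCA iterations is constant and is absorbed into this factor. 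Summing this with the short-timescale bracket gives exactly the stated expression, which is polynomial in $\iota^{\max}$, $|\mathcal{I}|$, $|\mathcal{U}|$, and $|\mathcal{K}_{\text{req}}^t|$.

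The main obstacle I anticipate is making the matching round count rigorous, i.e.\ proving that the loop of Algorithm~\ref{algo_matching} halts within $\min\{|\mathcal{U}|+1,|\mathcal{K}_{\text{req}}^t|\}$ rounds rather than the naive $|\mathcal{K}_{\text{req}}^t|(|\mathcal{U}|+1)$ proposal bound; this requires carefully tracking how rejected tasks are shrunk from preference lists and returned to the rejected set, so that the monovariant driving termination is correctly identified. A secondary subtlety is confirming that the dimension of $\overline{\mathbf{P}}_\mathbf{t1}$ is genuinely $\mathcal{O}(|\mathcal{I}||\mathcal{U}|)$ and that the pairwise collision-avoidance constraints derived from \eqref{eq_UAV_safe} do not raise the order beyond the $3.5$ exponent of the interior-point bound.
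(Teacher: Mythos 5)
Your overall decomposition---bounding Algorithm~\ref{algo_allo_price} by $\mathcal{O}(\iota^{\max})$, charging it once per task--server pair in the preference-construction phase, and invoking the interior-point estimate $\mathcal{O}\big((|\mathcal{I}||\mathcal{U}|)^{3.5}\log_2(1/\epsilon)\big)$ for Algorithm~\ref{alg_Traj}---is the same route the paper takes (the paper simply cites a reference for the trajectory bound rather than counting variables as you do). However, your accounting of the matching-construction phase contains a genuine error: the loop of Algorithm~\ref{algo_matching} is \emph{not} guaranteed to halt within $\min\{|\mathcal{U}|+1,|\mathcal{K}_{\text{req}}^t|\}$ rounds. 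A single rejected task can be turned away by every server in sequence---for instance when each server it proposes to has no idle CPU cores or insufficient available resources $f_j^{t,\text{avl}}$---so even with $|\mathcal{K}_{\text{req}}^t|=1$ the loop can run $|\mathcal{U}|+1$ rounds. The round count is governed by the preference-list length $|\mathcal{U}|+1$, not by the number of tasks, so the monovariant you were hoping to establish does not exist in the form you stated. Moreover, your per-round work of $\mathcal{O}(|\mathcal{U}|+1)$ undercounts the proposals: in each round every currently rejected task (up to $|\mathcal{K}_{\text{req}}^t|$ of them) proposes to its next choice. The paper's accounting assigns the two factors the opposite roles: each task is rejected at most $|\mathcal{U}|+1$ times, and in each round at most $\min\{|\mathcal{U}|+1,|\mathcal{K}_{\text{req}}^t|\}$ servers (those receiving new requests) must update their lists, which yields $\mathcal{O}\big((|\mathcal{U}|+1)\big(|\mathcal{K}_{\text{req}}^t|+\min\{|\mathcal{U}|+1,|\mathcal{K}_{\text{req}}^t|\}\big)\big)$ for this phase.

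Your final expression nevertheless remains a valid upper bound, but only by domination rather than by your argument: since $\iota^{\max}\geq 1$, the correctly computed matching cost $(|\mathcal{U}|+1)\big(|\mathcal{K}_{\text{req}}^t|+\min\{|\mathcal{U}|+1,|\mathcal{K}_{\text{req}}^t|\}\big)$ is within a constant factor of terms already present in the claimed bracket $\iota^{\max}(|\mathcal{U}|+1)\big(2|\mathcal{K}_{\text{req}}^t|+\min\{|\mathcal{U}|+1,|\mathcal{K}_{\text{req}}^t|\}\big)$, so the stated $\mathcal{O}$-expression survives. To make the proof sound, replace your round-count claim with the paper's two observations (rejections per task bounded by $|\mathcal{U}|+1$; server updates per round bounded by $\min\{|\mathcal{U}|+1,|\mathcal{K}_{\text{req}}^t|\}$) and keep the rest of your argument unchanged.
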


\begin{proof}
    For computing resource allocation, it can be inferred from Algorithm \ref{algo_allo_price} that the worst-case complexity is $\mathcal{O}(\iota^{\max})$. For computation offloading, it can be derived from Algorithm \ref{algo_matching} that the computational complexity for constructing the preference list is $\mathcal{O}\left(|\mathcal{K}_{\text{req}}^{t}|\left(|\mathcal{U}|+1\right) \right)$, where  $|\mathcal{U}|+1$ represents the number of MEC servers and $|\mathcal{K}_{\text{req}}^{t}|$ represents the number of undecided tasks in time slot $t$. Furthermore, for matching construction, in the worst case, any task could be rejected $|\mathcal{U}|+1$ times at most, i.e., the task is rejected by all MEC servers. Each time the tasks are rejected, there are at most $\min\{|\mathcal{U}|+1, |\mathcal{K}_{\text{req}}^{t}|\}$ MEC servers that need to update the preference list in the next iteration. Therefore, the worst-case computational complexity of matching construction can be obtained by $\mathcal{O}\left(\left(|\mathcal{U}|+1\right) \left(|\mathcal{K}_{\text{req}}^{t}|+ \min\{|\mathcal{U}|+1, |\mathcal{K}_{\text{req}}^{t}|\}\right)\right)$, and that of Algorithm \ref{algo_matching} is $\mathcal{O}\Big(\left(|\mathcal{U}|+1\right)\big(2|\mathcal{K}_{\text{req}}^{t}|+ \min\{|\mathcal{U}|+1, |\mathcal{K}_{\text{req}}^{t}|\}\big)\Big)$. For UAV trajectory control, the computational complexity of Algorithm \ref{alg_Traj} is $\mathcal{O}({|\mathcal{I}||\mathcal{U}|})^{3.5}\log_2(\frac{1}{\epsilon}))$ according to~\cite{Wang2022b}. As a result, the worst-case complexity of the proposed algorithm is $\mathcal{O}\big(\iota^{\max}(|\mathcal{U}|+1)\big(2|\mathcal{K}_{\text{req}}^{t}|+ \min\{|\mathcal{U}|+1, |\mathcal{K}_{\text{req}}^{t}|\}\big)+\big.$ $\big.({|\mathcal{I}||\mathcal{U}|})^{3.5}\log_2(\frac{1}{\epsilon})\big)$.
\end{proof}

\section{Simulation Results and Analysis}
\label{sec_simulation}

\par In this section, simulation results are presented to validate the effectiveness of the proposed approach.

\subsection{Simulation Setup}
\label{simulation_set_up}

\subsubsection{Scenarios} 

\par We consider a UAV-assisted MEC system where one MBS and four UAVs are deployed to jointly provide offloading service for 30 MDs in a $1000\times 1000 \ \text{m}^2$ rectangular area. The time horizon is set as $60$ s, and it is divided into $T = 600$ time slots with equal length of $\delta=100$ ms, and 10 time slots are grouped into a time epoch, i.e., $\Delta=1$ s. 

\subsubsection{Parameters} 

\par For the MBS, the location and height are set as $[500,500]$ and 10 m \cite{3GPPTR1389012018}, respectively. For UAVs, the fixed altitude is set as $H$ = 100 m, and the initial positions and destinations are set as $(\textbf{q}_{1}^{I},\textbf{q}_{1}^{F})=([50,900],[500,0])$, $(\textbf{q}_{2}^{I},\textbf{q}_{2}^{F})=([900,900],[500,500])$,
$(\textbf{q}_{3}^{I},\textbf{q}_{3}^{F})=([100,100],[500,500])$, $(\textbf{q}_{4}^{I},\textbf{q}_{4}^{F})=([800,1000],[500,500])$, respectively. The default values of the other parameters are listed in Table \ref{tab_simuParameter}.

\begin{table}[!hbp]
	\setlength{\abovecaptionskip}{3pt}%
	\setlength{\belowcaptionskip}{0pt}%
	\caption{Simulation parameters}
	\label{tab_simuParameter}
	\renewcommand*{\arraystretch}{1.2}
	\begin{center}
		\begin{tabular}{p{.07\textwidth}|p{.23\textwidth}|p{.12\textwidth}}
			\hline
			\hline
			\textbf{Symbol}&\textbf{Description}&\textbf{Default value}\\
			\hline
				    $\gamma$&CPU parameters& $10^{-27}$\\
			\hline
				    $\beta_{\mathrm{T}}^{\mathrm{L}}/\beta_{\mathrm{T}}^{\mathrm{N}}$&Path loss exponent for terrestrial communication &$2.42/4.28$ \cite{yang2018dense}\\
			\hline
				    $\beta_{\mathrm{A}}$&Path loss exponent for aerial communication &$2$ \cite{Zeng2019} \\
			\hline
				    $\kappa$&Additional attenuation factor for NLoS aerial communication &$0.2$ \cite{Zeng2019}\\
			\hline
				    $ G_i^{\max} $&The budget of MD $i$ for the costs payed to the servers& 20 \\
			\hline
				    $d_0^{\mathrm{T}}$/$d_0^{\mathrm{A}}$&Reference distances for MD-MBS/MD-UAV communication&1m/1 m \cite{3GPPTR1389012018}\\
			\hline
   				$d_1$/$d_2$&Environment parameters&18 m/36 m \cite{3GPPTR1389012018}\\
			\hline
                    $m_{\mathrm{T}}^{\mathrm{L}}/m_{\mathrm{T}}^{\mathrm{N}}/$\newline
                    $m_{\mathrm{A}}^{\mathrm{L}}/m_{\mathrm{A}}^{\mathrm{N}}$&Nakagami fading parameter &$4/2/3/1$ \cite{zhang2018dense,peng2022directional}\\
			\hline
                     $n_j^{\text{core}}$&CPU core number of MEC server $j$&[2, 10]\\
                \hline
			         $N_0$&Noise power& -98 dBm\\
			\hline
                     $\sigma^{\mathrm{L}}/\sigma^{\mathrm{N}}$&Standard deviation of shadowing for LoS/NLoS communication& $4/6$ \cite{3GPPTR389012020}\\	
			\hline
				    $\tau_i^t$&The deadline of task &[0.1, 5] s \cite{kazmi2021novel}\\
			\hline
				    $w_i/w_j$&Weight coefficient of MD $i$ / MEC server $j$ &[0, 1]/[0, 1]\\
			  \hline 	
				  $v_{\text{U}}^{\min}/v_{\text{U}}^{\max}$& The constraints of UAV velocity&0/30 m/s \cite{Zeng2019} \\
			  \hline
				 $p_1$, $p_2$& Parameters for LoS probability of MD-UAV link & 10, 0.6 \cite{AlHourani2014,Zeng2019}\\
			  \hline
			      $P_i$ &Transmit power of each MD & [10,25] dBm\\
			  \hline
				  $d^{\text{safe}}$ &Safety distance between UAVs &  10 m \cite{Ji2020}\\ 
			  \hline
				   $B_{i,j}$&Bandwidth between MD $i$ and MEC server $j$ & 20 MHz ($j=b$), \newline 10 MHz ($j\in \mathcal{U}$)\cite{Wang2022b}  \\
			  \hline 	 
    			 	$\mu_{i}^t$&Computation intensity of tasks&  [500, 1500] cycles/bit \\ 
			  \hline
			  	$l_i^t$&Task size & [1, 5] Mb \cite{Yu2020} \\
			  \hline
				  $E_i^{\max}$& Energy constraint of MD $i$&1 (W.h/GHz) \cite{ning2020intelligent}\\
			  \hline
				  $E_j^{\max}$& Energy constraint of MEC server $j$&1 (Wh/GHz) ($j=b$) 
                    \newline 360 kJ  ($j\in \mathcal{U}$)  \\
			  \hline
				  $f_{i}^{\max}$&Computing resources of MD $i$&[0.5, 1] GHz \cite{lyu2018energy}\\
			  \hline
				  $f_{j}^{\max}$&Computing resources of MEC server $j$  &[20,40] GHz ($j=b$), [10,20] GHz ($j\in \mathcal{U}$) \\ 
			   \hline
			         $\alpha$ &Memory degree of MD's velocity&0.9 \\
			   \hline
			         $\bar{v}_i$& The average velocity of MD $i$&$[0,1]$ m/s \cite{Yang2022}\\
			   \hline
			  	 $\varsigma_i^v$ & Standard derivation of velocity&2 \cite{Yang2022}\\
			   \hline
		\end{tabular}
	\end{center}
\end{table} 

\subsubsection{Benchmarks} 

\par This work evaluates the proposed TJCCT in comparison with the following schemes. 
\begin{itemize}
	\item \textit{Local-only strategy (LS)}: all MDs process the tasks locally without considering the MEC server resource allocation and UAV trajectory control.
	\item \textit{Equal computing resource allocation strategy (ECRAS)}: the available computing resources of each MEC server are equally allocated to the requested MDs, and the
	the strategies of computation offloading and trajectory control are determined based on TJCCT.
	\item \textit{Price adjustment strategy (PAS)} \cite{ tao2023single}: the price of the computing resource is adjusted upwards or downwards by a fixed factor based on the available computing resource of the MEC server.  Besides, the strategies of computation offloading, computing resource allocation, and trajectory control are determined based on TJCCT.
	\item \textit{Game-theoretic computation offloading strategy (GCOS)}\cite{Wang2020}: the computation offloading strategies of MDs are iteratively determined in a competitive manner while the strategies of computing resource allocation and trajectory control are determined based on the proposed TJCCT.
	\item \textit{Segment-constrained trajectory control strategy} (STCS) \cite{Wang2022a}: the strategy of UAV trajectory control is decided based on a segment-constrained method while the strategies of computation offloading and computing resource allocation are decided based on the proposed TJCCT.
\end{itemize}

\subsubsection{Performance indicators} 

\par To evaluate the overall performance of the proposed method, we adopt the following indicators. \textit{1) System utility} $ \sum_{t\in\mathcal{T}} U^t$, which indicates the aggregated utility of the MDs and MEC servers. \textit{2) Average processing rate} $ \frac{{\sum_{\mathcal{K}_i^t\in \mathcal{K}_{\text{succ}}}}l_i^t\mu_i^t}{T}$, which represents the CPU cycles that are completed per unit time, where $\mathcal{K}_{\text{succ}}$ is the set of tasks that are successfully computed, $T_{\mathcal{K}_i^t}^{\text{cmp}}$ and $ T_{\mathcal{K}_i^t}^{\text{gen}}$ are the completion time and generation time of the computation, respectively. \textit{3) Average completion delay} $\sum_{\mathcal{K}_i^t\in \mathcal{K}_{\text{succ}}}\frac{T_{\mathcal{K}_i^t}^{\text{cmp}} - T_{\mathcal{K}_i^t}^{\text{gen}}}{|\mathcal{K}_{\text{succ}}|}$, which indicates the average delay for successfully completing a task, where $|\mathcal{K}_{succ}|$ is the number of tasks that are completed. \textit{4) Average completion ratio} $\frac{|\mathcal{K}_{\text{succ}}|}{\sum_{t\in\mathcal{T}}\sum_{i \in \mathcal{I}}\zeta_i^t}$, which indicates the average ratio of tasks that are completed.

%
%

\subsection{Evaluation Results}
\label{Numerical Results}

\par In this section, we first evaluate the system performance of the proposed TJCCT over time with default parameters. Subsequently, we compare the impacts of different parameters on the performance of the proposed TJCCT and the benchmark algorithms.

\subsubsection{Performance Evaluation}

\par Figs. \ref{fig_time}(a), \ref{fig_time}(b), \ref{fig_time}(c), and \ref{fig_time}(d) compare the performances of system utility, average processing rate, average completion delay, and average cmpletion ration with time slots. It can be observed that the proposed TJCCT outperforms the other algorithms with gradual-significant performance advantages as the time elapses, and this can be attributed to several reasons. First, the price-incentive resource trading stimulates the MDs and MEC servers to negotiate the on-demand computing resources allocation. Additionally, the matching scheme employed by TJCCT enables mutually satisfactory computation offloading between MDs and terrestrial-aerial MEC servers based on the available computing resources of MEC servers and the QoE requirements of MDs. Moreover, the UAVs dynamically adjust their trajectories to provide satisfactory offloading services for MDs by using the trajectory control method of TJCCT. In conclusion, this set of simulation results demonstrates the superiority of TJCCT among the six algorithms, especially in bringing long-term benefits for both MDs and MEC servers.

\begin{figure*}[!hbt] 
	\centering
		\setlength{\abovecaptionskip}{2pt}%
		\setlength{\belowcaptionskip}{2pt}%
	\subfigure[System utility]
	{
		\begin{minipage}[t]{0.23\linewidth}
			\raggedleft
			\includegraphics[width=3in]{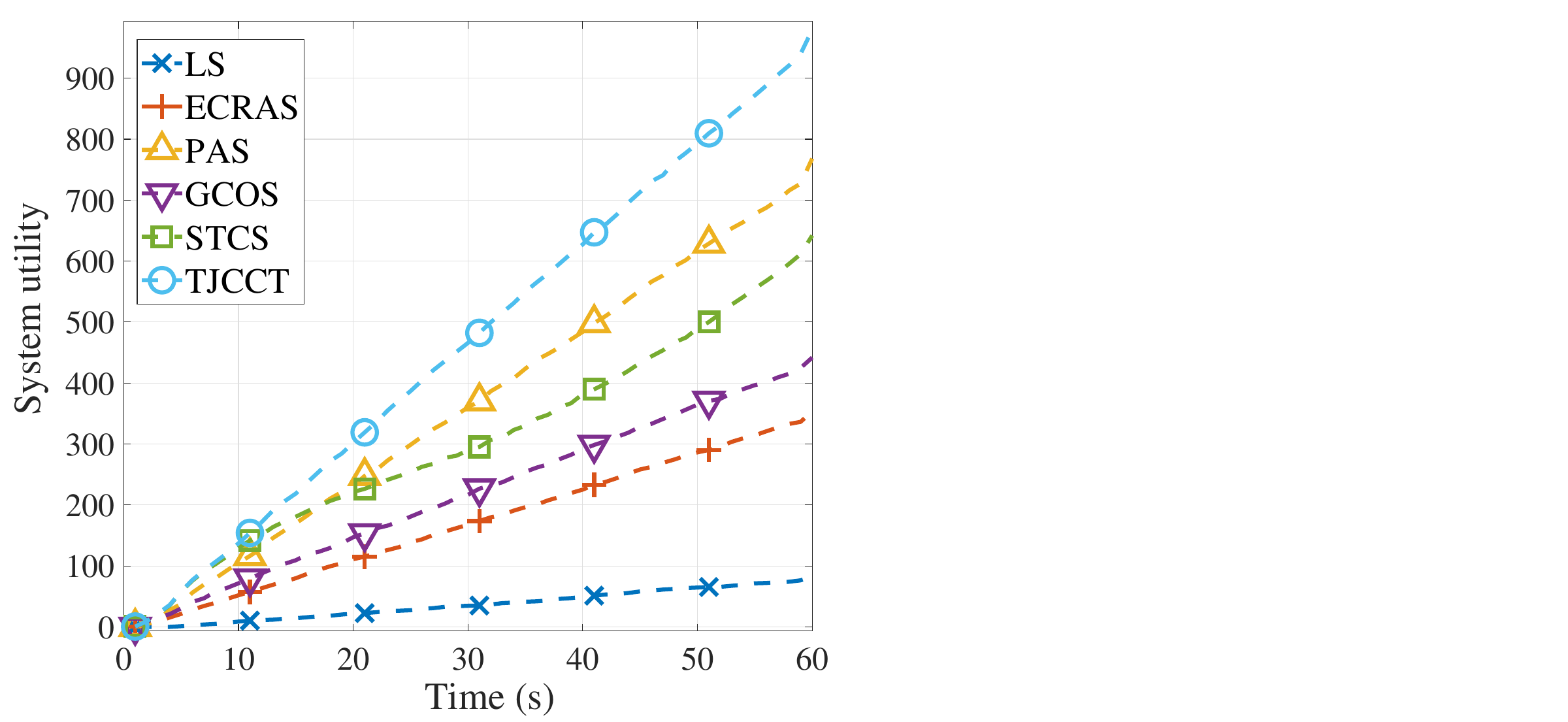}
		\end{minipage}
	}
	\subfigure[Average processing rate]
	{
		\begin{minipage}[t]{0.23\linewidth}
			\centering
			\includegraphics[width=3in]{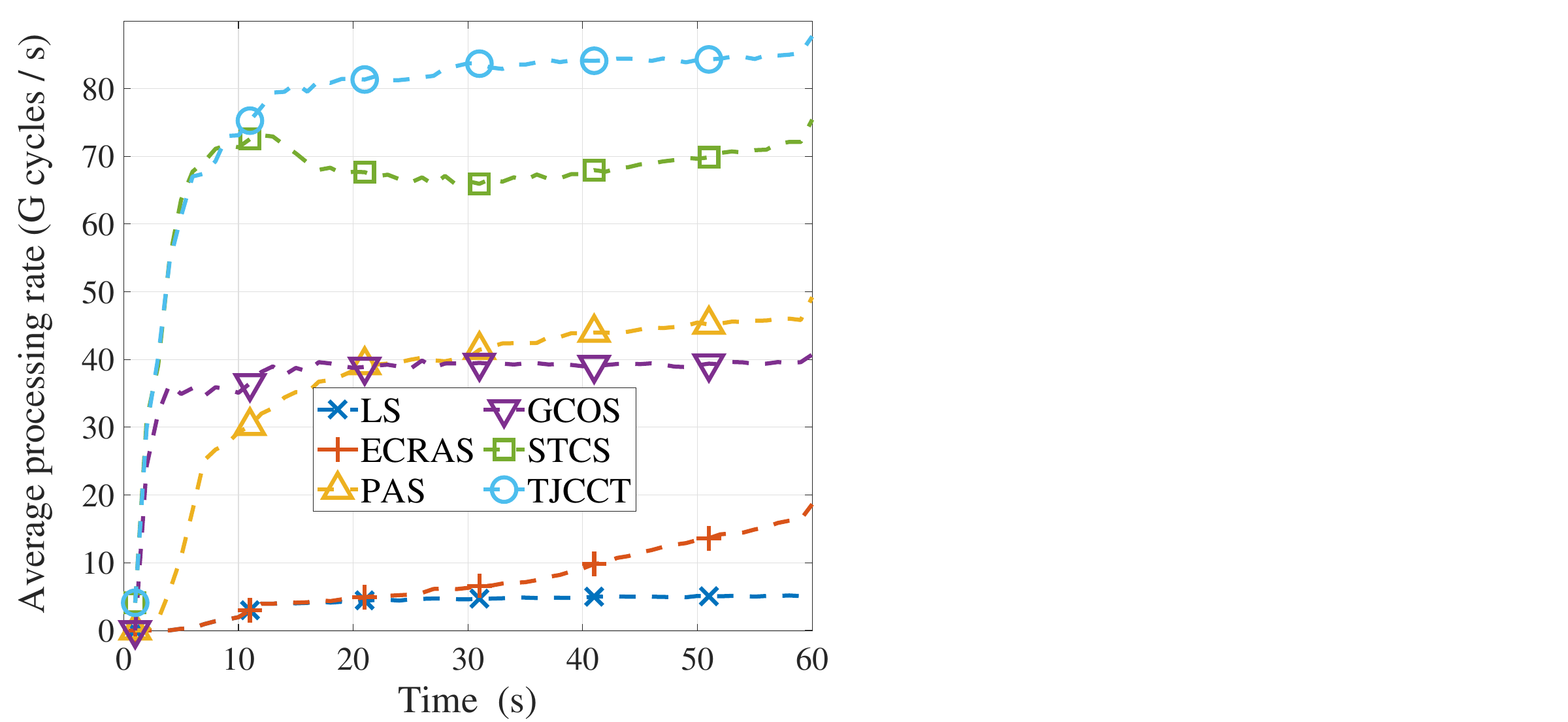}	
		\end{minipage}
	}
	\subfigure[Average completion delay]
	{
		\begin{minipage}[t]{0.23\linewidth}
			\centering
			\includegraphics[width=3in]{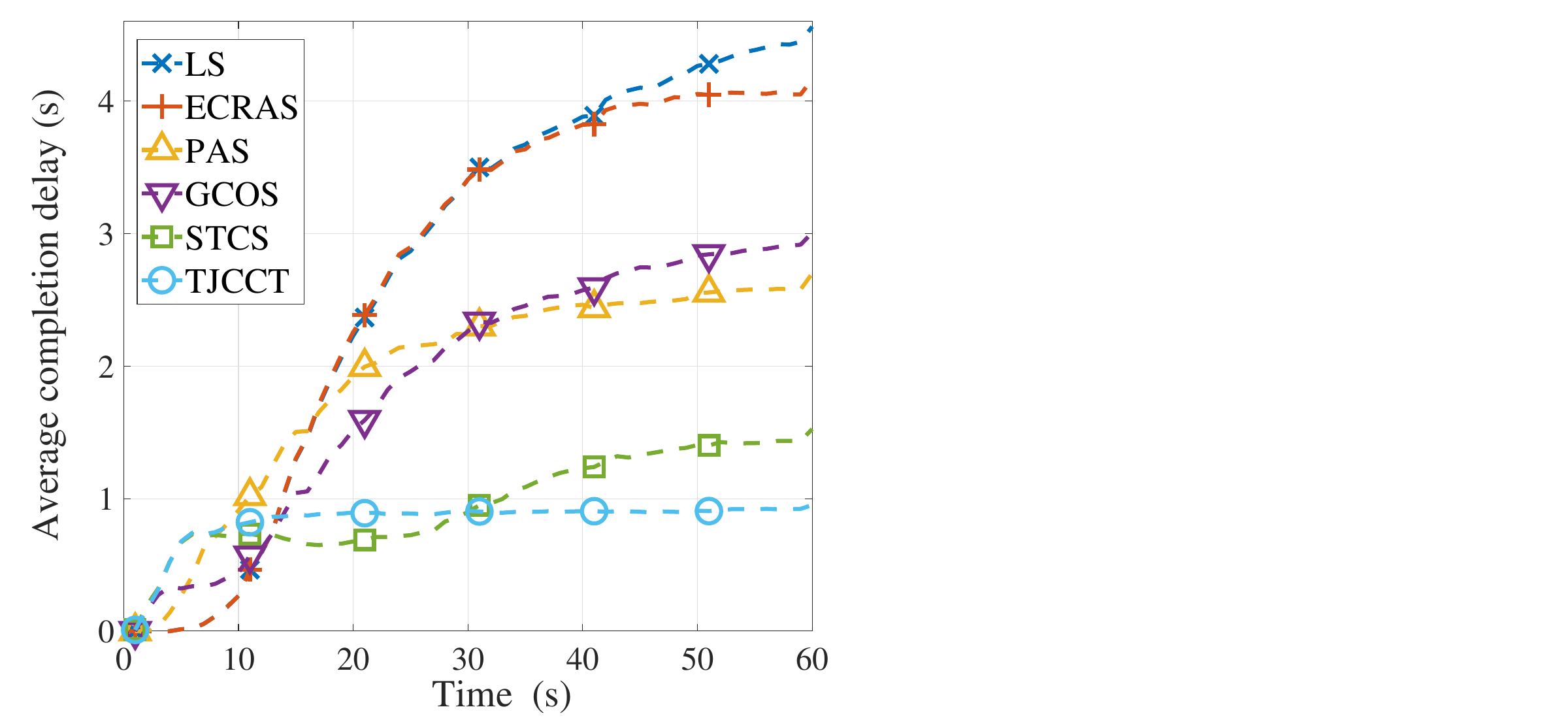}
		\end{minipage}
	}
	\subfigure[Average completion ratio]
	{
		\begin{minipage}[t]{0.23\linewidth}
			\centering
			\includegraphics[width=3in]{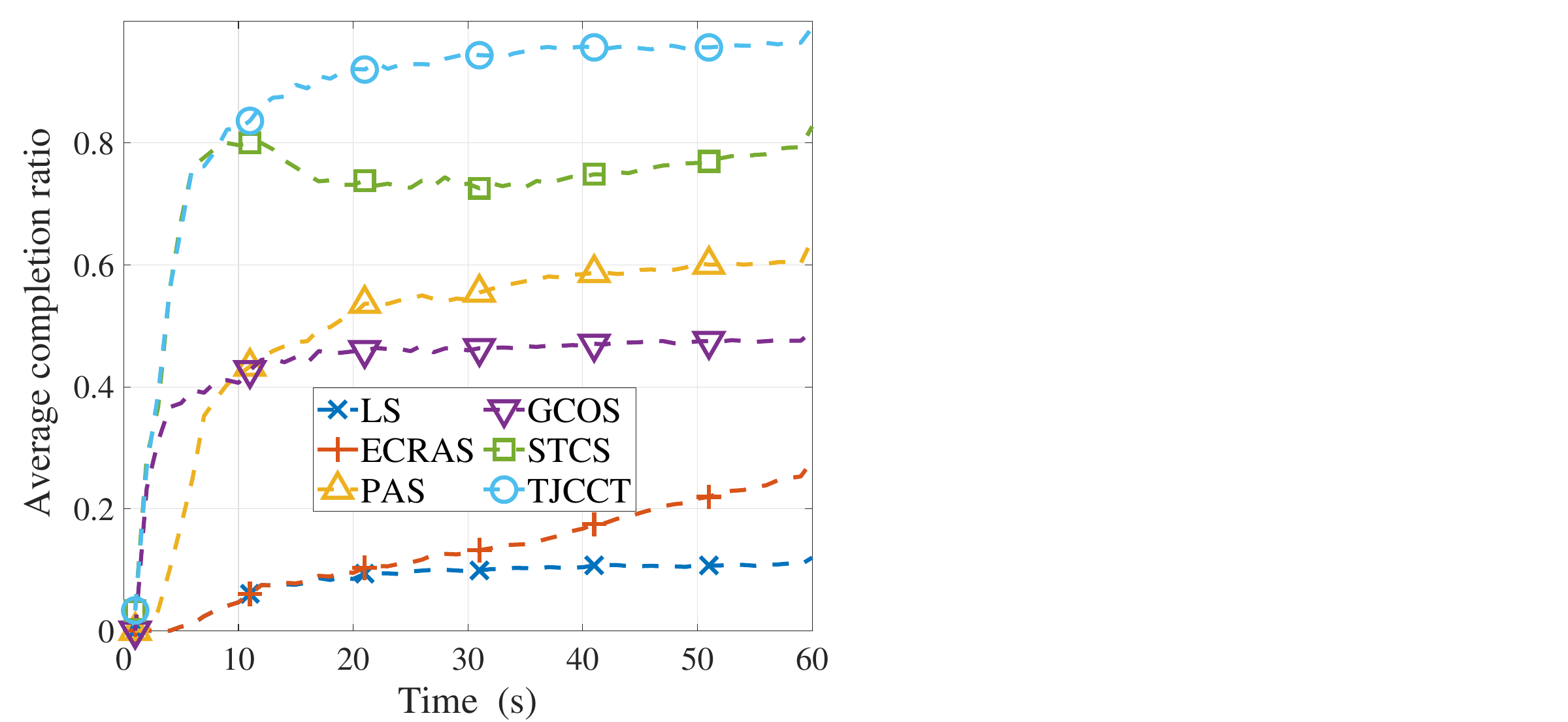}
		\end{minipage}
	}
	\centering
	\caption{System performance with time.}
	\label{fig_time}
	\vspace{-0.5em}
\end{figure*}

\subsubsection{Impact of Parameters}

\par \textbf{Impact of Average Computation Size.} Figs. \ref{fig_task}(a), \ref{fig_task}(b), \ref{fig_task}(c), and \ref{fig_task}(d) show the impact of average computation size on system utility, average processing rate, average completion delay, and average completion ratio, respectively. It can be observed from Fig. \ref{fig_task} that the proposed TJCCT achieves overall superior performances among the six algorithms as the workload increases. Specifically, in comparison to the other algorithms, TJCCT exhibits a relatively gradual decline in system utility, a significant upward trend in processing rate, a slow increase in the cost of completion delay, and a minimal decrease in the success ratio as the workload becomes heavier. Moreover, compared with ECRAS, PAS, GCOS, and STCS, the proposed TJCCT achieves approximately 138\%, 48\%, 345\%, and 87\% performance gains in terms of the average completion rate when the average computation size reaches 10 Mb. Additionally, JTCCT significantly outperforms the other algorithms in both completion delay and completion ratio, falling within the ranges of 32\% to 55\% and 46\% to 377\%, respectively. 

\par As the computational workload increases, the significant performance deterioration of LS, ECRAS, PAS, and GCOS can be attributed to the following reasons. First, LS relies on the computing capabilities of MDs, rendering it unable to handle intensive tasks. Furthermore, ECRAS, with its average resource allocation, overlooks the diverse demands of computation tasks. Moreover, PAS lacks adaptability to varying workloads due to the fixed price incentive factor. Besides, the competitive offloading mechanism of GCOS can lead to competition among MDs. Finally, the segment-based trajectory control of STCS primarily focuses on UAV energy consumption without considering the offloading demands of MDs. Consequently, this set of simulation results demonstrates that the proposed TJCCT is able to adapt to the heavy-loaded scenarios with significantly increased processing rate, relatively low costs of completion delay, and slightly decreased completion rate.

\begin{figure*}[!hbt] 
	\centering
	\subfigure[System utility]
	{
		\begin{minipage}[t]{0.23\linewidth}
			\raggedleft
			\includegraphics[width=3in]{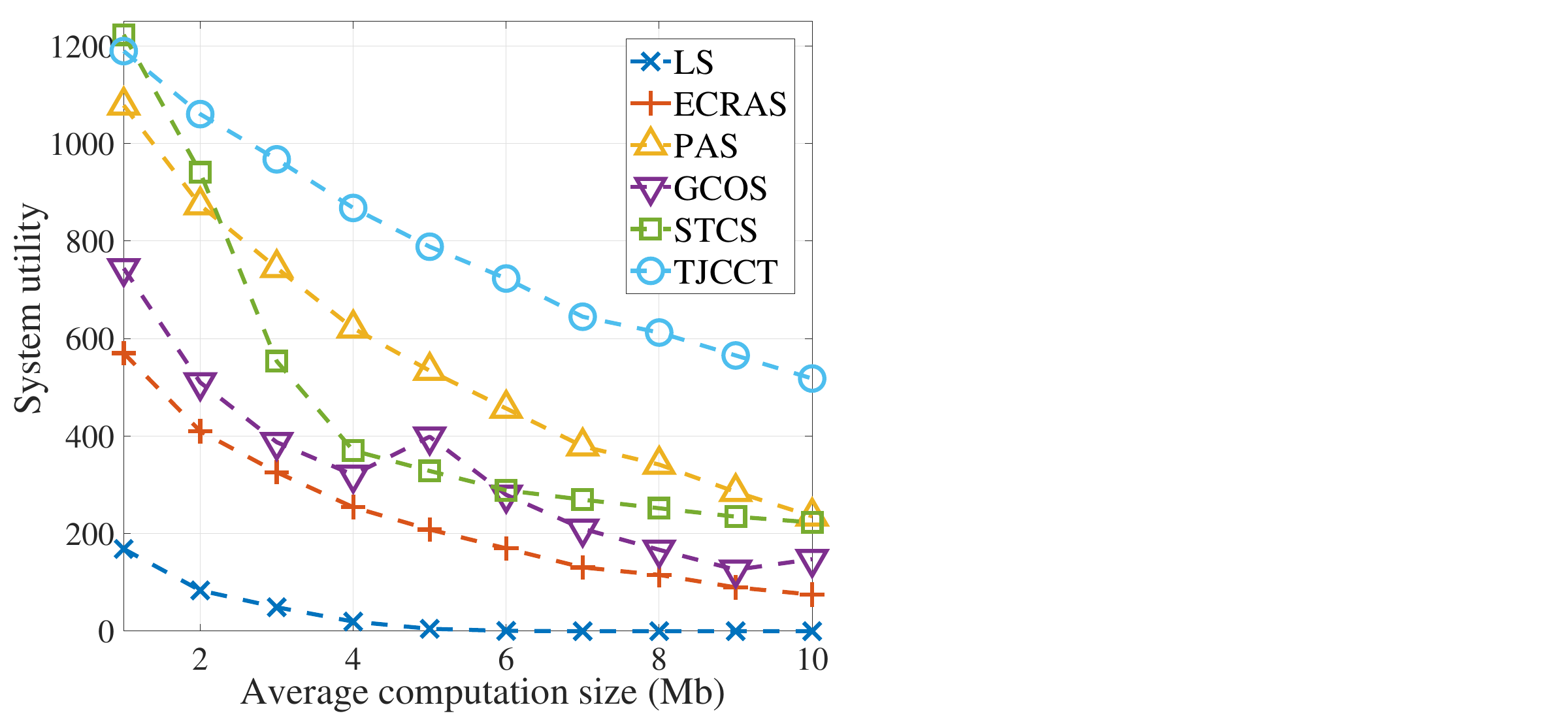}
		\end{minipage}
	}
	\subfigure[Average processing rate]
	{
		\begin{minipage}[t]{0.23\linewidth}
			\centering
			\includegraphics[width=3in]{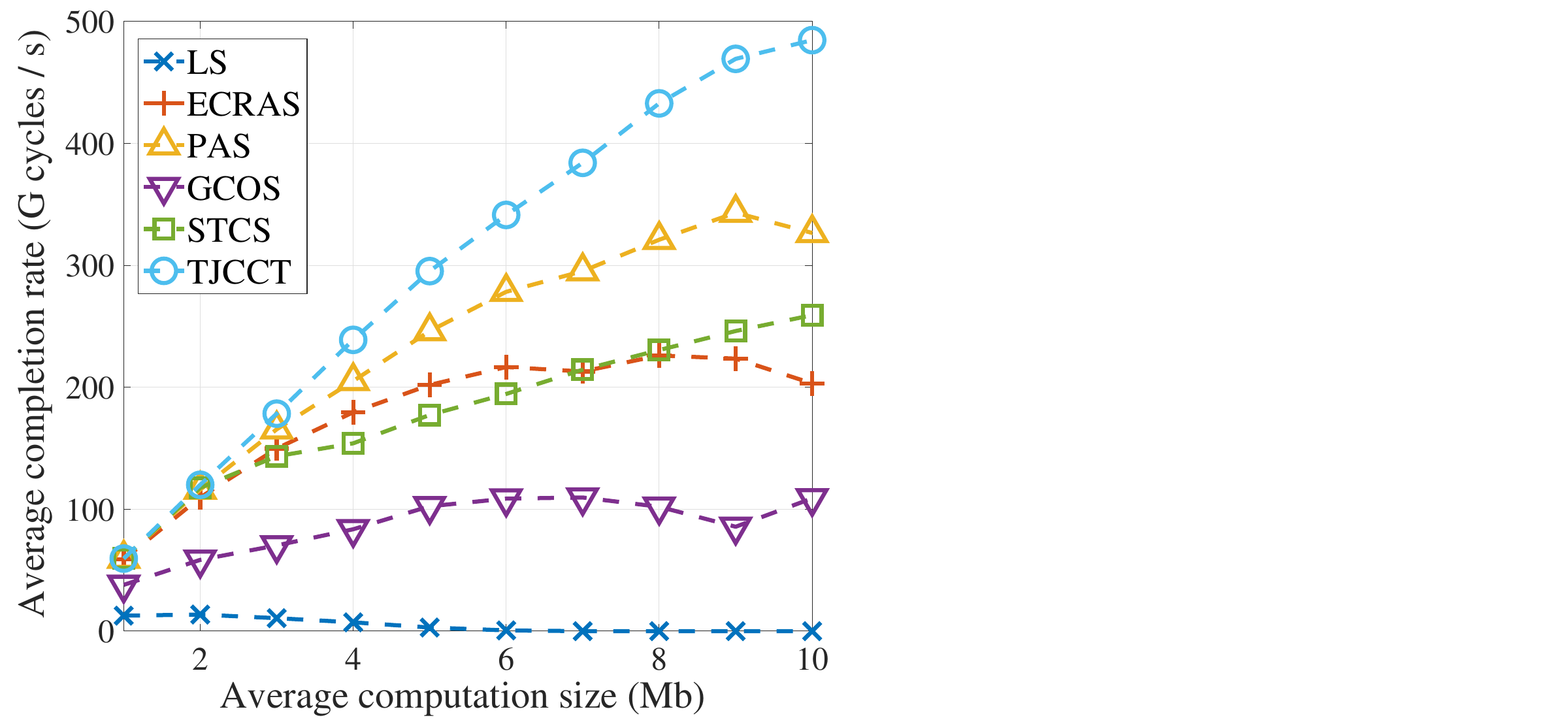}	
		\end{minipage}
	}
	\subfigure[Average completion delay]
	{
		\begin{minipage}[t]{0.23\linewidth}
			\centering
			\includegraphics[width=3in]{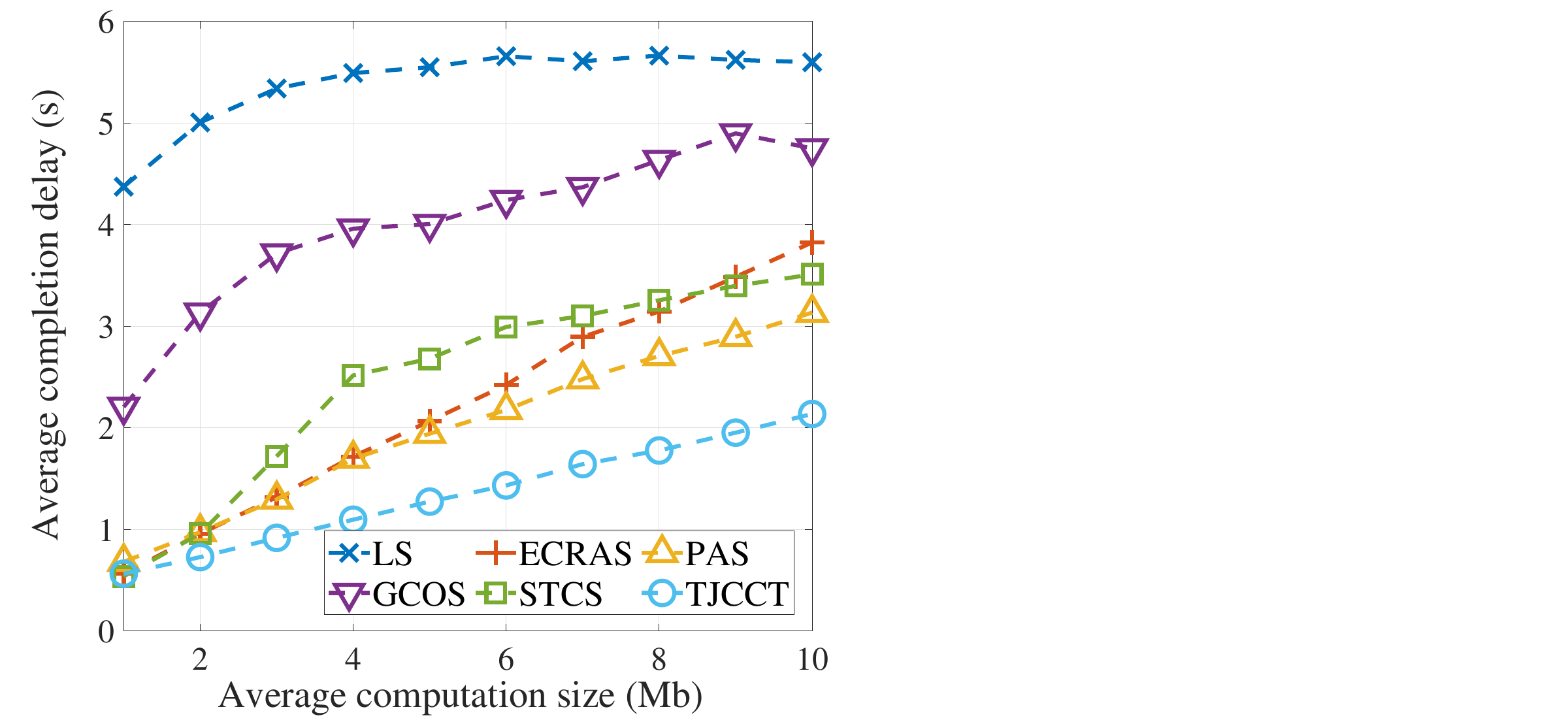}
		\end{minipage}
	}
	\subfigure[Average completion ratio]
	{
		\begin{minipage}[t]{0.23\linewidth}
			\centering
			\includegraphics[width=3in]{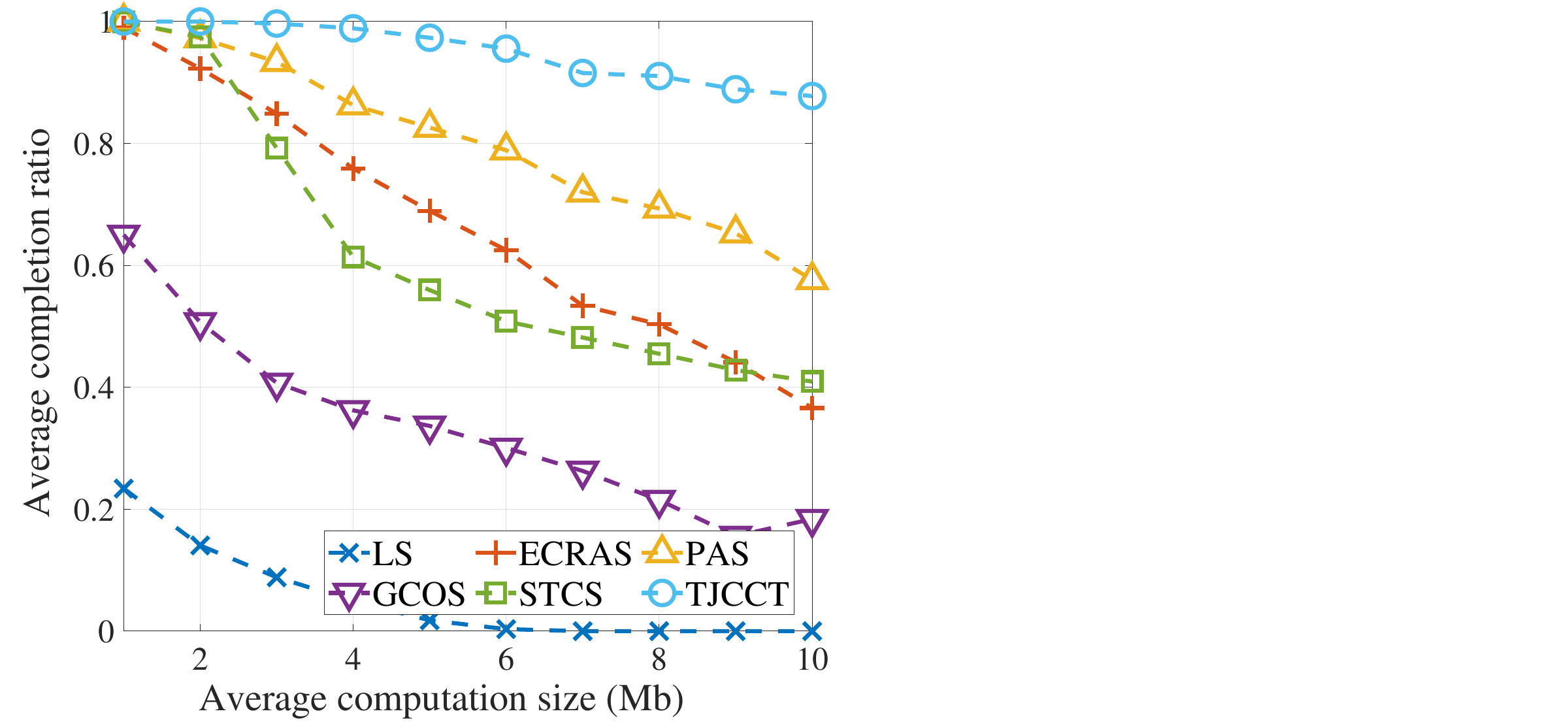}
		\end{minipage}
	}
	\centering
	\caption{System performance with average computation size.}
	\label{fig_task}
	\vspace{-0.5em}
\end{figure*}

\par \textbf{Impact of Average Computing Resources of MEC Server.} Figs. \ref{fig_cpu}(a), \ref{fig_cpu}(b),  \ref{fig_cpu}(c), and \ref{fig_cpu}(d) show the impact of average computing resources of MEC servers on system utility, average processing rate, average completion delay, and average completion ratio, respectively. Overall, it can be observed from Fig. \ref{fig_cpu} that the proposed TJCCT reveals superior performances among the six algorithms. Specifically, for ECRAS, Fig. \ref{fig_cpu}(a) shows a significant and consistent downward trend in system utility as the edge computing capabilities increase. Figs. \ref{fig_cpu}(b) and \ref{fig_cpu}(d) show that the completion ratio and completion rate of ECRAS initially remain almost constant but drop dramatically as the computing resources of MEC servers continuously increase (approximately exceeding 7 GHz). Contrary to the trends in Figs. \ref{fig_cpu}(a) and \ref{fig_cpu}(b), Fig. \ref{fig_cpu}(c) indicates that the average completion delay of ECRAS exhibits a slight downward trend initially but experiences a sudden upward trend when computing resources reaches 7 GHz. The sudden variations in Figs. \ref{fig_cpu}(a), \ref{fig_cpu}(a), and \ref{fig_cpu}(c) is explained as follows. Although MDs can achieve higher QoE with increased computing resources, the equal resource allocation of ECRAS could lead to excessive energy consumption for MEC servers and higher offloading costs for MDs, as the computing resources rise. As a result, more computation tasks are processed locally, leading to a marked deterioration in processing rate, completion delay, and completion ratio.

\par Furthermore, LS consistently delivers poor performance because it executes all computation tasks locally on the MDs without utilizing the assistance of MEC servers. In addition, PAS exhibits slight performance variation with increasing computing resources because computing resource allocation primarily depends on the fixed price incentive factors. Besides, both GCOS and STCS display relatively inferior performance gains in completion ratio, processing rate, and completion delay due to their competitive offloading and energy-dependent trajectory strategies. Comparatively, TJCCT achieves on-demand and efficient resource utilization by adjusting computation offloading, resource allocation, and trajectory control based on the available computing resources of MEC servers and the offloading requirements of MDs. Note that the proposed TJCCT exhibits a significant downward trend when the computing resource is approximately less than 2 GHz. This is because most of the tasks are processed locally when the remote computing resource is insufficient, resulting in reduced energy consumption overhead for both computing and flying. In conclusion, this set of simulation results indicates that the proposed TJCCT can achieve sustainable computing resource utilization and prevent resource over-utilization.

\begin{figure*}[!hbt] 
	\centering
	\subfigure[System utility]
	{
		\begin{minipage}[t]{0.23\linewidth}
			\raggedleft
			\includegraphics[width=3in]{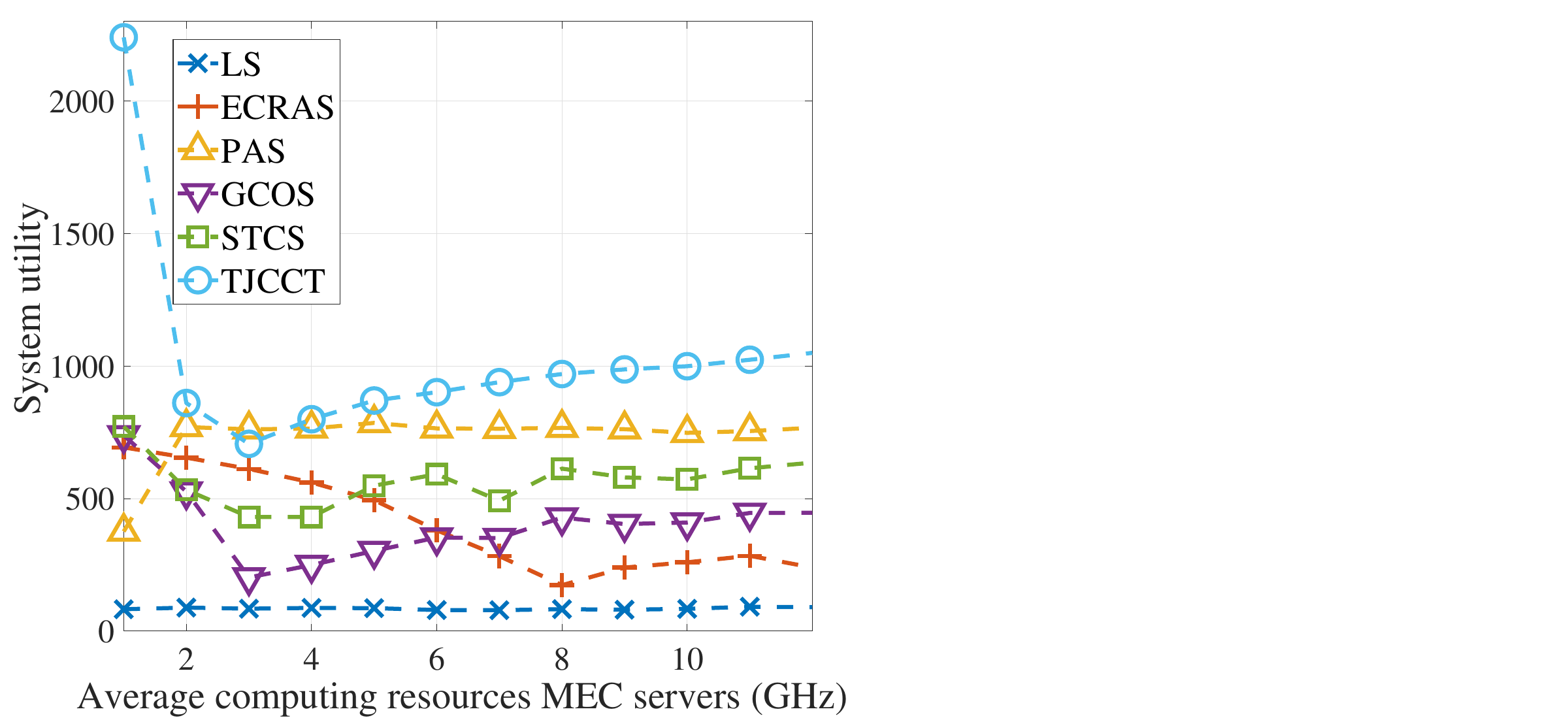}
		\end{minipage}
	}
	\subfigure[Average processing rate]
	{
		\begin{minipage}[t]{0.23\linewidth}
			\centering
			\includegraphics[width=3in]{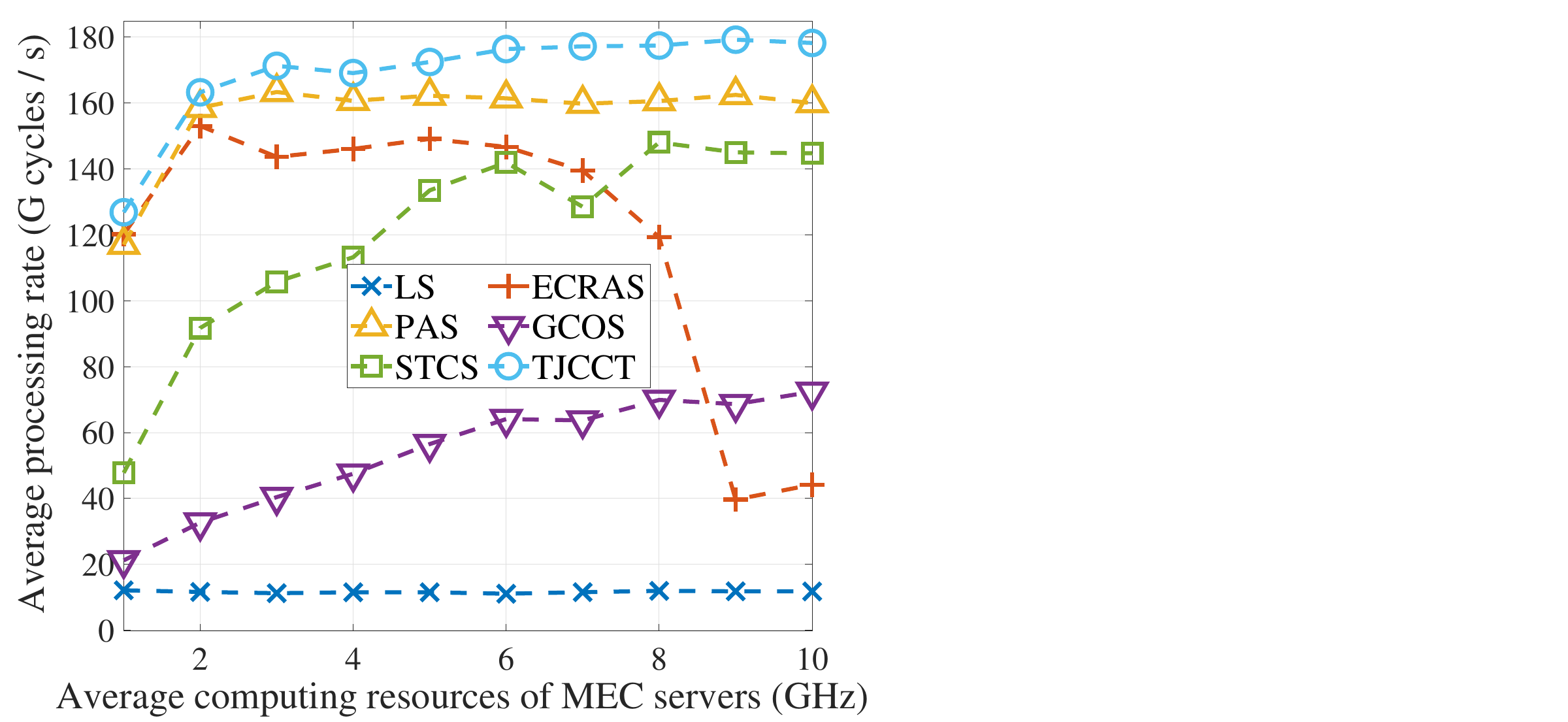}	
		\end{minipage}
	}
	\subfigure[Average completion delay]
	{
		\begin{minipage}[t]{0.23\linewidth}
			\centering
			\includegraphics[width=3in]{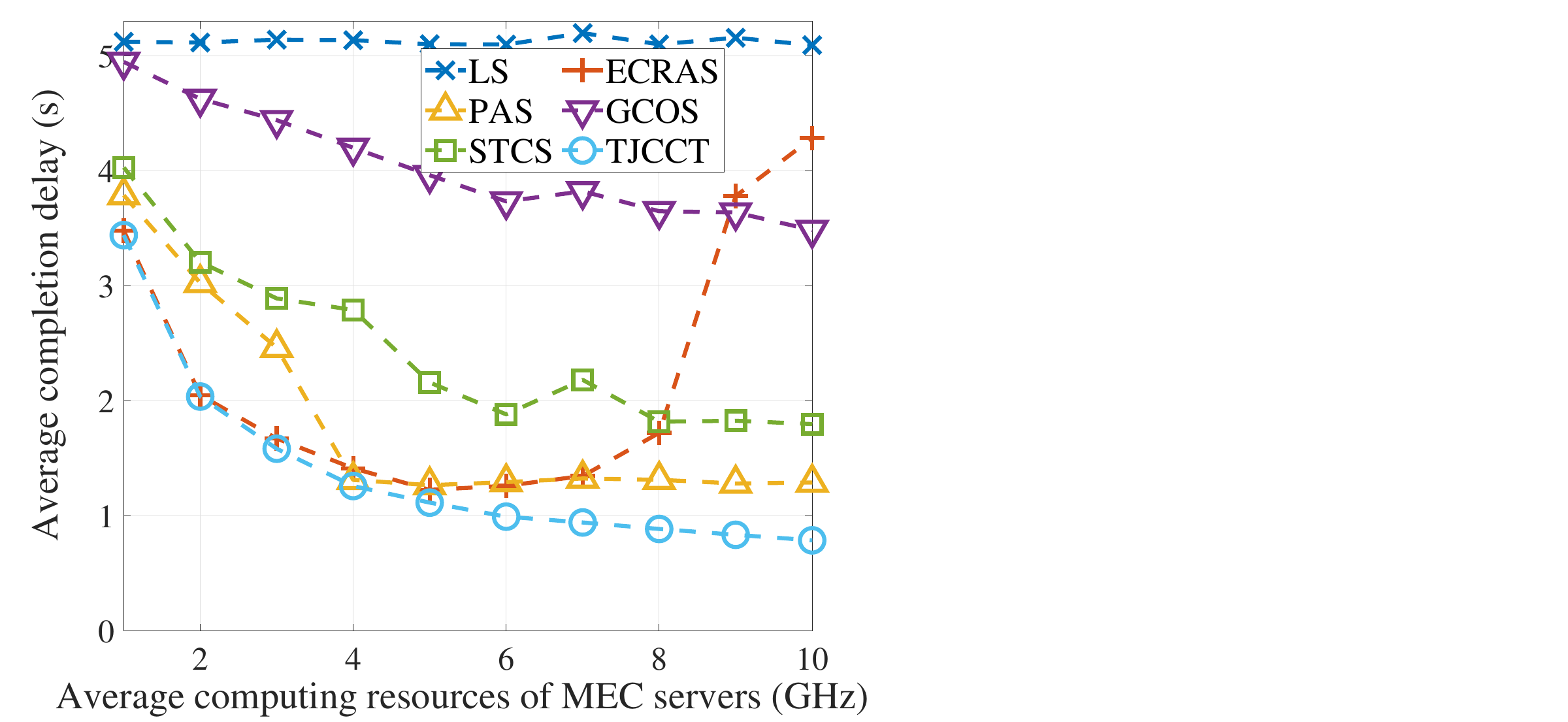}
		\end{minipage}
	}
	\subfigure[Average completion ratio]
	{
		\begin{minipage}[t]{0.23\linewidth}
			\centering
			\includegraphics[width=3in]{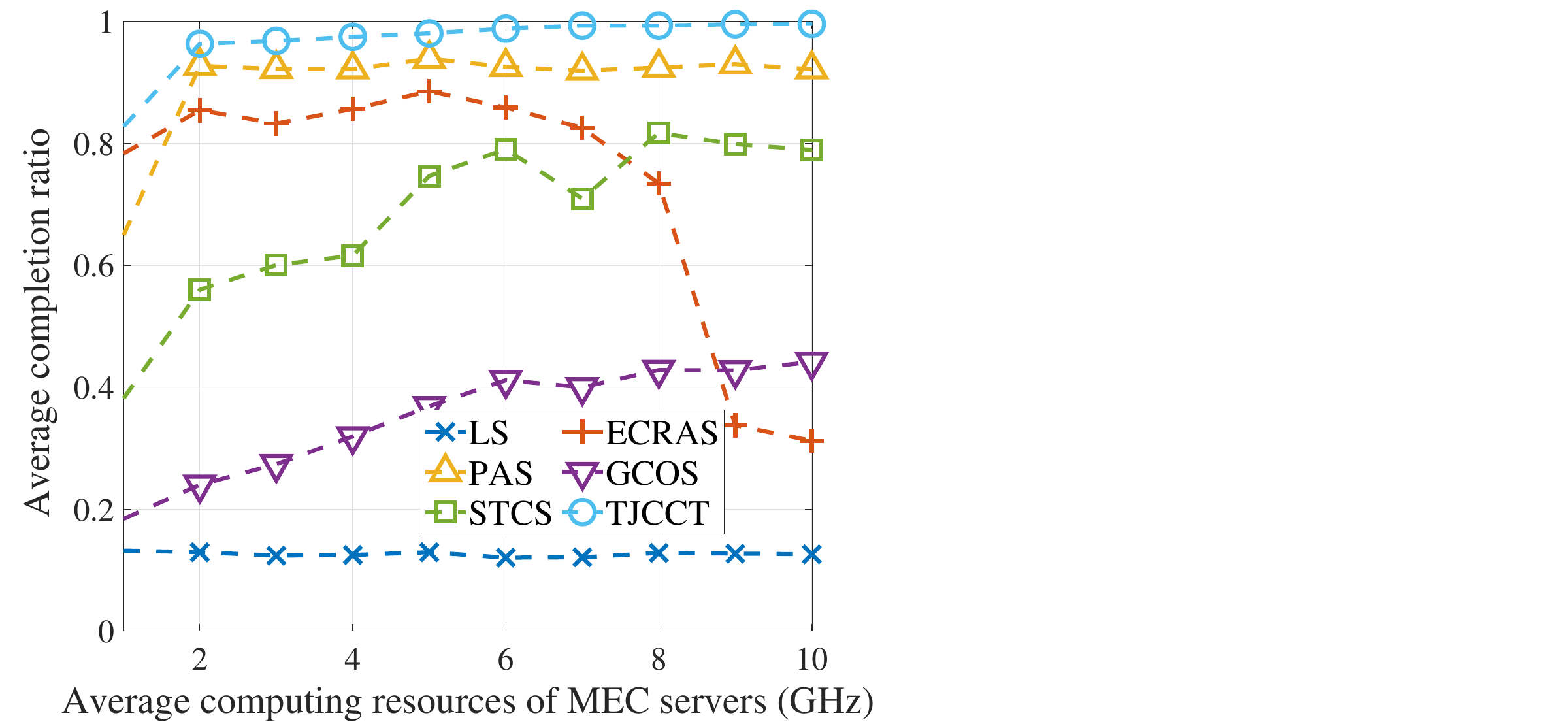}
		\end{minipage}
	}
	\centering
	\caption{System performance with average computing resources of MEC servers.}
	\label{fig_cpu}
	\vspace{-0.5em}
\end{figure*}

\par \textbf{Impact of MD Numbers.} Figs. \ref{fig_veh}(a), \ref{fig_veh}(b), \ref{fig_veh}(c), and \ref{fig_veh}(d) depict the impact of MD numbers on system utility, average processing rate, average completion delay, and average completion ratio, respectively. Overall, TJCCT consistently demonstrates superior performance in terms of system utility, average processing rate, average completion delay, and average completion ratio as the number of MDs increases. Specifically, Figs. \ref{fig_veh}(a) and \ref{fig_veh}(b) reveal that with an increasing number of MDs, the system utility and processing rate of TJCCT steadily rise, while those of the other algorithms exhibit minor initial upward trends, followed by gradual slowdowns, and approach stability or show decreasing tendencies. Furthermore, as shown in Figs. \ref{fig_veh}(c) and \ref{fig_veh}(d), it is evident that with an increasing number of MDs, the average completion delay and completion ratio of LS remain the worst levels while those of ECRAS, PAS, GCOS, and STCS exhibit obvious deterioration trends. In comparison, JTCCT show consistently superior performances with slight performance degradation in the average completion ratio and average completion delay, which vary within the range of 0.70 s to 1.25 s and 1s to 0.93 s, respectively. More specifically, compared with LS, ECRAS, PAS, GCOS, and STCS, the proposed JTCCT can respectively reduce the completion delay by 79\%, 41\%, 23\%, 63\%, 77\%, and can respectively improve the completion ratio by 661\%, 30\%, 8\%, 312\%, 68\% in the relative dense scenario ($|\mathcal{I}|\geq 90$). 

\par The main reasons for the phenomena in Fig. \ref{fig_veh} can be explained as follows. On the one hand, increasing amounts of computation are accomplished with the rising number of MDs, leading to the initial performance gains of most comparative algorithms. On the other hand, without efficient strategies for computation offloading, resource allocation, or trajectory control, the increasingly growing MDs could increase the competition among MDs and resource shortage at MEC servers, which further causes performance saturation or degradation. In conclusion, the proposed TJCCT has better scalability with an increasing number of MDs.

\begin{figure*}[!hbt] 
	\centering
	\setlength{\abovecaptionskip}{2pt}%
	\setlength{\belowcaptionskip}{2pt}%
	\subfigure[System utility]
	{
		\begin{minipage}[t]{0.23\linewidth}
			\raggedleft
			\includegraphics[width=3in]{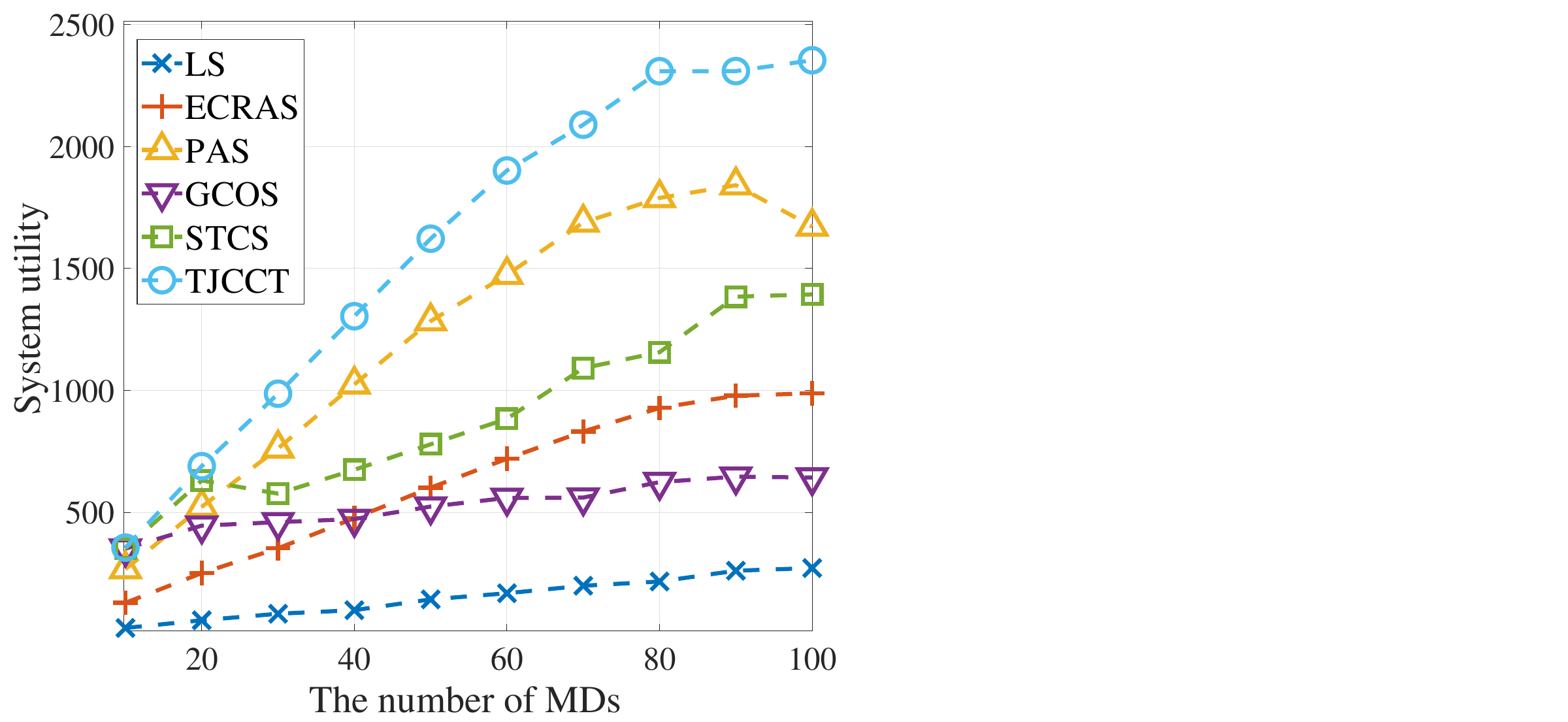}
		\end{minipage}
	}
	\subfigure[Average processing rate]
	{
		\begin{minipage}[t]{0.23\linewidth}
			\centering
			\includegraphics[width=3in]{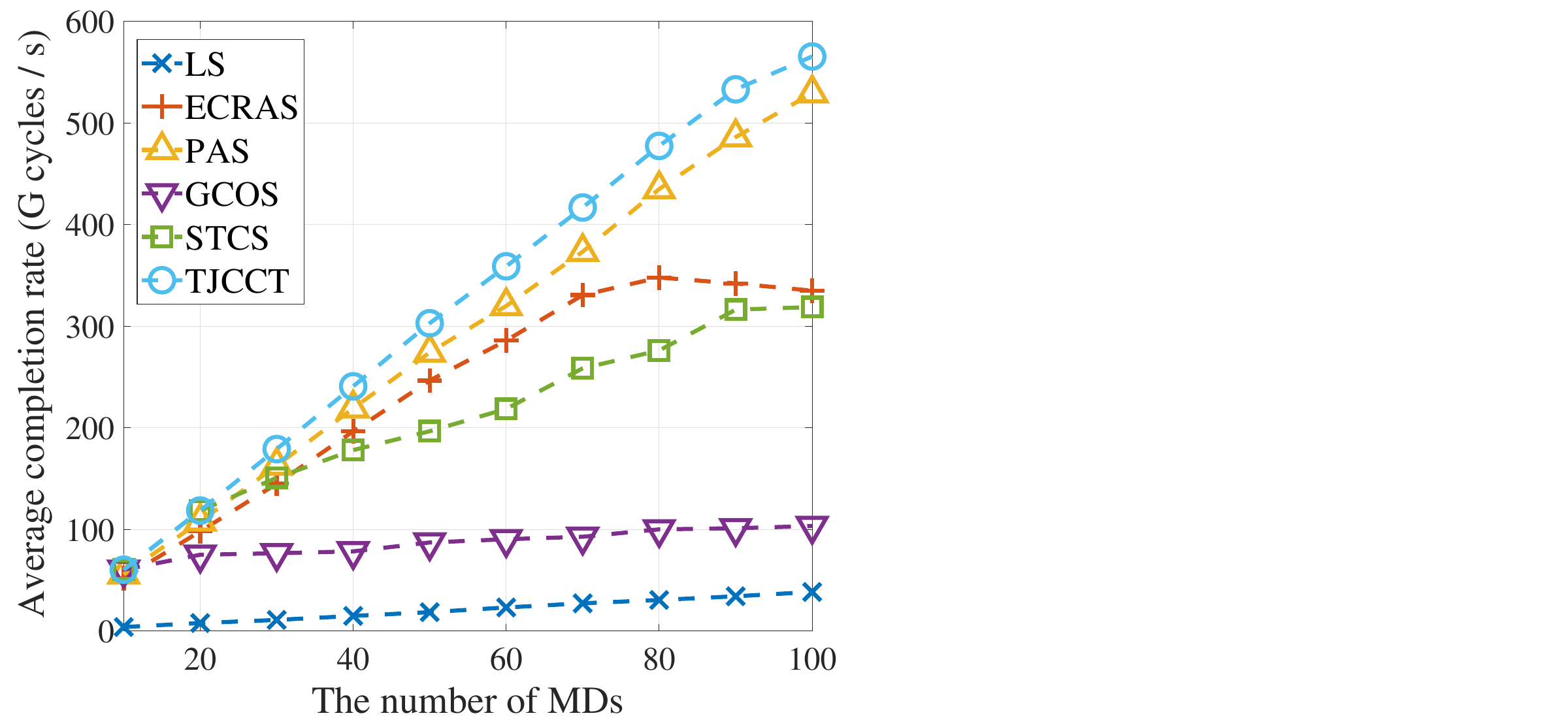}	
		\end{minipage}
	}
	\subfigure[Average completion delay]
	{
		\begin{minipage}[t]{0.23\linewidth}
			\centering
			\includegraphics[width=3in]{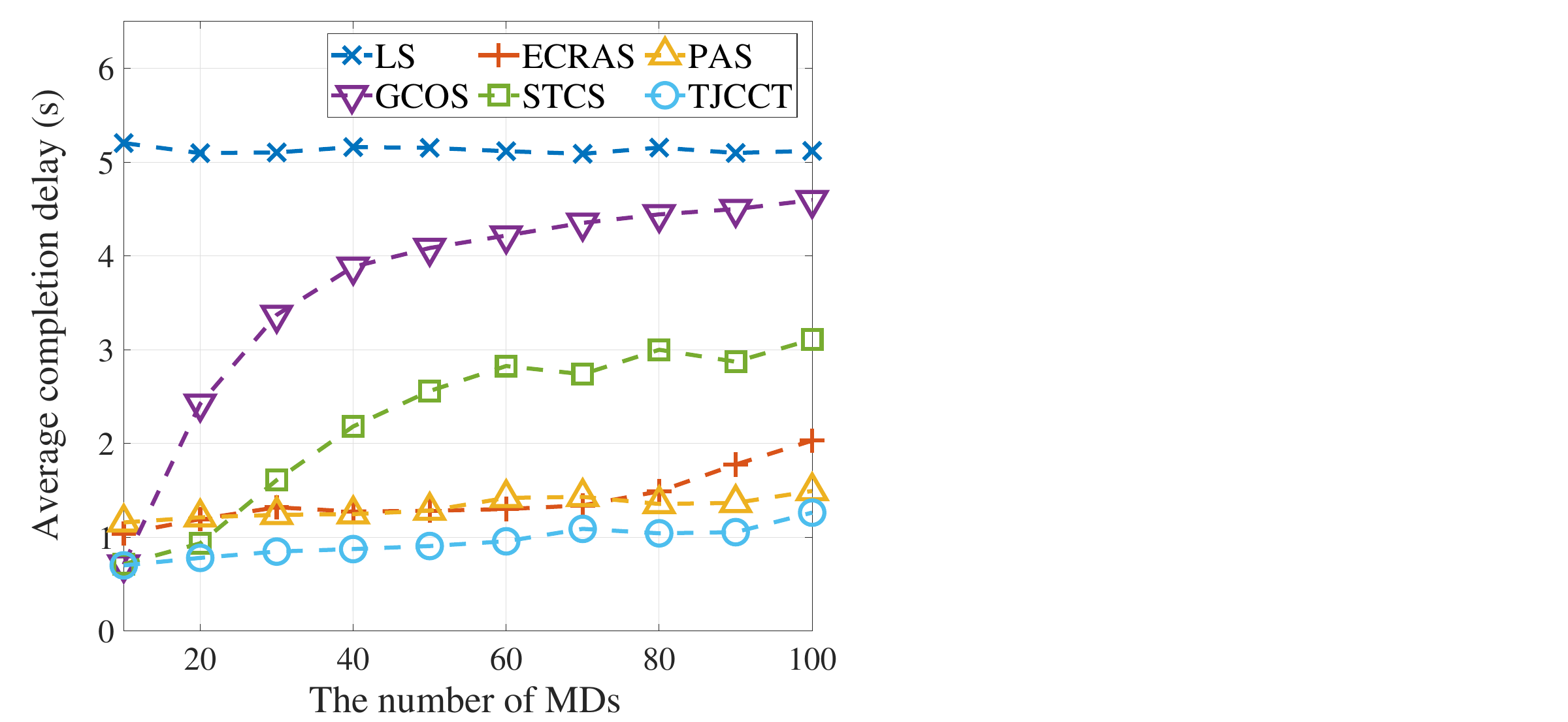}
		\end{minipage}
	}
	\subfigure[Average completion ratio]
	{
		\begin{minipage}[t]{0.23\linewidth}
			\centering
			\includegraphics[width=3in]{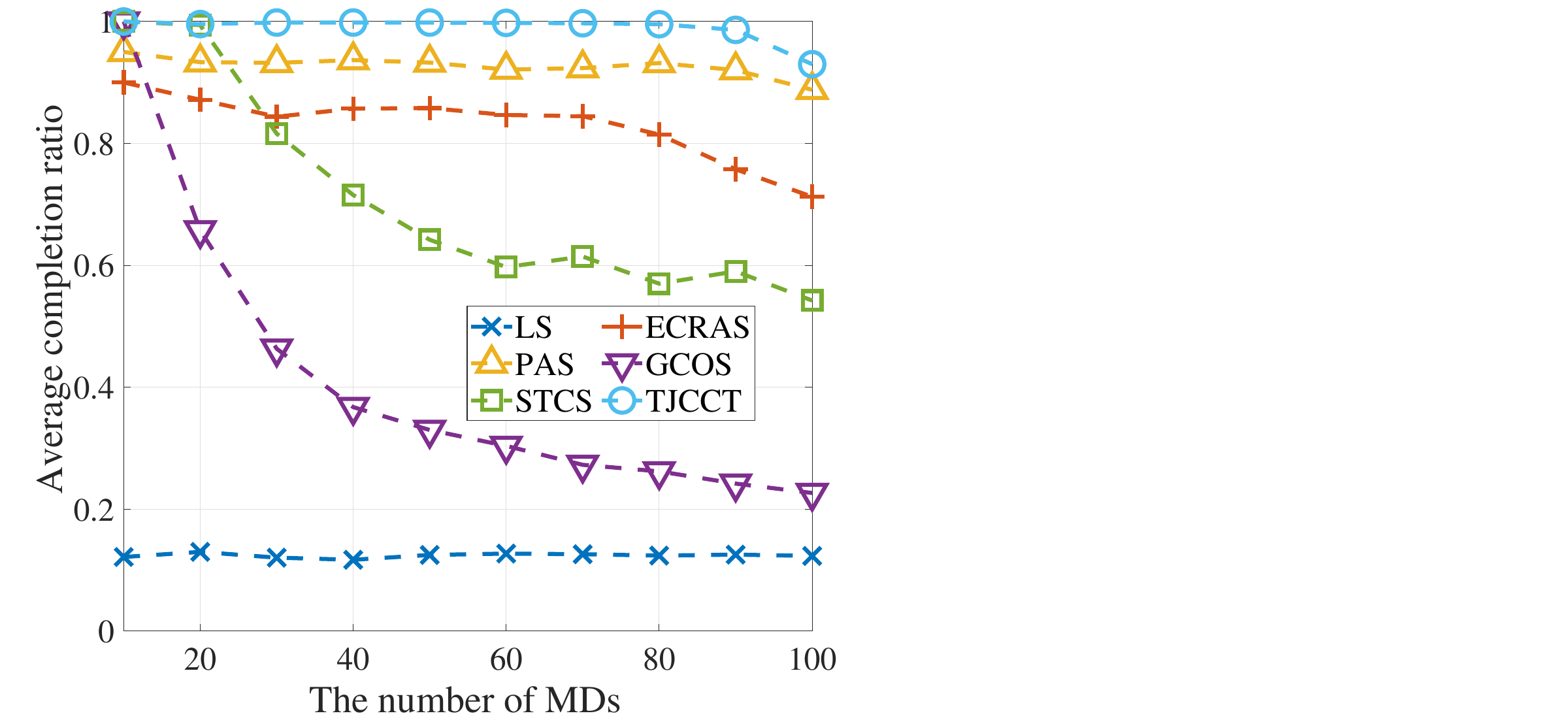}
		\end{minipage}
	}
	\centering
	\caption{System performance with the number of MDs.}
	\label{fig_veh}
	\vspace{-0.5em}
\end{figure*}

\begin{figure}[!hbt] 
	\centering
         \setlength{\abovecaptionskip}{2pt}%
	\setlength{\belowcaptionskip}{2pt}%
	\includegraphics[width=3.3in]{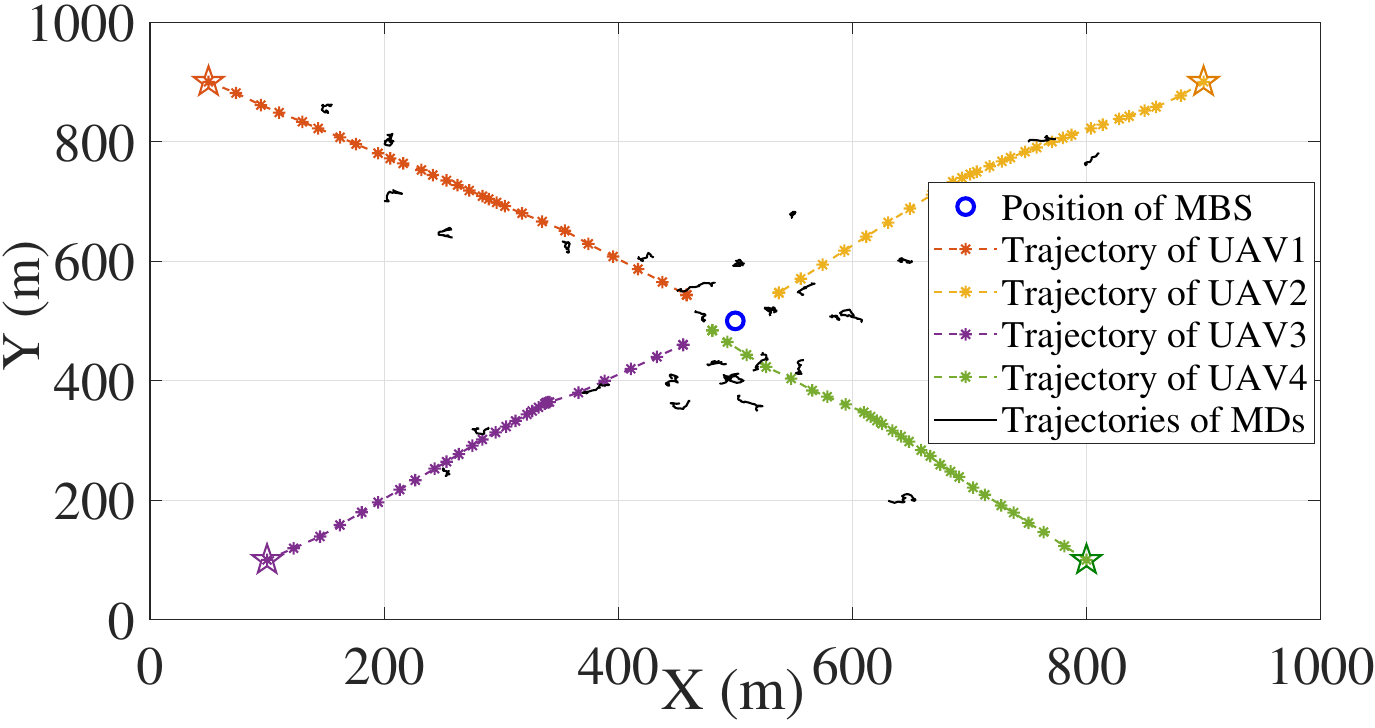}
	\caption{Trajectories of the MDs and UAVs.}
	\label{fig_tra}
        \vspace{-0.5em}
\end{figure}

\subsubsection{UAV Trajectory}  

\par Fig. \ref{fig_tra} shows the trajectories of the MDs and UAVs. It can be observed the trajectories of UAVs are consistent with the intuition. Specifically, UAVs tend to follow the trajectories of the MDs and tend to identify regions with dense data requirements. This is because the UAVs try to satisfy the QoE of the MDs for task offloading under the constraints of energy and velocity. Furthermore, it should be noted that the UAVs approach the final destinations but do not arrive at the destinations. The reason is that UAVs maintain a safe distance between each other to avoid collisions. In conclusion, the simulation result in Fig. \ref{fig_tra} demonstrates that the UAVs can provide satisfied services according to the dynamic requirements of MDs while guaranteeing the safety of UAVs by adopting the trajectory control of the proposed TJCCT.

%
%

\section{Conclusion}
\label{sec_conclusion}

\par In this work, we study computing resource allocation, computation offloading, and UAV trajectory control for UAV-assisted MEC system. First, we employ a hierarchical framework to coordinate the collaboration among MDs, terrestrial edge, aerial edge, and the controller. Then, we formulate an optimization problem to maximize the system utility. To solve the MINLP problem, we propose the TJCCT which consists of two-timescale optimization methods. In the short timescale, we propose a price-incentive model for on-demand computing resource allocation and a matching mechanism-based method for computation offloading. In the long timescale, we propose a convex optimization-based method for UAV trajectory control. Besides, the stability, optimality, and complexity of TJCCT are theoretically proved. Simulation results demonstrate that TJCCT could achieve superior performances in terms of the system utility, average processing rate, average completion delay, and average completion ratio. Furthermore, TJCCT exhibits superior adaptability in heavy-loaded scenarios and demonstrates good scalability with an increasing number of MDs.

\ifCLASSOPTIONcaptionsoff
\newpage
\fi

\bibliographystyle{IEEEtran}
\bibliography{references_1.bib}

\begin{IEEEbiography}[{\includegraphics[width=1in,height=1.23in,clip,keepaspectratio]{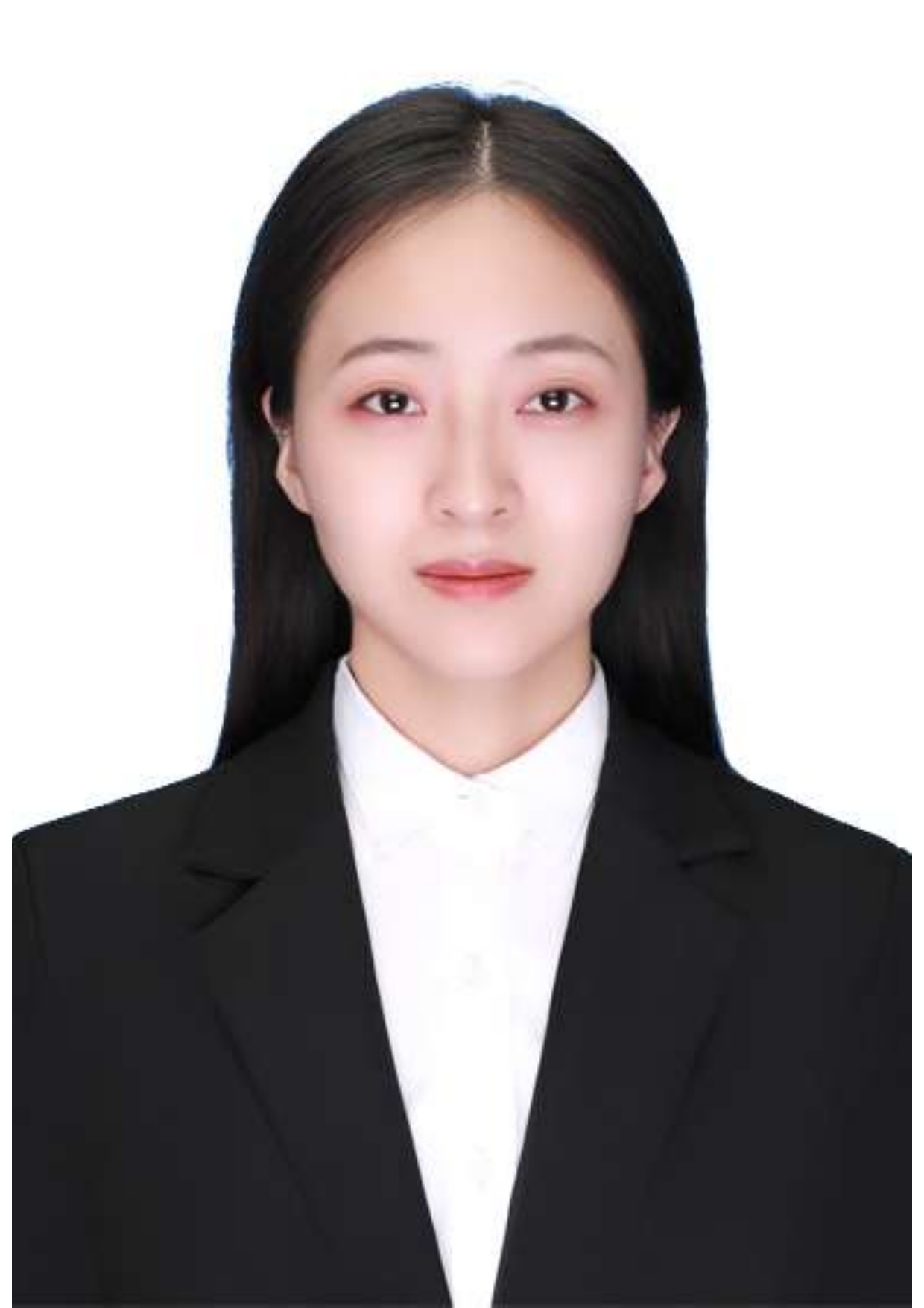}}]{Zemin Sun} received a BS degree in Software Engineering, an MS degree and a Ph.D degree in Computer Science and Technology from Jilin University, Changchun, China, in 2015, 2018, and 2022, respectively. Her research interests include vehicular networks, edge computing, and game theory. 
\end{IEEEbiography}

\begin{IEEEbiography}[{\includegraphics[width=1in,height=1.23in,clip,keepaspectratio]{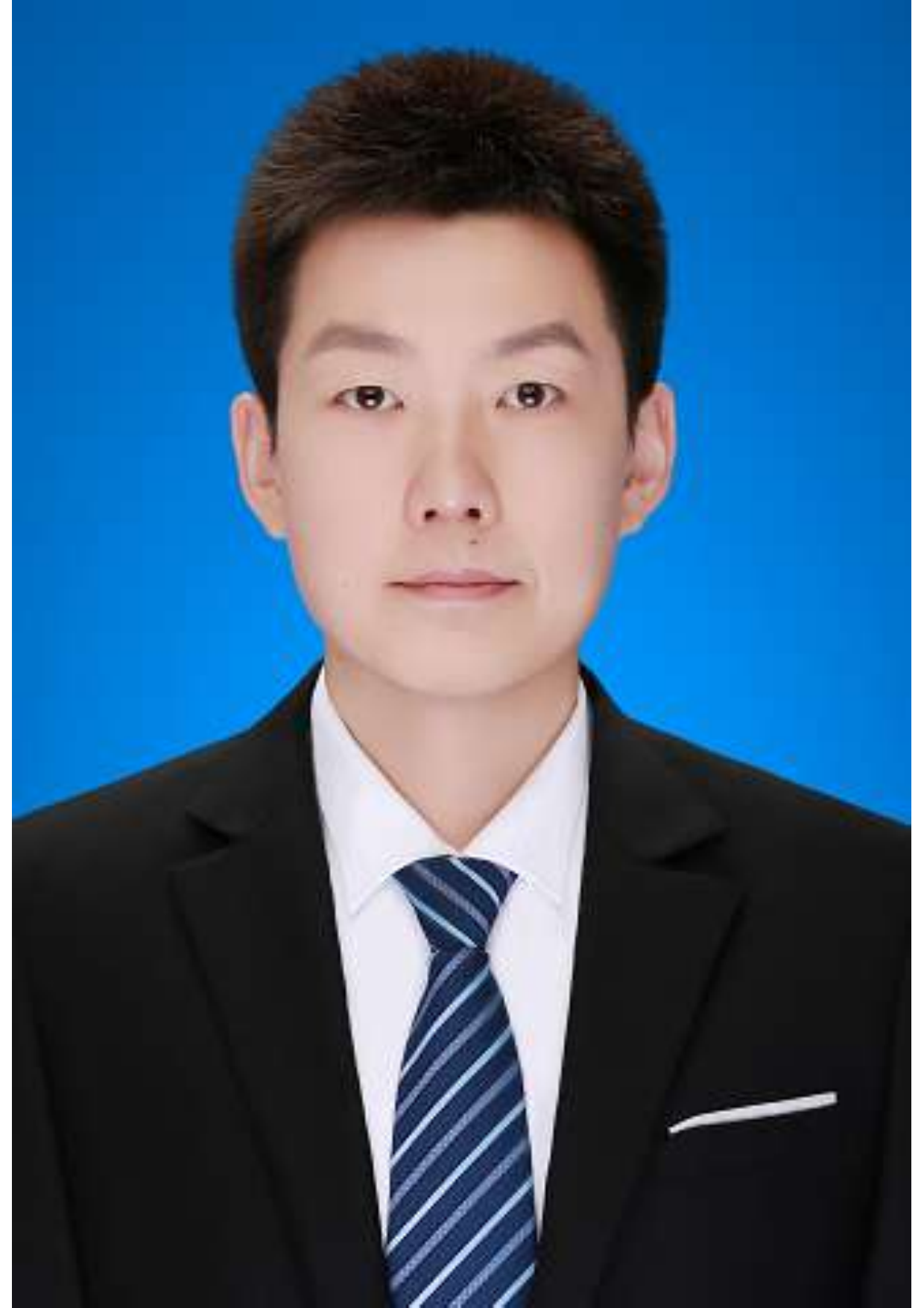}}]{Geng Sun} (S'17-M'19) received the B.S. degree in communication engineering from Dalian Polytechnic University, and the Ph.D. degree in computer science and technology from Jilin University, in 2011 and 2018, respectively. He was a Visiting Researcher with the School of Electrical and Computer Engineering, Georgia Institute of Technology, USA. He is an Associate Professor in College of Computer Science and Technology at Jilin University, and his research interests include wireless networks, UAV communications, collaborative beamforming and optimizations.
\end{IEEEbiography}

\begin{IEEEbiography}[{\includegraphics[width=1in,height=1.23in,clip,keepaspectratio]{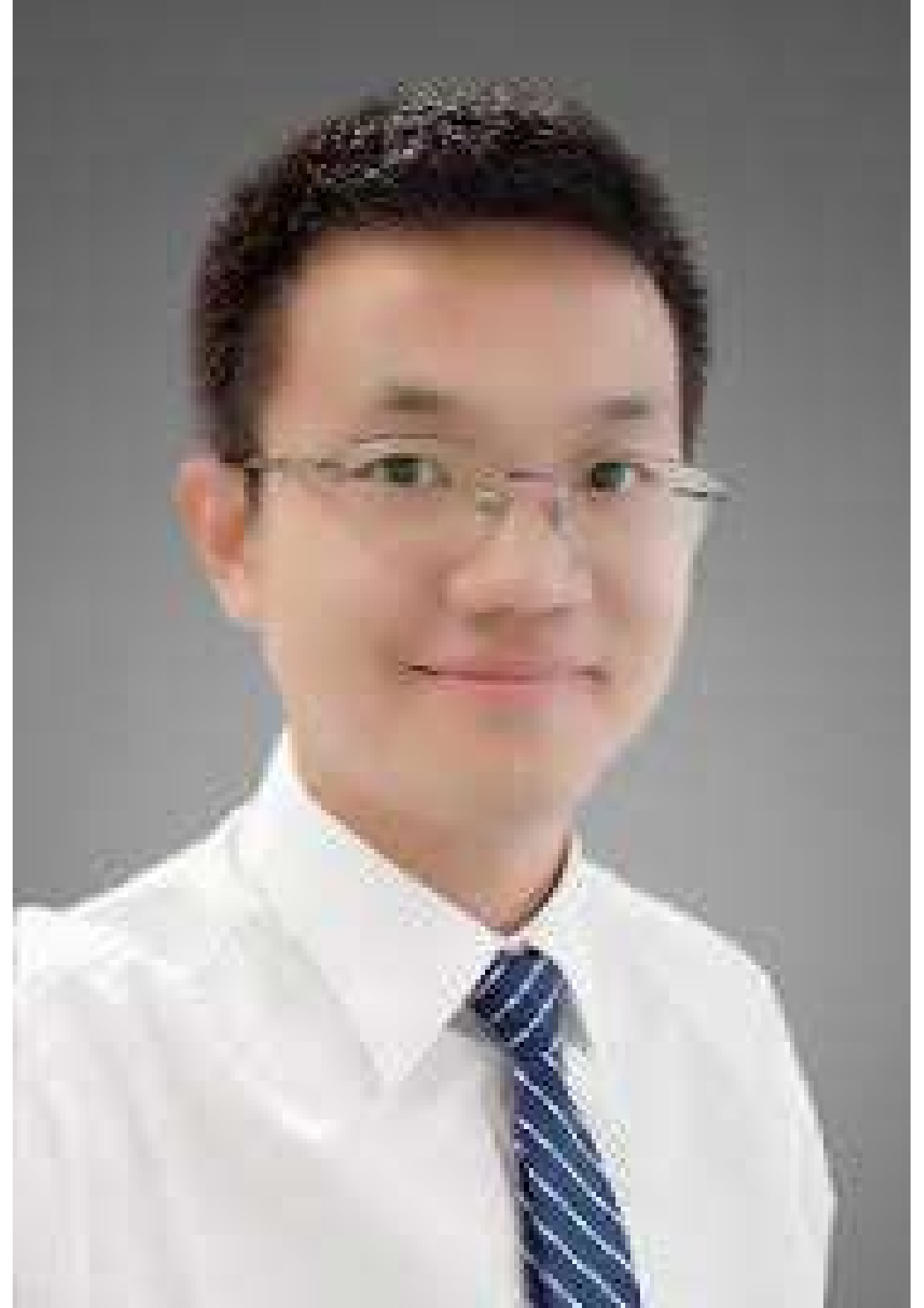}}]{Qingqing Wu} (S’13-M’16-SM’21) received the B.Eng. and the Ph.D. degrees in Electronic Engineering from South China University of Technology and Shanghai Jiao Tong University (SJTU) in 2012 and 2016, respectively. From 2016 to 2020, he was a Research Fellow in the Department of Electrical and Computer Engineering at National University of Singapore. He is currently an Associate Professor with Shanghai Jiao Tong University. His current research interest includes intelligent reflecting surface (IRS), unmanned aerial vehicle (UAV) communications, and MIMO transceiver design. 
\end{IEEEbiography}

\begin{IEEEbiography}[{\includegraphics[width=1in,height=1.23in,clip,keepaspectratio]{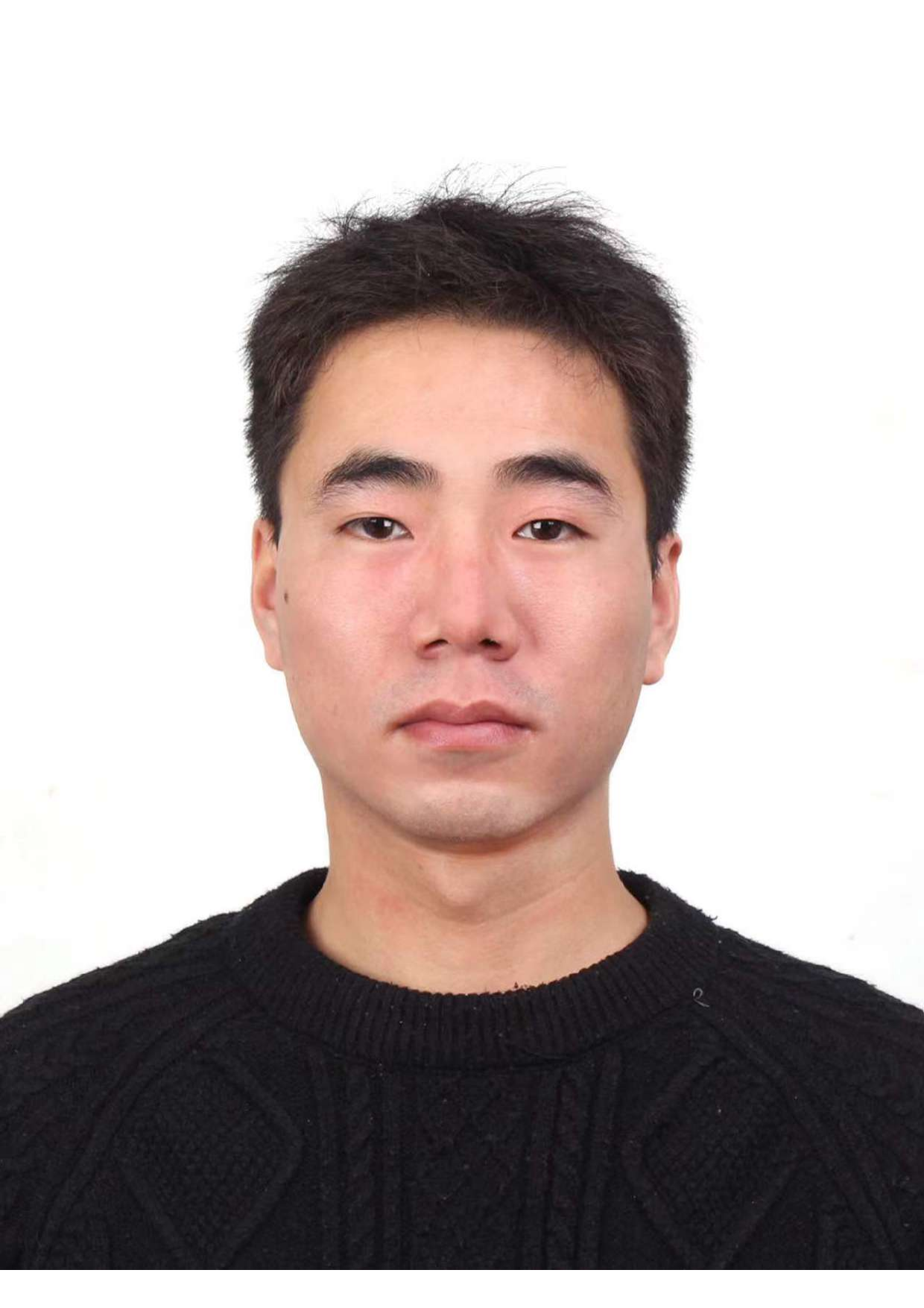}}]{Long He} received a BS degree in Computer Science and Technology from Chengdu University of Technology, Sichuan, China, in 2019. He is currently working toward the PhD degree in Computer Science and Technology at Jilin University, Changchun, China. His research interests include vehicular networks and edge computing.
\end{IEEEbiography}

\begin{IEEEbiography}[{\includegraphics[width=1in,height=1.23in,clip,keepaspectratio]{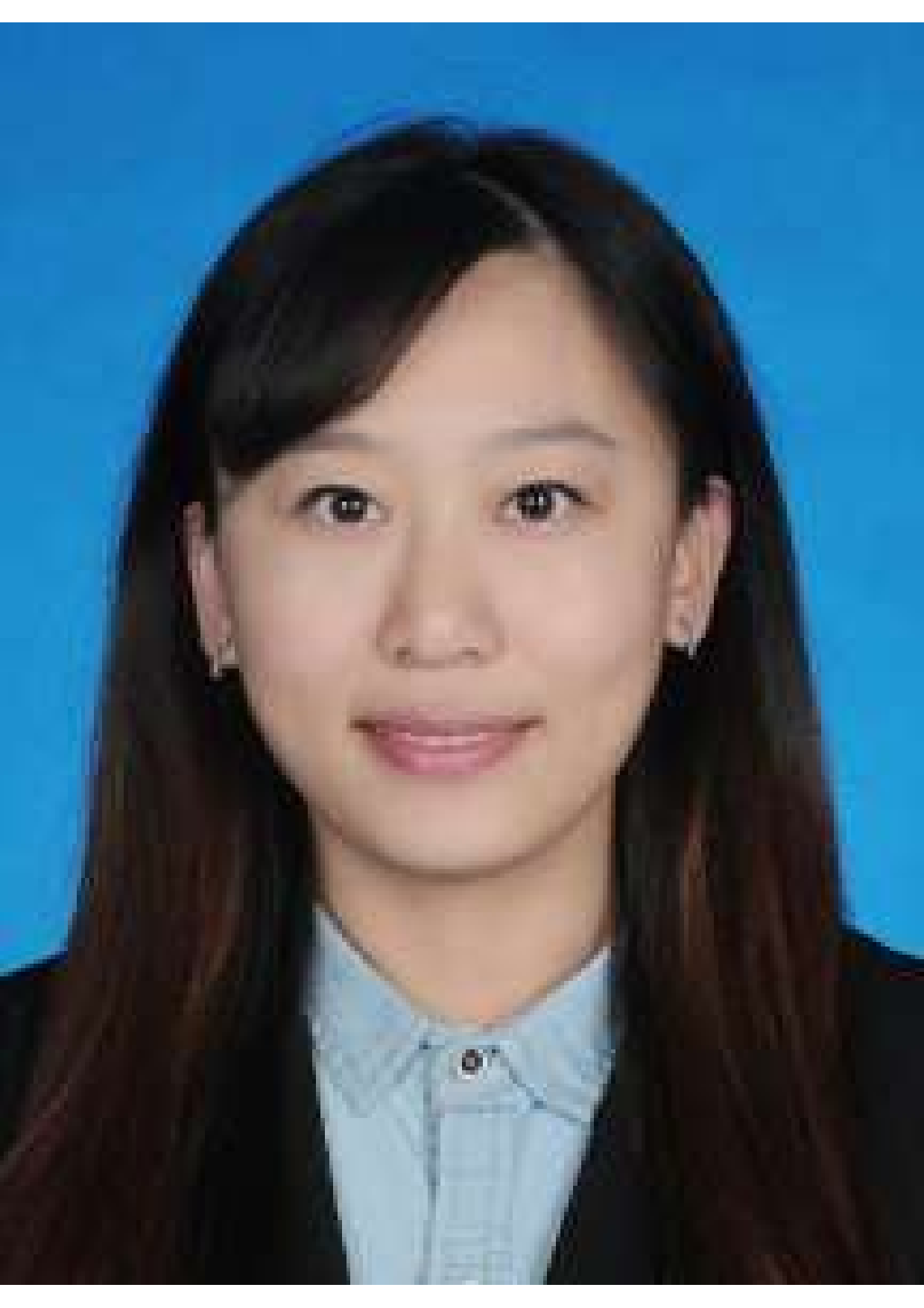}}]{Shuang Liang} received the B.S. degree in Communication Engineering from Dalian Polytechnic University, China in 2011, the M.S. degree in Software Engineering from Jilin University, China in 2017, and the Ph.D. degree in Computer Science from Jilin University, China in 2022. She is a post-doctoral in the School of Information Science and Technology, Northeast Normal University, and her research interests focus on wireless communication and UAV networks.
\end{IEEEbiography}

\begin{IEEEbiography}[{\includegraphics[width=1in,height=1.23in,clip,keepaspectratio]{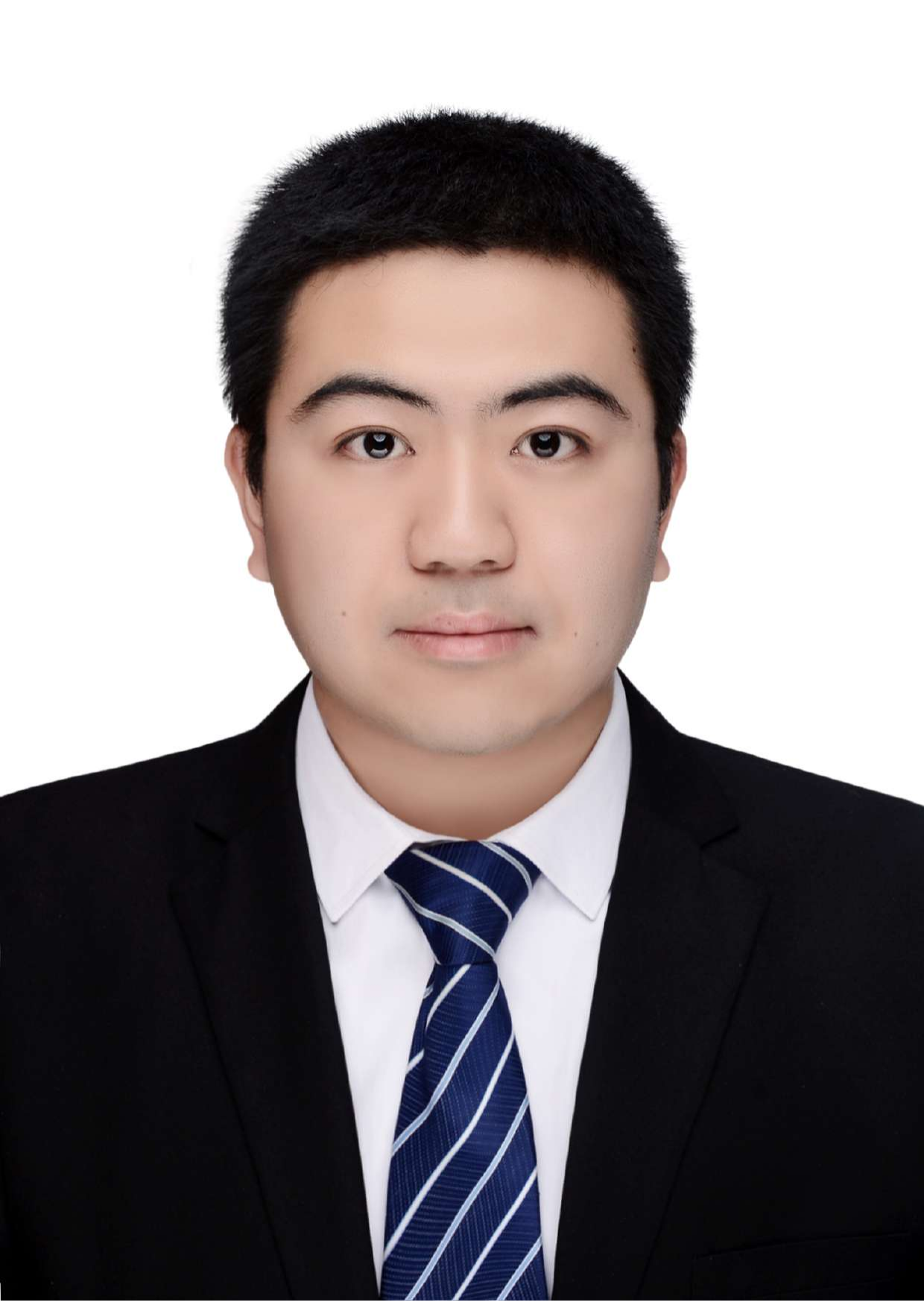}}]{Hongyang Pan} received the B.S. degree in process equipment and control engineering from Dalian
University of Technology in 2017. He is currently
pursuing the Ph.D. degree in computer science and
technology, Jilin University. His research interests
include the wireless communications and optimizations.
 
\end{IEEEbiography}

\begin{IEEEbiography}[{\includegraphics[width=1in,height=1.23in,clip,keepaspectratio]{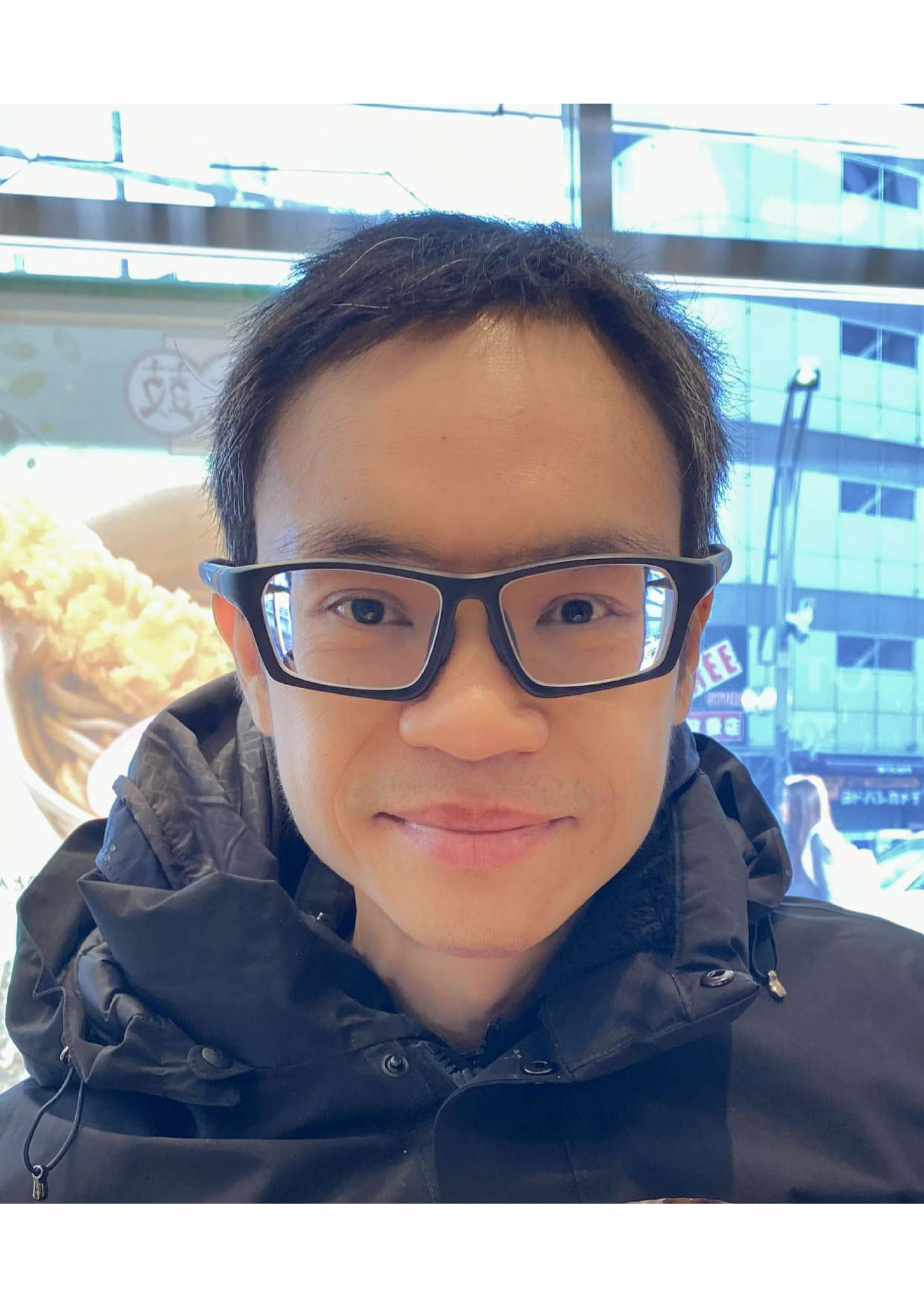}}]{Dusit Niyato} (Fellow, IEEE) received the B.Eng. degree from the King Mongkuts Institute of Technology Ladkrabang (KMITL), Thailand, in 1999, and the Ph.D. degree in electrical and computer engineering from the University of Manitoba, Canada, in 2008. He is currently a Professor with the School of Computer Science and Engineering, Nanyang Technological University, Singapore. His research interests include the Internet of Things (IoT), machine learning, and incentive mechanism design.
\end{IEEEbiography}

\begin{IEEEbiography}[{\includegraphics[width=1in,height=1.23in,clip,keepaspectratio]{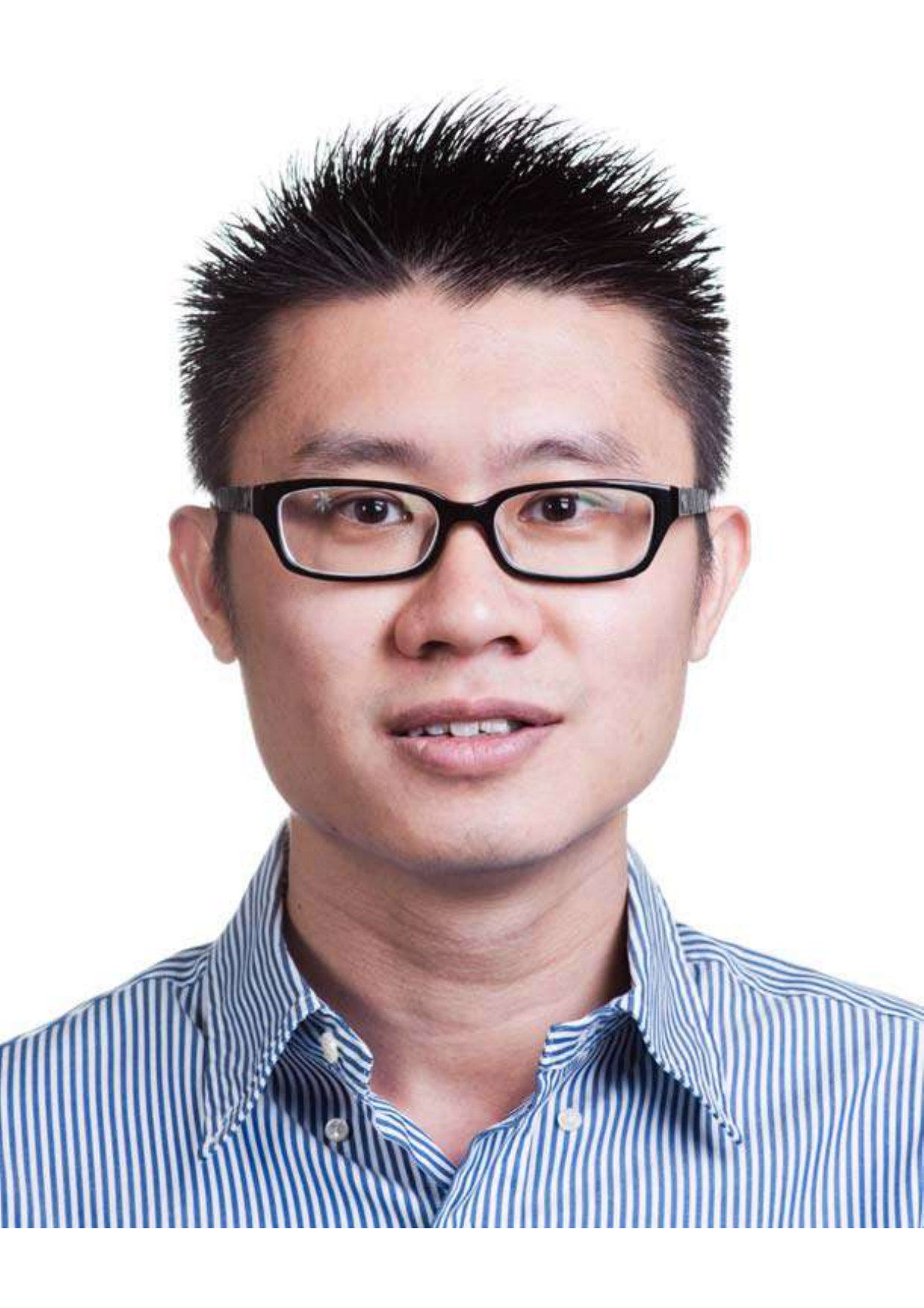}}]{Chau Yuen} (Fellow, IEEE)  received the B.Eng. and Ph.D. degrees in information and communication from Nanyang Technological University, Singapore, in 2000 and 2004, respectively. He was a Post-Doctoral Fellow at the Lucent Technologies Bell Laboratories, Murray Hill, in 2005. From 2006 to 2010, he was at the Institute for Infocomm Research (12R), Singapore. Since 2010, he has been with the Singapore University of Technology and Design.
\end{IEEEbiography}

\begin{IEEEbiography}[{\includegraphics[width=1in,height=1.23in,clip,keepaspectratio]{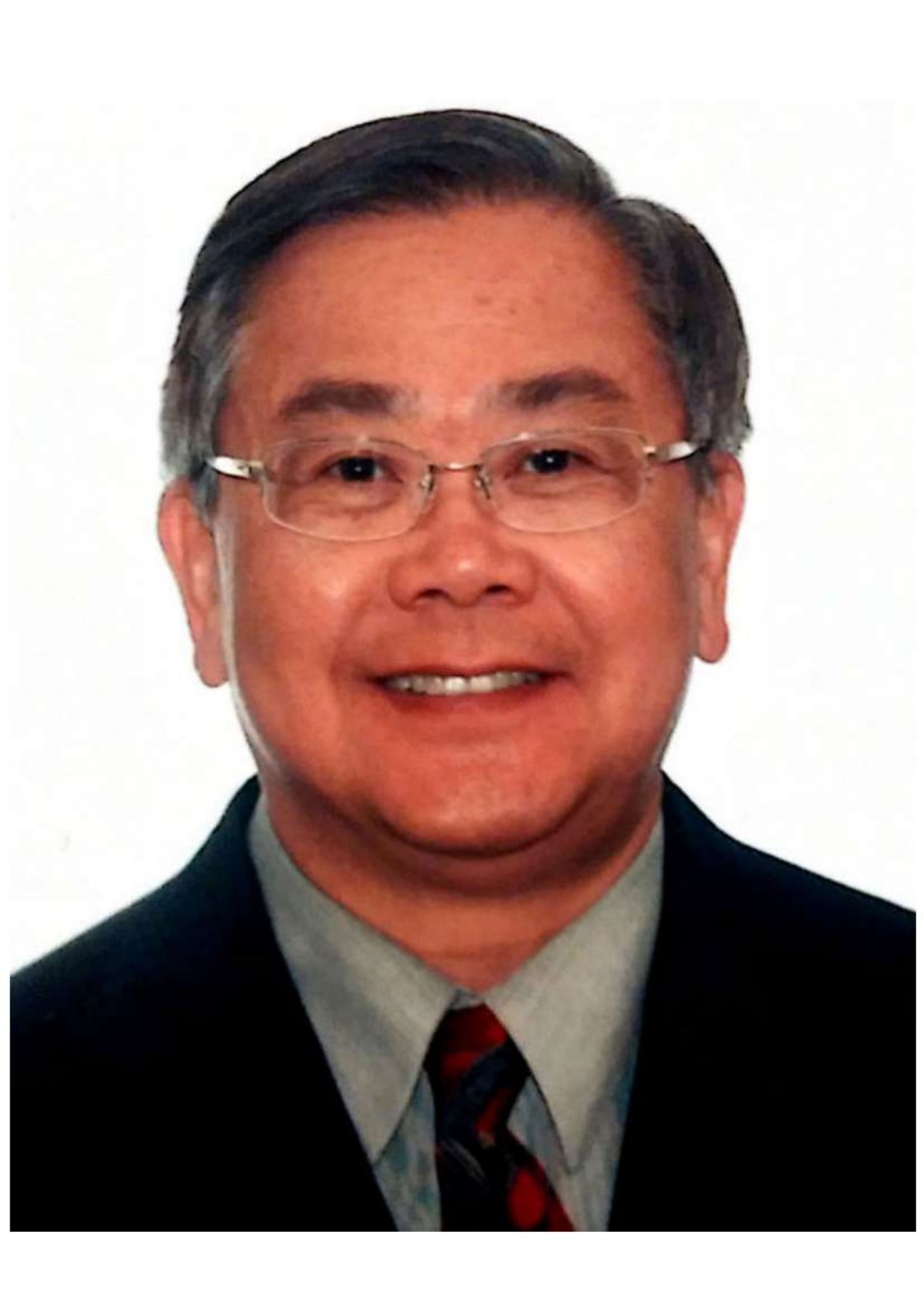}}]{Victor C. M. Leung} (Life Fellow, IEEE) is a Distinguished Professor of computer science and software engineering with Shenzhen University,
China. He is also an Emeritus Professor of electrial and computer engineering and the Director of the Laboratory for Wireless Networks and Mobile Systems at the University of British Columbia (UBC). His research is in the broad areas of wireless networks and mobile systems. 
\end{IEEEbiography}

\end{document}